\documentclass[
    onecolumn,aps,pra,
    superscriptaddress,
    nofootinbib,notitlepage
]{revtex4-2}

\usepackage{graphicx} % Required for inserting images
\usepackage[]{xcolor}

\usepackage{amssymb}
\usepackage{amsmath}
\usepackage{amsthm}
\usepackage{mathtools}

\usepackage{bm}
\usepackage{bbold} % \mathbb{1}
\usepackage{siunitx}
\usepackage{physics}
\usepackage[version=4]{mhchem}
\usepackage{enumitem}
\usepackage{tabularx}

\usepackage{qcircuit}

\usepackage{algpseudocode}
\usepackage[ruled,vlined]{algorithm2e}
\RestyleAlgo{ruled}

\makeatletter
\renewcommand{\p@subsection}{}
\makeatother

\definecolor{LEI-blue}{cmyk}{1,.75,0,.35} % Leiden blue
\definecolor{LEI-orange}{cmyk}{0,.62,.97,0} % Leiden Science orange
\definecolor{niceblue}{rgb}{.1, .25, .8}

\usepackage[colorlinks=true, linkcolor=niceblue, citecolor=gray]{hyperref}

\newcommand{\topic}[1]{}

\newtheorem{theorem}{Theorem}
\newtheorem{definition}[theorem]{Definition}
\newtheorem{lemma}[theorem]{Lemma}

\newtheorem{corollary}[theorem]{Corollary}

\SetAlgorithmName{Alg.}{Alg.}{List of Algorithms}

\DeclareMathOperator*{\argmax}{arg\,max}
\DeclareMathOperator*{\argmin}{arg\,min}
\DeclareMathOperator{\Var}{Var}

\DeclareMathOperator{\E}{\mathbb{E}}

\newcommand{\NN}{\mathbb{N}}
\newcommand{\RR}{\mathbb{R}}

\newcommand{\google}{%
    \affiliation{Google Quantum AI, Munich, Germany}
}

\newcommand{\aqa}{\affiliation{%
    $\langle a Q a^L \rangle$ \& 
    Instituut-Lorentz,
    Universiteit Leiden, the Netherlands}
}
\newcommand{\qusoft}{\affiliation{%
    QuSoft \& CWI, Amsterdam, the Netherlands}
}
\newcommand{\uva}{\affiliation{%
    QuSoft, HIMS \& IvI, University of Amsterdam, the Netherlands}
}

\begin{document}

\title{Accurate ground state energy estimation with noise and imperfect state preparation 
}
\author{Alicja Dutkiewicz}
\qusoft
\aqa

\author{Thomas E. O'Brien}
\google
\aqa
%\lorentz

\author{Stefano Polla}
\aqa
\uva
%\lorentz

\begin{abstract}
    We introduce a classical estimator for the post-processing of quantum phase estimation (QPE) data when a single target phase is isolated within a known interval, as is typical of ground state energy estimation of gapped systems.
    Our estimator filters the QPE signal within this promise region and recovers the phase through a moment-projection routine, which is robust to both external spurious phases and experimental noise.
    In the noiseless case this achieves an exponential suppression of bias with respect to a naive mean estimator.
    In the presence of global depolarizing noise the bias is exponentially small in the circuit depth $t$, and the variance is $O(t^{-2}F^{-2})$ for circuit fidelity~$F$.
    This improves by a factor of $t^2$ over a naive shifted-and-rescaled-mean approach.
    To mitigate realistic circuit-level noise, we combine our method with the explicit unbiasing scheme described in Ref.~\cite{dutkiewiczError2025}.
    This yields an overhead interpolating between the $F^{-4}$ scaling typical of explicitly unbiased error mitigation and a reduced $F^{-2}$ scaling when the noise samples fall outside the promise interval.
    We validate our estimators on a small-scale simulation of the Ising model, observing better-than-expected performance for a global depolarizing noise approximation.
    This robustness to both multiple eigenvalues and realistic noise makes limited-depth phase estimation practical for early fault tolerant quantum experiments.
\end{abstract}

\maketitle 

% custom code to remove appendix subsections from TOC
\let\oldaddcontentsline\addcontentsline
\newcommand{\stoptocentries}{\renewcommand{\addcontentsline}[3]{}}
\newcommand{\starttocentries}{\let\addcontentsline\oldaddcontentsline}
\tableofcontents

\newpage
\section{Introduction}

\topic{opening}
Quantum hardware is quickly approaching the era of early fault tolerance~\cite{acharyaQuantum2024, acharyaSuppressing2023,bluvsteinLogical2024,paetznick2024demonstration}, where error correction is possible, but not perfect.
Experiments must continue to be designed to tolerate imperfections, and classical post-processing must be exploited to make efficient use of limited gate and qubit counts.
This also requires continued development of reliable error mitigation techniques to overcome residual bias and deliver accurate results.
Initial experimental demonstrations of quantum error correction~\cite{acharyaQuantum2024, acharyaSuppressing2023,paetznick2024demonstration} and logical gates~\cite{bluvsteinLogical2024} have already made progress into early fault-tolerance, alongside early theoretical works outlining what is possible within this era~\cite{campbellEarly2022,katabarwaEarly2023,zhangComputing2022,kshirsagarProving2024,wangState2022,wang2025efficient,nelsonAssessment2024,wanRandomized2022,lin2022heisenberg,dongGround2022,ding2023even,ding2023robust,bultriniBattle2023,liangModeling2024,akahoshiPartially2024,toshioPractical2024,akahoshiCompilation2024}.

\topic{QPE methods} 
A natural target for early-fault-tolerant quantum computing is quantum phase estimation (QPE).
This is a foundational computational task that underpins many applications in quantum simulation~\cite{babbushEncoding2018, reiherElucidating2017, leeEven2021, aspuru2005simulated, goings2022reliably, obrien2019calculating}, and more broadly in quantum information processing~\cite{Harrow09Quantum, Shor95Polynomial}.
QPE targets estimating the eigenvalue $e^{i \phi_0}$ of a unitary $U$\footnote{
    In practice, most applications target a specific eigenvalue $E_0$ of a Hamiltonian $H$ and implement unitaries e.g.~$U=e^{-iHt}$ (Trotter-based) or $U=e^{i\arccos(H/\lambda)}$ (Qubitization-based) with $t^{-1}, \lambda \geq \lVert H \rVert$.
} given access to a circuit implementing $U$ (or a controlled version thereof) and an initial state $\ket{\psi}$ that overlaps with the eigenstate $\ket{\phi_0}$.
To perform phase estimation, one can use the Hadamard test~\cite{lin2022heisenberg, dutkiewicz2022heisenberg, ding2023even, wangQuantum2023} or other methods~\cite{clinton2024quantum,russo2021evaluating} to estimate $\langle U^k\rangle$, and process this classically at multiple points $k$ to infer the spectrum of $U$.
Alternatively, one can coherently accumulate phase on a multi-qubit quantum register, and perform the quantum Fourier transform, which samples from a distribution that is peaked around the eigenphases of~$U$~\cite{aspuru2005simulated,nielsen2001quantum,berry2017improved}.
It is possible to interpolate between these two methods~\cite{dutkiewiczError2025, najafiOptimum2023, rendon2023low}, which becomes relevant in the early-fault-tolerant setting where arbitrarily long circuit depths cannot be afforded.
A third method uses quantum signal processing (QSP) circuits to implement block-encodings of some function of the Hamiltonian $f(H - \mathbb{1}x)$, which allows sampling from a distribution similarly peaked around eigenphases of $U$~\cite{dongGround2022, wang2025efficient, geFaster2019, Lin20Preparation, martyn2021grand}.

\topic{Status of QPE noise processing}
The classical post-processing of any of the above methods is a crucial piece of an early-FT QPE algorithm.
One must compensate here for both the presence of experimental noise and imperfect state preparation ($a_0:=|\braket{\phi_0}{\psi}|^2\ll 1$), which can otherwise bias the estimation of $\phi_0$.
Significant recent work has gone into optimizing Hadamard-test-based QPE in the presence of imperfect state preparation, using matrix pencil~\cite{obrien2019quantum,dutkiewicz2022heisenberg}, cumulative distribution function~\cite{lin2022heisenberg, wanRandomized2022}, and maximum likelihood methods~\cite{ding2024quantum, ding2023even, ding2023simultaneous}.
These methods have been shown in some cases to be robust to small amounts of noise~\cite{kimmel2015robust, guNoiseresilient2022, ding2023robust}, and can be error mitigated using standard techniques~\cite{caiQuantum2023} due to their intermediate estimation of expectation values $\langle\psi|e^{iHt}|\psi\rangle$.
The same is not true for QFT-based or QSP-based QPE algorithms, as these do not work with expectation values.
In previous work~\cite{dutkiewiczError2025}, we demonstrated that QFT-based QPE algorithms could be adapted to handle global depolarizing and circuit-level noise, but under the assumption of access to a perfect eigenstate.
(This used a maximum-likelihood framework that is immediately extensible to QSP-based QPE methods.)
Separately, Ref.~\cite{rendon2023low} constructed a bias-free estimator for QFT-based QPE with imperfect initial states, but in the absence of noise.
This leaves a gap in the literature to combine both sources of imperfection.

\subsection{Summary of key results}

In this work, we address the classical task of learning an eigenphase $\phi_0$ of a unitary from phase estimation data.
We focus on methods that sample from a distribution $p(x)$ sharply peaked around $x=\phi_0$.
In the presence of multiple eigenstates and experimental noise, $p(x)$ becomes distorted and difficult to model; our objective is to robustly infer $\phi_0$ from samples of this distorted distribution.
We focus on the case where experimental noise is large enough that one cannot afford the circuit depths required to estimate phases at the Heisenberg limit, and must instead run shorter QPE circuits and repeatedly sample from this distribution $p(x)$.

The main contribution of this work is to optimize the classical post-processing of samples from $p(x)$ under the assumption that $\phi_0$ is isolated within a region $\mathcal{D}$.
This assumption is necessary to bound the contributions from spurious eigenvalues $\phi_{j\neq 0}$ to the sampled distribution $p(x)$.
\begin{definition}[Def.~\ref{def:classical-subroutine}, informal]
   Fix a unitary $U$, and assume that there exists a known region $\mathcal{D}$ containing a single target eigenphase $\phi_0$. 
   Let $p(x)$ be the output distribution from a quantum computation implementing a phase estimation routine for $U$.
   The classical task of phase estimation is: given $M$ samples $\{x_j\}$ from $p(x)$, construct an estimator $\tilde\phi(=\tilde{\phi}[\{x_j\}])$ of $\phi_0$.
   The performance of this estimator is measured by the bias
    \begin{equation}
        b=\E[\tilde\phi | x \sim p(x)]-\phi_0
    \end{equation}
    and variance
    \begin{equation}
        \epsilon^2 = \E[(\tilde\phi-\phi_0-b)^2 | x \sim p(x)].
    \end{equation}
\end{definition}
We explain how samples from $p(x)$ can be efficiently generated by a quantum computer in Sec.~\ref{sec:qftqpe} and Sec.~\ref{sec:qspqpe}.

The above definition shifts the problem of near-term phase estimation to the slightly more abstract problem of extracting features of data drawn from a complex distribution.
In Sec.~\ref{sec:m_projection} we solve this task by discarding data that lies outside $\mathcal{D}$, to yield a filtered distribution $P(x)$ which we fit to a parametrized model $Q(x|\phi)$ that ignores all eigenphases other than $\phi_0$.
This fitting procedure relies on~\emph{moment projection}~\cite{amariMethods2000,murphyMachine2012,nielsenWHAT2018,tuananhleReverse2017}, so we call the resulting estimator the ``Filtered Moment projection Phase Estimator'' (FMPE).
By assuming $P(x\in\mathcal{D})\approx Q(x|\phi_0)$ we can strictly bound the resulting phase estimation error:

\begin{lemma}[Lemma~\ref{lem:m-projection-expansion}, informal]
Assume $U$ is a unitary with single eigenphase $\phi_0$ in a known interval $\mathcal{D}$. Let $P(x)$ be an output distribution from a phase estimation circuit of $U$ confined to the interval $\mathcal{D}$, and let $Q(x|\phi)$ be a model distribution parametrized by $\phi\in\mathcal{D_\phi}\subset\RR$.
Given $M$ samples $\{x_j\}$ from $P(x)$, define the moment projection estimator as the choice of $\phi$ that maximizes the likelihood $l(\phi|\{x_j\})=\sum_{j\,;\,x_j\in\mathcal{D}}Q(x_j|\phi)$.
To lowest order in the model error $h(x)=P(x)-Q(x|\phi_0)$, in the $M\rightarrow\infty$ limit this has variance $\epsilon^2\leq\mathcal{I}_0^{-1}M^{-1} + O(\|h\|_1)$, and bias $|b|\leq\|h\|_1 \mathcal{I}_0^{-1} S + O(\|h\|_1^2)$, where $S=\max_{x\in\mathcal{D}}[\partial_{\phi}\log Q(x|\phi)]_{\phi=\phi_0}$ is the maximum of the score, and $\mathcal{I}_0$ is the Fisher information of $Q(x|\phi)$ at $\phi=\phi_0$.
\end{lemma}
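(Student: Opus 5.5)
The plan is the standard M-estimator (misspecified maximum-likelihood) expansion, treating the model error $h$ as a small perturbation. We read the likelihood as the log-likelihood $l(\phi)=\sum_{j:x_j\in\mathcal{D}}\log Q(x_j|\phi)$, since moment projection minimizes $\mathrm{KL}(P\|Q_\phi)$. In the $M\to\infty$ limit the estimator converges to the population maximizer
\begin{equation}
\phi^*=\argmax_\phi \int_{\mathcal{D}} P(x)\log Q(x|\phi)\,dx .
\end{equation}
The bias is then $b=\phi^*-\phi_0$, up to $O(M^{-1})$ corrections, and the variance follows from the usual sandwich formula.

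\textbf{Step 1 (bias).} The stationarity condition for $\phi^*$ is $\int_{\mathcal{D}} P(x)\,s(x|\phi^*)\,dx=0$, with score $s(x|\phi)=\partial_\phi\log Q(x|\phi)$. Substitute $P=Q(\cdot|\phi_0)+h$ and Taylor expand around $\phi_0$.
\begin{itemize}
\item The zeroth-order term $\int Q(x|\phi_0)\,s(x|\phi_0)\,dx=\partial_\phi\int Q(x|\phi)\,dx\,\big|_{\phi_0}$ vanishes, since $Q$ is normalized on $\mathcal{D}$ for every $\phi$.
\item The derivative term gives $\int Q\,\partial_\phi s=-\mathcal{I}_0$.
\end{itemize}
This leaves
\begin{equation}
\int h(x)\,s(x|\phi_0)\,dx-\mathcal{I}_0(\phi^*-\phi_0)+O\big(\|h\|_1|\phi^*-\phi_0|+|\phi^*-\phi_0|^2\big)=0 .
\end{equation}
Solving, $\phi^*-\phi_0=\mathcal{I}_0^{-1}\int h\,s\,dx+O(\|h\|_1^2)$. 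Hölder's inequality gives $|\int h\,s|\le\|h\|_1\max_{x\in\mathcal{D}}|s(x|\phi_0)|$, which yields $|b|\le\|h\|_1\mathcal{I}_0^{-1}S$. Here $S$ should be read as the maximum of the absolute score.

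\textbf{Step 2 (variance).} For the finite-$M$ fluctuations we use the asymptotic normality of M-estimators. The variance is $A^{-1}BA^{-1}/M$, where
\begin{equation}
A=-\E_P[\partial_\phi s(x|\phi^*)],\qquad B=\Var_P[s(x|\phi^*)] .
\end{equation}
When $h=0$ and $\phi^*=\phi_0$, we have $A=B=\mathcal{I}_0$, which gives $\mathcal{I}_0^{-1}M^{-1}$. For $h\neq0$, both $A$ and $B$ differ from $\mathcal{I}_0$ by terms of order $\|h\|_1$ times bounds on $s$, $\partial_\phi s$ and $s^2$ over $\mathcal{D}$, plus the shift $\phi^*-\phi_0=O(\|h\|_1)$. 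Hence $\epsilon^2\le\mathcal{I}_0^{-1}M^{-1}+O(\|h\|_1)$. Strictly the correction carries a factor $M^{-1}$; we absorb it into the stated form.

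\textbf{Main obstacle.} The hard part is justifying the expansion rather than the algebra. We need $\phi^*$ to be unique and near $\phi_0$, i.e.\ consistency, and we need uniform bounds on $s$, $\partial_\phi s$ and $\partial_\phi^2 s$ over $\mathcal{D}\times\mathcal{D}_\phi$. These require $Q(x|\phi)$ bounded away from zero on $\mathcal{D}$, which we would assume as regularity conditions or verify for the concrete QPE kernels. With uniform bounds in place, the implicit function theorem applied to the stationarity equation controls the remainder.

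A secondary subtlety is that $P$ restricted to $\mathcal{D}$ need not integrate to one, so $h$ also absorbs the normalization deficit. This is harmless, because the argument above only uses the normalization of $Q$.
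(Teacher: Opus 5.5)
Your proposal is correct and is essentially the paper's argument. You expand the population stationarity condition $\int_{\mathcal{D}}(Q(x|\phi_0)+h(x))\,\partial_\phi\log Q(x|\phi_*)\,dx=0$ around $\phi_0$, kill the zeroth-order term by normalization of $Q$, identify $-\mathcal{I}_0$ in the derivative term, bound $\int h\,s$ by $\|h\|_1\max|s|$, and read off the variance from the asymptotic (sandwich) formula at $\phi_0$ up to $O(\|h\|_1)$ corrections; your $B=\Var_P[s]$ equals the paper's second moment because $\E_P[s(x|\phi_*)]=0$. Your side remarks on the missing logarithm in the informal likelihood, the absolute value in $S$, and the $M^{-1}$ factor on the variance correction all agree with the formal Lemma~\ref{lem:m-projection-expansion}.
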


The moment projection estimator reliably isolates single phases in the presence of spurious eigenvalues outside the promise interval, but only in the setting where the noise can be explicitely included in the model $Q(x|\phi)$.
To address realistic experimental noise, we adapt the unbiasing procedure of Ref.~\cite{dutkiewiczError2025} and introduce the ``Noise-Unbiased" version of FMPE: NU-FMPE.
This approach samples from a quasiprobability distribution --- akin to Probabilistic Error Cancellation (PEC) --- but performs likelihood maximization instead of expectation value estimation.
However, since the quasiprobability sampling distribution generally differs from the model distribution, samples unlikely according to the model can cause large fluctuations in the log-likelihood and a high estimator variance.
Following Ref.~\cite{dutkiewiczError2025}, we solve this by regularising the likelihood function:
\begin{lemma}[Lemma~\ref{lem:nme-expansion}, informal]
Assume $U$ is a unitary with single eigenphase $\phi_0$ in a known interval $\mathcal{D}$, and fix a regularization constant $c\geq 0$.
Let $P(x)$ be the output distribution from a phase estimation circuit of $U$ confined to the interval $\mathcal{D}$, let $Q(x|\phi)$ be a model distribution parametrized by $\phi\in\mathcal{D}$, and let $ Q_c(x_j|\phi)=Q(x|\phi)+c$ be the regularized (non-normalized) model distribution.
Assume the ability to write down a quasiprobability distribution
\begin{equation}
    \sum_{a=0}^{r-1}\alpha_a P_a(x) = P(x),\label{eq:qp_dist}
\end{equation}
and sample from the distributions $P_a(x)$ using a noisy quantum device.
Given $M$ samples $\{x_j\}$ from the $P_a(x)$ distributed with probability $a\sim|\alpha_a|/\|\alpha\|_1$, define the Noise-Unbiased Moment projection Phase Estimator (NME) of~$\phi_0$ as the value that maximises the quasi-likelihood
\begin{equation}
    \ell(\phi|\{x_j,a_j\}) =
\frac{\|\alpha\|_1}{M}\sum_{j=1}^M
\mathrm{sgn}(\alpha_{a_j})\log [ Q_c(x_j|\phi)]+c\int_{\mathcal{D}}\dd  x\,\log[ Q_c(x|\phi)].
\end{equation}
To lowest order in the error $h(x)=P(x)-Q(x|\phi_0)$, the NU-FMPE has bias $\|h\|_1 S_c\,\mathcal{I}_c^{-1}+O(\|h\|_1^2)$ and variance $\|\alpha\|_1^2\epsilon^2=S_c^2\,\mathcal{I}_c^{-2}+O(\|h\|_1)$, where $S_c=\max_{x\in\mathcal{D}}[\partial_{\phi}\log  Q_c(x|\phi)]_{\phi=\phi_0}$ is the maximum of the score, and $\mathcal{I}_c$ is the Fisher information of $ Q_c(x_j|\phi)$ at $\phi=\phi_0$.
\end{lemma}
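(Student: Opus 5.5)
The plan is a standard M-estimator (Z-estimator) expansion of the quasi-likelihood around $\phi_0$, treating the sign-weighted quasiprobability average as an unbiased Monte Carlo estimate of an integral against $P(x)$.

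\textbf{Step 1: the population objective.} Write the quasi-likelihood as $\ell(\phi)=\frac{\|\alpha\|_1}{M}\sum_j Y_j(\phi)+c\int_{\mathcal{D}}\log Q_c$, with $Y_j(\phi)=\mathrm{sgn}(\alpha_{a_j})\log Q_c(x_j|\phi)$. Since $a\sim|\alpha_a|/\|\alpha\|_1$ and $x\sim P_a$, Eq.~\eqref{eq:qp_dist} gives $\E[\|\alpha\|_1 Y_j(\phi)]=\int_{\mathcal{D}} P(x)\log Q_c(x|\phi)\,\dd x$. Hence
\begin{equation}
L(\phi):=\E[\ell(\phi)]=\int_{\mathcal{D}}\bigl(P(x)+c\bigr)\log Q_c(x|\phi)\,\dd x .
\end{equation}
When $h=0$ we have $P+c=Q_c(\cdot|\phi_0)$. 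Then $L(\phi)$ equals minus a cross-entropy between the unnormalized measures $Q_c(\cdot|\phi_0)$ and $Q_c(\cdot|\phi)$. Its derivative at $\phi_0$ is $\int\partial_\phi Q_c=\int\partial_\phi Q=0$, because $c$ does not depend on $\phi$ and $Q$ is normalized on $\mathcal{D}$. So $\phi_0$ is a stationary point of the unperturbed objective, and it is the maximizer under the identifiability implicit in the statement. I would define $\mathcal{I}_c=\int_{\mathcal{D}} Q_c\,(\partial_\phi\log Q_c)^2|_{\phi_0}$. This equals $-L''(\phi_0)$ at $h=0$, since the extra term $\int \partial_\phi^2 Q_c=0$.

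\textbf{Step 2: bias.} Let $\phi^*$ solve $L'(\phi^*)=0$ for general $h$. A first-order Taylor expansion gives $0=L'_{h}(\phi_0)+L''(\phi_0)(\phi^*-\phi_0)+O((\phi^*-\phi_0)^2)$, where $L'_h(\phi_0)=\int_{\mathcal{D}} h(x)\,\partial_\phi\log Q_c(x|\phi_0)\,\dd x$. Bounding this by Hölder gives $|L'_h(\phi_0)|\le\|h\|_1 S_c$. Here $S_c$ should be read as the sup of the absolute score. The $h$-correction to $L''$ contributes only at $O(\|h\|_1^2)$. This yields $|b|\le \|h\|_1 S_c\,\mathcal{I}_c^{-1}+O(\|h\|_1^2)$. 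Finite-$M$ corrections to the bias are $O(1/M)$ and vanish in the stated limit.

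\textbf{Step 3: variance.} Expand the empirical stationarity condition around $\phi^*$. This gives $\tilde\phi-\phi^*\approx \ell'(\phi^*)/\mathcal{I}_c$, with $\ell'(\phi^*)=\frac{\|\alpha\|_1}{M}\sum_j\mathrm{sgn}(\alpha_{a_j})\,\partial_\phi\log Q_c(x_j|\phi^*)$ plus a deterministic term. The summands are i.i.d. and bounded in absolute value by $S_c$ (the sign is irrelevant). Hence $\Var[\ell'(\phi^*)]\le \|\alpha\|_1^2 S_c^2/M$. In the same convention as Step 2 (per-sample, $M$ absorbed), this gives $\|\alpha\|_1^{-2}$-normalized variance $\le S_c^2\mathcal{I}_c^{-2}$. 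The $O(\|h\|_1)$ term accounts for replacing $\phi^*$ by $\phi_0$ and the Hessian by $\mathcal{I}_c$. Note that we cannot use the Fisher-information cancellation of the noiseless lemma. The quasi-samples are not drawn from $Q_c$, so the second moment of the score can only be bounded by its sup. This is exactly why the variance scales as $\mathcal{I}_c^{-2}S_c^2$ rather than $\mathcal{I}_c^{-1}$, and why regularization ($c>0$, which keeps $S_c$ finite where $Q$ is small) is needed.

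\textbf{Main obstacle.} The delicate step is justifying the asymptotic linearization: consistency of the argmax of a non-concave, sign-indefinite quasi-likelihood. I would handle it by a uniform law of large numbers over compact $\mathcal{D}_\phi$. The summands are uniformly bounded because $\log Q_c\ge\log c$ and $Q$ is bounded, which is precisely where $c>0$ is used. I would combine this with the assumption that $\phi_0$ is the unique maximizer of $L$ at $h=0$, and then invoke the standard argmax continuous-mapping argument. Differentiability of $Q$ in $\phi$, with bounded second derivatives on $\mathcal{D}$, is assumed throughout to control the Taylor remainders.
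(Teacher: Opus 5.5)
Your proposal is correct and follows essentially the same route as the paper. The bias comes from expanding the population stationarity condition $\int_{\mathcal{D}}(P+c)\,\partial_\phi\log Q_c=0$ about $\phi_0$, using $\int_{\mathcal{D}}\partial_\phi Q_c=0$ and a Hessian equal to $-\mathcal{I}_c+O(\|h\|_1)$; the variance comes from the Taylor/CLT/Slutsky linearization, with the second moment of the sign-weighted score under the marginal bounded by $S_c^2$. Your uniform-LLN consistency argument (which needs $c>0$, as the formal convergence lemma assumes), your use of the absolute score in $S_c$, and your reading of the conclusions as the inequalities $|b|\le\|h\|_1S_c\mathcal{I}_c^{-1}+O(\|h\|_1^2)$ and $\epsilon_*^2\le\|\alpha\|_1^2S_c^2\mathcal{I}_c^{-2}+O(\|h\|_1)$ tighten points that the paper only asserts or states loosely.
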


The results thus far hold for any distribution generated by a QPE circuit.
To obtain analytic resource requirements, we explicitly calculate the above bounds for a model with a Gaussian model distribution $Q(x|\phi)\propto e^{-(x-\phi)^2/2\sigma^2}$, and a linear combination of Gaussians for the true distribution $P(x)$.
We show how this distribution can be generated using phase estimation in Lemma~\ref{lem:gaussian_kernel_synthesis} following Ref.~\cite{rendon2024improved}.
We then make connection to the wider phase estimation literature by replacing our promise interval with an initial guess of $\phi_0$ and a promise of a gap $\Delta$ to other eigenvalues, and obtain the following result for the number of calls to the (controlled) unitary $U$.
\begin{theorem}[Thm.~\ref{thm:fmpe-cost-nonoise}, informal]\label{thm:fmpe-cost-nonoise-informal}
    Let $U$ be a unitary with spectral gap $\Delta$ around a target state $\phi_0$ ($\forall_{j>0} |\phi_j - \phi_0| > \Delta$).
    Assume oracle access to a controlled version of $U$, and an initial state $\ket{\psi}$ such that $|\braket{\phi_0}{\psi}|^2 > \eta$.
    Further assume an initial estimate $\phi_{guess}$ of $\phi_0$ such that $|\phi_0-\phi_{guess}|<\Delta/3$.
    Then, one can produce an estimate $\tilde\phi$ of $\phi_0$ with RMS error $\epsilon$ using $M = O(\eta^{-1}t^{-2}\epsilon^{-2})$ samples of a phase estimation circuit, where each circuit requires $t= {\Omega}(\Delta^{-1}\log^{1/2}(\Delta\epsilon^{-1}\eta^{-1}))$ calls to the unitary $U$, and the total number of calls $T = O(\eta^{-1}t^{-1}\epsilon^{-2})$.
\end{theorem}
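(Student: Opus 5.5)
We instantiate the moment-projection lemma (Lemma~\ref{lem:m-projection-expansion}) with a Gaussian model and bound each term in it explicitly. The phase estimation circuit is taken to be the Gaussian-kernel construction of Lemma~\ref{lem:gaussian_kernel_synthesis}. With depth $t$ calls to controlled-$U$, it samples from
\[
p(x)=\sum_j a_j\,G_\sigma(x-\phi_j)+r(x),
\]
where $G_\sigma$ is a normalized Gaussian of width $\sigma\propto t^{-1}$, $a_j=|\braket{\phi_j}{\psi}|^2$, and $r$ is a truncation remainder whose $L_1$ norm is exponentially small in $t^2\sigma^2$. We choose the promise region $\mathcal{D}=[\phi_{guess}-\Delta/3,\ \phi_{guess}+\Delta/3]$, which is a window of half-width $\Delta/3$. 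Since $|\phi_0-\phi_{guess}|<\Delta/3$, the target lies in $\mathcal{D}$, and every other $\phi_j$ is at distance at least $\Delta/3$ from it. Conditioning on $x\in\mathcal{D}$ gives $P(x)$. For the model we take $Q(x|\phi)$ to be the Gaussian $G_\sigma(x-\phi)$ restricted to $\mathcal{D}$ and renormalized.

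The first step is to bound the model error $h=P-Q(\cdot|\phi_0)$ in $L_1$. We split it into three terms:
\begin{itemize}
\item[(i)] spurious eigenvalues, of total weight at most $\sum_{j>0}a_j\int_{\mathcal{D}}G_\sigma(x-\phi_j)\,\dd x$, which is of order $e^{-c\Delta^2/\sigma^2}$ because every $\phi_j$ lies at least $\Delta/3$ outside the window's near edge;
\item[(ii)] the renormalization error, because the probability mass of $P$ inside $\mathcal{D}$ is only $a_0+\text{(tiny)}$;
\item[(iii)] the remainder $r$.
\end{itemize}
After dividing by the acceptance probability, which is at least about $\eta$, we obtain $\|h\|_1\lesssim \eta^{-1}e^{-c'\Delta^2 t^2}$.

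One subtlety concerns $\phi_0$ sitting close to the edge of $\mathcal{D}$, since it can be up to $\Delta/3$ from $\phi_{guess}$. I would handle this by keeping the truncation explicitly in $Q$, so that truncation of the target's own Gaussian is modeled exactly rather than counted as error.

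The second step computes the Fisher information and the score of the truncated Gaussian. For a full Gaussian, $\mathcal{I}_0=\sigma^{-2}\sim t^2$. For the truncated version I would argue that $\mathcal{I}_0=\Omega(\sigma^{-2})$ provided the window half-width $\Delta/3\gg\sigma$, which is guaranteed by $t=\Omega(\Delta^{-1}\log^{1/2}(\cdot))$. The score is bounded by $|x-\phi|/\sigma^2$, so $S\le \Delta/\sigma^2$.

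Substituting into the lemma gives a bias of at most
\[
\|h\|_1 S\,\mathcal{I}_0^{-1}\lesssim \eta^{-1}\Delta\, e^{-c'\Delta^2t^2}.
\]
Requiring this to be at most $\epsilon/2$ gives $t\gtrsim\Delta^{-1}\log^{1/2}(\Delta\epsilon^{-1}\eta^{-1})$, which is the stated depth.

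The variance is $\mathcal{I}_0^{-1}M_{\rm acc}^{-1}$, where $M_{\rm acc}\approx \eta M$ is the number of samples accepted into $\mathcal{D}$. A Chernoff bound keeps $M_{\rm acc}$ close to its mean. Setting the variance at most $\epsilon^2/2$ gives $M=O(\eta^{-1}t^{-2}\epsilon^{-2})$. The total number of calls is then $T=Mt=O(\eta^{-1}t^{-1}\epsilon^{-2})$.

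The main obstacle I expect is justifying the asymptotic statement of Lemma~\ref{lem:m-projection-expansion} at finite $M$ and with the $O(\|h\|_1)$ variance correction. The lemma's expansion is first-order in $\|h\|_1$ and asymptotic in $M$, so I must check two things. First, the correction term has to be dominated by the leading $\mathcal{I}_0^{-1}M^{-1}$ term; since $\|h\|_1$ is exponentially small, this should follow by the same choice of $t$ up to constants. Second, the maximum-likelihood estimator must concentrate, for example via standard consistency of the maximum-likelihood estimator for a one-parameter location family on a compact $\mathcal{D}$. I would state the result as holding for $M$ above a threshold polylogarithmic in the parameters, and absorb the constants into the $O(\cdot)$.
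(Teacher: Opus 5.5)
Your proof is correct and has the same skeleton as the paper's. Both use Gaussian kernel synthesis with $\sigma=\Theta(t^{-1})$, filter to a window around $\phi_{guess}$, and apply Lemma~\ref{lem:m-projection-expansion} with three bounds: $\|h\|_1$ (the Gaussian tail of the spurious phases divided by an acceptance probability $\gtrsim\eta$), the score ($\lesssim|\mathcal{D}|/\sigma^2$), and $\mathcal{I}_0=\Omega(\sigma^{-2})$. The bias then fixes $t$, and the variance together with the $\sim\eta^{-1}$ rejection overhead fixes $M$ and $T$.

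The genuine difference is the promise interval. The paper takes $|\mathcal{D}|=\Delta$, which leaves inner and outer buffers of $\Delta/6$, and then cites Lemma~\ref{lem:gaussian-no-noise}. The inner buffer is what makes $|\mathcal{N}'|/\mathcal{N}$ and the boundary term $\mathcal{N}'^2/\mathcal{N}$ in its Fisher-information bound exponentially small. Your window of width $2\Delta/3$ doubles the outer buffer to $\Delta/3$, which improves the exponent in $e^{-d^2/2\sigma^2}$ by a factor of $4$. But it has zero inner buffer, so you cannot invoke Lemma~\ref{lem:gaussian-no-noise}. You must redo the score and Fisher bounds with $\phi_0$ possibly at an edge, where $|G_\sigma'(\phi_0)/G_\sigma(\phi_0)|\approx\sqrt{2/\pi}\,\sigma^{-1}$ is not small.

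Your route still works, but not for the reason you give: a half-width $\gg\sigma$ says nothing about an edge-located $\phi_0$. The clean argument is that the truncated Gaussian is an exponential family in $\phi/\sigma^2$. Normalization then gives $\partial_\phi\log Q(x|\phi)|_{\phi_0}=(x-\mathbb{E}_{Q}[x])/\sigma^2$, hence
\[
\mathcal{I}_0=\frac{\mathrm{Var}_{Q(\cdot|\phi_0)}[x]}{\sigma^4}.
\]
Write $\mathrm{Var}_Q[x]=\min_c\mathbb{E}_Q[(x-c)^2]$. Restrict the integral to the longer side of $\mathcal{D}$ from $\phi_0$, which has length at least $\Delta/3\gg\sigma$, and use $G_\sigma\le 1$. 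This gives $\mathrm{Var}_Q[x]\gtrsim\tfrac12(1-2/\pi)\sigma^2$, so $\mathcal{I}_0=\Omega(\sigma^{-2})$. The same identity shows $|\text{score}|\le|\mathcal{D}|/\sigma^2$ even with the normalization term you dropped, because both $x$ and $\mathbb{E}_Q[x]$ lie in $\mathcal{D}$. This confirms the paper's remark that the inner buffer is artificial.

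Finally, drop the remainder $r$, as the paper does by setting $\epsilon_{\text{synth}}=0$. With $\sigma\propto t^{-1}$, a norm ``exponentially small in $t^2\sigma^2$'' is just a constant, so folding term (iii) into $e^{-c'\Delta^2t^2}$ is not justified. Keeping $r$ honestly would reintroduce the $\log^{1/2}(\epsilon_{\text{synth}}^{-1})$ depth overhead of Lemma~\ref{lem:gaussian_kernel_synthesis}, which the stated bound explicitly ignores.
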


In the absence of any noise, Theorem~\ref{thm:fmpe-cost-nonoise-informal} recovers the Heisenberg limit when $t\sim\epsilon^{-1}$.
To relax the noiseless assumption, in Sec.~\ref{sec:gaussian_with_gdn} we extend moment projection with a Gaussian model distribution to circuits affected by global depolarizing noise.
This corresponds to a uniform probability distribution, and the optimal mitigation strategy simply incorporates this into the model function $Q(x|\phi)$.
This yields a similar theorem to the above, but with an additional noise cost.
\begin{theorem}[Theorem~\ref{thm:fmpe-cost-gdn}, informal]
    Let $U$ be a unitary with spectral gap $\Delta$ around a target state $\phi_0$ ($\forall_{j>0} |\phi_j - \phi_0| > \Delta$).
    Assume oracle access to a controlled version of $U$ with global depolarizing noise $e^{-\gamma}$ per call, and an initial state $\ket{\psi}$ such that $|\braket{\phi_0}{\psi}|^2 = a_0$.
    Further assume an initial estimate $\phi_{guess}$ of $\phi_0$ such that $|\phi_0-\phi_{guess}|<\Delta/3$.
    Then, one can produce an estimate $\tilde\phi$ of $\phi_0$ with RMS error $\epsilon$ using $M = O(e^{2\gamma t}a_0^{-2}t^{-2}\epsilon^{-2})$ samples of a noisy phase estimation circuit, where each circuit requires $t = \Omega(\Delta^{-1}(\gamma\Delta^{-1} + \log^{1/2}(a_0^{-2} \epsilon^{-1})))$ calls to the unitary $U$.
    The total number of calls to $U$ to execute the algorithm is $T = O(\epsilon^{-2}t^{-1}e^{2\gamma t}a_0^{-2})$.
    Minimizing $T$ as a function of $t$ at fixed $\gamma$ yields a cost $T=\Theta(\gamma\epsilon^{-2}a_0^{-2})$.
\end{theorem}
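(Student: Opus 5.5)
We prove this along the lines of Theorem~\ref{thm:fmpe-cost-nonoise-informal}, with global depolarizing noise absorbed into the model. The noiseless argument has three parts: a Gaussian-kernel phase estimation circuit of depth $t$ (Lemma~\ref{lem:gaussian_kernel_synthesis}), filtering to a promise interval $\mathcal{D}$ around $\phi_{guess}$, and the moment-projection bias/variance bounds of Lemma~\ref{lem:m-projection-expansion}.

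\textbf{Step 1: the sampled distribution.} Under global depolarizing noise with rate $e^{-\gamma}$ per call, the output distribution of a depth-$t$ circuit is
\begin{equation}
p(x) = F\sum_j a_j G_\sigma(x-\phi_j) + (1-F)\,u(x),
\end{equation}
where $F=e^{-\gamma t}$, $G_\sigma$ is a Gaussian of width $\sigma\sim t^{-1}$ (up to exponentially small truncation errors), and $u$ is uniform.

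We take $\mathcal{D}=[\phi_{guess}-\Delta/3,\,\phi_{guess}+\Delta/3]$. Then $\phi_0\in\mathcal{D}$, and every spurious phase $\phi_j$ lies at distance at least $\Delta/3$ from $\mathcal{D}$.

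The filtered model is
\begin{equation}
Q(x|\phi)\propto F a_0 G_\sigma(x-\phi) + (1-F)u(x)\quad\text{on }\mathcal{D}.
\end{equation}
This is the key change from the noiseless case: the uniform background is part of the model, so it contributes no model error. Here $a_0$ and $F$ are treated as known; alternatively they can be fitted as nuisance parameters, with at most constant-factor changes. The residual error is $h(x)=P(x)-Q(x|\phi_0)$, which comes from Gaussian tails of the $\phi_{j>0}$ leaking into $\mathcal{D}$ and from the part of $G_\sigma(x-\phi_0)$ lost outside $\mathcal{D}$. Hence
\begin{equation}
\|h\|_1\lesssim \frac{e^{-\Delta^2t^2/C}}{Fa_0+(1-F)|\mathcal{D}|/2\pi},
\end{equation}
where the denominator is the filtering normalization.

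\textbf{Step 2: Fisher information and score.} We compute the Fisher information $\mathcal{I}_0$ of $Q(\cdot|\phi)$ per filtered sample. When the noise dominates, i.e.\ $Fa_0\ll(1-F)\sigma/|\mathcal{D}|$, the Gaussian peak is a small bump on a flat background. The per-sample information is then roughly
\begin{equation}
\int \frac{(\partial_\phi Fa_0G)^2}{(1-F)/|\mathcal{D}|}\sim F^2a_0^2\,\sigma^{-2}\cdot\frac{|\mathcal{D}|}{\sigma}\cdots .
\end{equation}
Combined with the acceptance probability, this gives information per raw sample $\gtrsim F^2a_0^2t^2$, up to the constant dependence on $\Delta$ that the statement absorbs. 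In the opposite regime, the information is $\sim Fa_0t^2$, which is better. In both regimes $\mathcal{I}_0^{-1}M^{-1}=O(e^{2\gamma t}a_0^{-2}t^{-2}M^{-1})$. Setting this equal to $\epsilon^2$ gives $M=O(e^{2\gamma t}a_0^{-2}t^{-2}\epsilon^{-2})$.

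The maximal score satisfies $S\lesssim t^2|\mathcal{D}|$, since $\partial_\phi\log Q$ is bounded by $\partial_\phi\log G$ on $\mathcal{D}$. Lemma~\ref{lem:m-projection-expansion} then bounds the bias by $\|h\|_1S\mathcal{I}_0^{-1}$, which carries the factor $e^{-\Delta^2t^2/C}$ times polynomial factors in $F^{-1}$, $a_0^{-1}$ and $t$.

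\textbf{Step 3: depth condition and optimization.} We require the bias to be at most $\epsilon$. The polynomial factors include $F^{-2}=e^{2\gamma t}$, so we need
\begin{equation}
\Delta^2t^2/C\gtrsim 2\gamma t+\log(a_0^{-2}\epsilon^{-1}).
\end{equation}
Solving this quadratic in $t$ gives $t=\Omega\big(\Delta^{-1}(\gamma\Delta^{-1}+\log^{1/2}(a_0^{-2}\epsilon^{-1}))\big)$. The total number of calls is
\begin{equation}
T=Mt=O(\epsilon^{-2}t^{-1}e^{2\gamma t}a_0^{-2}).
\end{equation}
The function $t^{-1}e^{2\gamma t}$ is minimized at $t=1/(2\gamma)$, giving $T=\Theta(\gamma\epsilon^{-2}a_0^{-2})$. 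This choice is compatible with the depth condition whenever $\gamma$ is small relative to $\Delta$; otherwise the lower bound on $t$ binds.

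\textbf{Main obstacle.} The delicate step is Step 2. We must show that the per-sample Fisher information of the noisy filtered model scales as $F^2a_0^2t^2$ uniformly across the crossover between noise-dominated and signal-dominated regimes, and we must control the $O(\|h\|_1)$ correction to the variance. I would first bound $\mathcal{I}_0$ from below using $Q\le Fa_0G+(1-F)/|\mathcal{D}|$ and the Cauchy--Schwarz inequality restricted to a window of width $\sigma$ around $\phi_0$. Then I would verify that the score bound $S$ does not introduce extra factors of $F^{-1}$ that would spoil the exponent in the depth condition.
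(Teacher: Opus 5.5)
Your outline follows the paper's route. The paper's proof of Theorem~\ref{thm:fmpe-cost-gdn} builds $\mathcal D$ with Lemma~\ref{lem:generating-filtered-samples} and reads the depth condition off the bias bound of Lemma~\ref{lem:gaussian-gdn}, whose prefactor is $\frac{1-a_0}{a_0}\bigl(1+\frac{1-F}{Fa_0}\bigr)\sigma^2 g_\sigma(d)$. It takes the shot count from Eq.~\eqref{eq:fisher-info-before-filtering}, $\frac{1}{P_A\mathcal I_0}\le C_1\frac{\sigma^2}{Fa_0}\bigl(1+C_2\frac{|\mathcal D|}{2\pi}\frac{1-F}{Fa_0}\bigr)$, and then sets $T=M't$. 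Your three steps re-derive these ingredients and reach the right scalings. Two details of your Step 1 must change before Step 2 can be made rigorous.

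\textbf{The interval.} Your $[\phi_{guess}-\Delta/3,\phi_{guess}+\Delta/3]$ has no inner buffer, so $\phi_0$ can sit at an edge. The paper takes $|\mathcal D|=\Delta$, which keeps $\phi_0$ at least $\Delta/6$ inside and every $\phi_{j\neq0}$ at least $\Delta/6$ outside. The inner buffer is what makes $\mathcal N'$ exponentially small in the exact identity $\mathcal N\mathcal I_0=\int_{\mathcal D}q'^2/q-\mathcal N'^2/\mathcal N$. With $\phi_0$ at an edge the subtracted term is not small: already for $F=1$ it is $a_0/(\pi\sigma^2)$, against a main term $a_0/(2\sigma^2)$. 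The information does survive (it equals $(1-2/\pi)/\sigma^2$ in that case). However, your window/Cauchy--Schwarz bound would then have to track $\mathcal N'/\mathcal N\sim\sigma^{-1}$ explicitly rather than drop it.

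With the width-$\Delta$ interval, Lemma~\ref{lem:gaussian-gdn} applies directly and removes your ``main obstacle'' (App.~\ref{app:proof-of-gaussian-cors}). The paper restricts the Fisher integral to $\xi\in[-1,1]$ and replaces $g_1(\xi)$ by $g_1(1)$ in the denominator. It then requires $\sigma\le\frac{|\mathcal D|}{6}\bigl[1.2+\log\bigl(1+0.7\frac{|\mathcal D|}{2\pi}\frac{1-F}{Fa_0}\bigr)\bigr]^{-1/2}$, so that the boundary term is at most half the main one. This handles the signal/noise crossover uniformly. The extra condition has the form $\Delta^2t^2\gtrsim\gamma t+\log a_0^{-1}$, so your depth condition already absorbs it.

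\textbf{The model error $h$.} The part of the $\phi_0$ peak that falls outside $\mathcal D$ is \emph{not} model error. $Q$ is renormalized on $\mathcal D$, so the truncation is modelled exactly; this is precisely what removes the mean estimator's bias in Eq.~\eqref{eq:estimator-vs-mean-gaussian-no-noise}. Write $P=(q+u)/(\mathcal N+\|u\|)$ with $u=F\sum_{j\neq0}a_jg_\sigma(x-\phi_j)$. This gives $\|h\|_1\le2\|u\|/\mathcal N$ and $\|u\|\le F(1-a_0)\frac{\sigma^2}{d}g_\sigma(d)$, so only the spurious tails contribute. If truncation were part of $h$, your edge case would give $\|h\|_1=\Theta(1)$ and no exponential suppression of the bias. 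The factor $F$ in $\|u\|$ also makes the bias prefactor $F^{-1}a_0^{-2}$ rather than your $F^{-2}$; this is harmless.

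\textbf{Minor points.} Your aside that $a_0$ and $F$ can be fitted as nuisance parameters at constant cost is unsupported and unnecessary, since the paper assumes both are known. Your caveat on the optimization is more careful than the paper, which simply asserts $t=\Theta(\gamma^{-1})$. But substituting $t=1/(2\gamma)$ into the depth condition shows that compatibility needs $\gamma\lesssim\Delta\log^{-1/2}(a_0^{-2}\epsilon^{-1})$, not merely $\gamma\ll\Delta$.
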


In Sec.~\ref{sec:gaussian_unbiasing}, we extend the above results to general noise models using NU-FMPE.
As the NU-FMPE is constructed using a quasiprobability distribution in a similar manner to PEC, one would expect that the resulting estimator recovers the Heisenberg limit with a similar factor $F^{-4}$ overhead to standard PEC.
However, the restriction of our data to the interval $\mathcal{D}$ allows us to post-select away noisy data that falls outside.
We quantify the remaining error mitigation overhead using a parameter $\kappa$, which measures the fraction of the mitigated noise that remains inside $\mathcal{D}$.
The resulting complexity interpolates between $F^{-4}$ when $\kappa=\Theta(1)$ and $F^{-2}$ when the noise lies entirely outside of the promise interval $\mathcal{D}$.
We summarize this in the following theorem:
\begin{theorem}[Theorem~\ref{thm:fnmpe-cost-pauli}, informal]\label{thm:fnmpe-cost-pauli-informal}
    Let $U$ be a unitary with spectral gap $\Delta$ around a target state $\phi_0$ ($\forall_{j>0} |\phi_j - \phi_0| > \Delta$).
    Assume access to a unitary circuit that implements controlled-$U$ under a known local Pauli noise model with circuit fidelity $F$, the ability to add gates to the unitary circuit to generate a quasiprobability distribution (Eq.~\eqref{eq:qp_dist}), and access to a preparation of an initial state $\ket{\psi}$ such that $|\braket{\phi_0}{\psi}|^2 \geq \eta$.
    Assume an initial estimate $\phi_{guess}$ of $\phi_0$ such that $|\phi_0-\phi_{guess}|<\Delta/3$.
    Then, one can construct an estimate $\tilde{\phi}$ of $\phi_0$ using $M=\widetilde{O}\big((F^{-2}+\kappa F^{-4})\eta^{-1}t^{-2}\epsilon^{-2}\big)$ samples of a phase estimation circuit, where each circuit has fidelity $F$ and uses $t=\Omega\big(\Delta^{-1}\log^{1/2}(\Delta\epsilon^{-1}\eta^{-1})\big)$ calls to the unitary~$U$,
    where $\kappa < 1$ is the excess sampling probability within the promise interval.
\end{theorem}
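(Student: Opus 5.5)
The proof specializes the NU-FMPE bias and variance bounds (Lemma~\ref{lem:nme-expansion}) to the Gaussian model of Lemma~\ref{lem:gaussian_kernel_synthesis}, with $\sigma\sim t^{-1}$, and then splits the quasiprobability sampling cost into a part that lands inside the promise interval and a part that is filtered away.

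\textbf{Setup.} Take $\mathcal{D}=[\phi_{guess}-\Delta/3,\phi_{guess}+\Delta/3]$, so that $\phi_0\in\mathcal{D}$ and every spurious eigenphase lies at distance at least $2\Delta/3-|\phi_0-\phi_{guess}|>\Delta/3$ from $\mathcal{D}$. The phase estimation circuit is the Gaussian-kernel QPE of Lemma~\ref{lem:gaussian_kernel_synthesis} with width $\sigma\sim t^{-1}$. Each noisy circuit, with local Pauli noise of fidelity $F$, is written as $F$ times the ideal channel plus $(1-F)$ times an error channel. The Pauli noise is inverted by the standard PEC decomposition, which gives the quasiprobability distribution of Eq.~\eqref{eq:qp_dist} with $\|\alpha\|_1=O(F^{-1})$, and we apply the NU-FMPE with the Gaussian model $Q(x|\phi)$.

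\textbf{Bias.} The model error is
\[
h(x)=P(x)-Q(x|\phi_0)=\sum_{j>0}a_j\, G_\sigma(x-\phi_j)+(\text{normalization and truncation terms}),
\]
where $G_\sigma$ is the Gaussian kernel. On $\mathcal{D}$ each spurious tail is bounded by $e^{-\Delta^2 t^2/18}$, so $\|h\|_1\le \eta^{-1}e^{-\Omega(\Delta^2t^2)}$ after renormalising by the weight $a_0\ge\eta$. The NU-FMPE bias $\|h\|_1S_c\mathcal{I}_c^{-1}$ has $S_c=O(t^2\Delta)$, because the score of a Gaussian on $\mathcal{D}$ is bounded by $|x-\phi_0|/\sigma^2$, and $\mathcal{I}_c^{-1}=O(\sigma^2)$. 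Making this bias at most $\epsilon$ requires $t=\Omega(\Delta^{-1}\log^{1/2}(\Delta\epsilon^{-1}\eta^{-1}))$, as in the proof of Theorem~\ref{thm:fmpe-cost-nonoise}.

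\textbf{Variance: the main obstacle.} The crude bound $\|\alpha\|_1^2S_c^2\mathcal{I}_c^{-2}$ would give $F^{-2}$ times a score-squared term, but we need $F^{-2}+\kappa F^{-4}$. Instead, write the single-sample variance of the quasi-score directly:
\[
\|\alpha\|_1\sum_a|\alpha_a|\int_{\mathcal{D}}P_a(x)\,[\partial_\phi\log Q_c(x|\phi)]^2\,\dd x .
\]
Samples outside $\mathcal{D}$ contribute zero, since they are discarded. Split each $P_a$ into its ideal component, which carries weight $\Theta(F)$, and its error component.

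The ideal component contributes $\|\alpha\|_1\cdot O(F)\cdot\eta\,\mathcal{I}_c$, a factor $F\cdot F^{-1}=O(1)$ relative to the noiseless case. Dividing by $\mathcal{I}_c^2$, whose relevant scale is $\eta t^2$ times the sampling probability inside $\mathcal{D}$, gives the $F^{-2}\eta^{-1}t^{-2}$ term.

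The error components fall inside $\mathcal{D}$ with excess probability $\kappa$. Their score-squared is bounded by $S_c^2=O(t^2\log(\cdot))$, and the quasiprobability weights contribute $\|\alpha\|_1^2=F^{-2}$ extra. This gives the $\kappa F^{-4}\eta^{-1}t^{-2}$ term, with the polylog factors absorbed into $\widetilde{O}$. The delicate part is checking that $\mathcal{I}_c$ is not degraded by the regularizer $c$. For this, choose $c\sim\eta\,\sigma^{-1}e^{-\Delta^2t^2}$, small enough that $\mathcal{I}_c=\Theta(\eta t^2)$, while the score stays bounded by $S_c$.

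\textbf{Conclusion.} Divide the single-sample variance by $M$ and set it equal to $\epsilon^2$. This gives
\[
M=\widetilde{O}\big((F^{-2}+\kappa F^{-4})\eta^{-1}t^{-2}\epsilon^{-2}\big),
\]
as claimed.
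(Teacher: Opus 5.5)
Your proposal has two genuine gaps, both in the variance and cost part.

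\textbf{The $F$-dependence is asserted, not derived.} You take $\|\alpha\|_1=O(F^{-1})$. That is the norm of the global split $p=F\cdot(\text{ideal})+(1-F)\cdot(\text{error})$, which needs samples from the error-conditioned distribution. It is not what inserting Pauli gates gives. Standard PEC for local Pauli noise has $\|\beta\|_1\approx F^{-2}$, since each noise location costs about $1+2p$, and this is the input the paper uses.

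Even granting your norm, your own bookkeeping does not produce the claimed exponents. It gives an $O(1)$ factor for the ideal part and $\kappa\|\alpha\|_1^2=\kappa F^{-2}$ for the rest, so $F^{-2}$ and $\kappa F^{-4}$ do not follow from your steps. Two further mismatches:
\begin{itemize}
\item The ``ideal component'' of a Pauli-inserted $P_a$ is the ideal output of a \emph{different} circuit, not the target distribution.
\item $\kappa$ is the in-$\mathcal{D}$ mass of the negative-coefficient terms, not of the error components.
\end{itemize}

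The missing mechanism is Lemma~\ref{lem:generating-filtered-samples-pec}. Filtering turns $\beta$ into weights $\alpha_a=\beta_a\int_{\mathcal D}p_a/\int_{\mathcal D}p$, with $\|\alpha\|_1=1+\kappa\|\beta\|_1/\int_{\mathcal D}p$. The acceptance probability is $\|\alpha\|_1P_A/\|\beta\|_1$. Lemma~\ref{lem:gaussian-unbiased} needs $M\propto\|\alpha\|_1^2\sigma^2\epsilon^{-2}\log(\Delta t)$ \emph{filtered} samples. The number of circuit shots is therefore $M'\propto\|\alpha\|_1\|\beta\|_1\eta^{-1}t^{-2}\epsilon^{-2}\log(\Delta t)$. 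This is $\|\beta\|_1\le F^{-2}$ plus a term of order $\kappa\|\beta\|_1^2$. You never separate filtered samples from circuit shots, and that separation is where the $F^{-2}+\kappa F^{-4}$ structure comes from.

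\textbf{Your choice of regularizer cancels its effect.} With regularization, the score is capped at about $\sigma^{-1}\sqrt{\log(1/(2c\sigma))}$. With $c\sim\eta\sigma^{-1}e^{-\Delta^2t^2}$ that logarithm is $\Theta(\Delta^2t^2)$, so $S_c\sim\Delta t^2$. This is the same as the unregularized bound $|x-\phi_0|/\sigma^2$ you use for the bias. Any in-$\mathcal{D}$ mass of the negative-coefficient terms then contributes $S_c^2/\mathcal{I}_c^2\sim\Delta^2$ per sample, with no $t^{-2}$. That is a factor of order $\Delta^2t^2$ above the claimed bound, not a polylog that $\widetilde{O}$ can absorb.

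The paper goes the other way and takes $c=\Theta(\Delta^{-1})$. Then $c|\mathcal{D}|=\Theta(1)$ costs only a constant factor in $\mathcal{I}_c$, while the score stays at $\sigma^{-1}\sqrt{\log(1/(c\sigma))}$. This is what gives the $O(\|\alpha\|_1^2\sigma^2\log(\Delta t))$ variance of Lemma~\ref{lem:gaussian-unbiased}.

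Two smaller points:
\begin{itemize}
\item $\mathcal{I}_c=\Theta(t^2)$ for the normalized filtered model, not $\Theta(\eta t^2)$. The overlap $\eta$ enters only through the acceptance probability.
\item Your interval $[\phi_{guess}-\Delta/3,\phi_{guess}+\Delta/3]$ has no inner buffer, so the Gaussian bounds, which assume $|\phi_0-\phi_{guess}|\le|\mathcal{D}|/3$, do not apply as stated. The paper takes $|\mathcal{D}|=\Delta$, which gives buffers of $\Delta/6$ on both sides.
\end{itemize}
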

\noindent
In practice, we expect the circuit depth to be chosen such that $F^{-2}\approx e$ \cite{dutkiewicz2022heisenberg}, and for approximately uniform noise distributions $\kappa \approx |\mathcal{D}|/2\pi \ll 1$.
Therefore, the $\kappa F^{-4}$ contribution is expected to be subdominant to $F^{-2}$.

In Sec.~\ref{sec:numerics} we test our estimators numerically in the presence of local depolarizing noise, on a toy phase estimation problem of a 4-qubit Ising model with up to 10 ancilla qubits.
Beyond confirming our analytic results, this provides a practical implementation guide for those desiring to use these estimators.
We observe that the moment projection estimator assuming global depolarizing noise performs surprisingly well, often outperforming the noise-unbiased estimator, especially at low sample counts $M$.
Because local and global depolarizing noise produce broadly similar distributions, fitting the functional form of the noise proves preferable to cancelling it via a quasiprobability distribution.
This suggests future improvements to the NU-FMPE via more accurate modelling of the noisy distribution may be achievable.

\section{Definitions}
\label{sec:definitions}

In this section we define the phase estimation problem that we will focus on solving in this work (Def.~\ref{def:classical-subroutine}).
This splits phase estimation as a whole into quantum and classical subroutines~\cite{dutkiewicz2022heisenberg}, which in our case interface via the distribution from which the quantum computer provides samples from (Def.~\ref{def:kernel_function}).
(We defer the discussion of how these samples are obtained to Section~\ref{sec:background}.)
We modify this distribution in Def.~\ref{def:noisy-distribution} by adding noise, and in Def.~\ref{def:filtered_distribution} by filtering (via rejection) to an interval, and in Def.~\ref{def:gaussian_kernel_function} give a specific Gaussian example (which we will use throughout this work).

Quantum phase estimation takes as input a unitary $U$ and initial state $|\psi\rangle$. 
The output of QPE depends on the decomposition of this state $|\psi\rangle$ in the eigenbasis of $U$.
\begin{definition}[Spectral distribution]\label{def:spectral_distribution}
    Let $U$ be a unitary operator with eigenbasis $U|\phi_j\rangle=e^{i\phi_j}|\phi_j\rangle,\, \phi_j\in[0,2\pi)$.
    The spectral distribution of a state $|\psi\rangle$ in the eigenbasis of $U$ is the function
    \begin{equation}\label{eq:spectral_distribution}
        a(x) \Big(= a_{U,|\psi\rangle}(x)\Big) := \sum_j a_j\delta(x-\phi_j),\hspace{1cm} a_j := |\langle \phi_j|\psi\rangle|^2.
    \end{equation}
\end{definition}
The normalization of the state $|\psi\rangle$ ensures that $a(x)$ is a normalized probability distribution:
\begin{equation}
    \int_0^{2\pi}a(x)\dd x = \sum_j |\langle \phi_j|\psi\rangle|^2 = 1.
\end{equation}
In this work, we consider variants of QPE which use a quantum computer to provide samples from a distribution approximating $a(x)$.
Perfectly sampling from $a(x)$ is in general not possible; instead, one typically approximates the delta functions $\delta(x-\phi_j)$ in Eq.~\eqref{eq:spectral_distribution} by convolving with a so-called kernel function $f(x)\approx \delta(x-\phi_j)$
\begin{definition}[Kernel function]\label{def:kernel_function}
    A kernel function is a non-negative normalized function
    $f:[-\pi,\pi)\rightarrow [0,\infty)$; $\int_{-\pi}^{\pi}f(x)\dd x=1$.
    Given such a function, the smoothed spectral distribution of a state $|\psi\rangle$ in the eigenbasis of $U$ is
    \begin{equation}
        [f * a](x) := \sum_ja_jf(x-\phi_j),
    \end{equation}
    where $a(x)$ is the spectral distribution (Def.~\ref{def:spectral_distribution}).
\end{definition}
Various kernel functions have been explored in the quantum phase estimation literature, in particular the Fejer kernel \cite{nielsen2001quantum}, sine window \cite{luisOptimum1996,leeEven2021, dutkiewiczError2025}, cosine taper \cite{rendon2022effects}, DPSS taper \cite{patel2024optimal}, the Kaiser window \cite{berry2024analyzing}, and the Gaussian kernel \cite{wang2025efficient, rendon2023low, rendon2024improved}.
In this work we focus on Gaussian kernels due to their ease of manipulation, but our techniques can be readily adapted to any kernel function with exponentially-decaying tails and controllable width.
\begin{definition}[Gaussian kernel function]\label{def:gaussian_kernel_function}
    A Gaussian kernel function of width $\sigma > 0$ and precision $\epsilon_{\text{synth}} \geq 0$ is a kernel function $f_\sigma(x)$ which approximates a Gaussian in the interval $(-\pi, \pi)$ such that
    \begin{equation}
        \sup_{x\in (-\pi, \pi)} \left| f_\sigma(x) - \frac{e^{- x^2 / 2\sigma^2}}{\int_{-\pi}^{\pi} e^{- x^2 / 2\sigma^2} \dd x}  \right| < \epsilon_{\text{synth}},
    \end{equation}
\end{definition}
The Gaussian kernel function $f_\sigma(x)$ can be constructed by polynomial approximation, with only a logarithmic overhead in the approximation precision~ \cite{wang2025efficient}.
We will ignore the details of this approximation in this work, and assume $f_\sigma \propto  e^{- x^2 / 2\sigma^2}$.

\begin{figure}
    \centering
    \includegraphics[width=\linewidth]{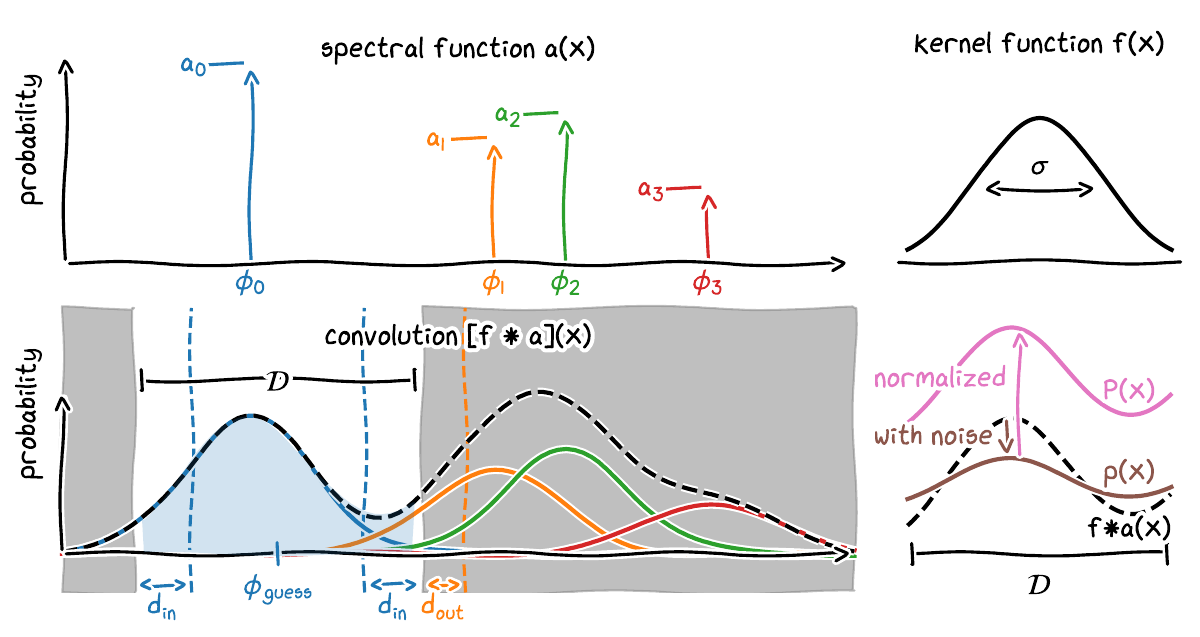}
    \caption{{ 
        Schematic plot illustrating the construction of the signal distribution of a filtered QPE experiment.
        \textbf{(Top left)} A unitary operator $U$ and initial state $|\psi\rangle$ define the spectral distribution $a(x)$ in Def.~\ref{def:spectral_distribution} -- a sum of Dirac deltas centered at the eigenphases $\phi_j$ of $U$ with amplitudes $a_j$.
        \textbf{(Top right)} We define a kernel function $f(x)$ (Def.~\ref{def:kernel_function}), with a shape and width $\sigma$ depending on the details of the circuit.
        \textbf{(Bottom left)}
        The QPE circuits can sample from a distribution obtaining by convolution of $a$ and $f$ -- the distribution in Def.~\ref{def:kernel_function}, shown by the dashed black line.
        (The colored lines indicate the contributions of each eigenphase $\phi_j$ to the total distribution.)
        The filtering procedure discards all samples outside of a filtering region $\mathcal{D}$, which is chosen assuming that the phase of interest $\phi_0$ is within the interval (farther than an \emph{inner buffer} distance $d_{\text{in}}$ from the interval edges) and all other phases are outside the interval (farther than an \emph{outer buffer} distance $d_{\text{out}}$ from the interval edges) -- see Def.~\ref{def:promise_interval}.
        The unnormalized distribution of filtered outcomes in the absence of noise is highlighted in blue.
        \textbf{(Bottom right)}
        Adding noise to the distribution $f*a$ yields the distribution $p(x)$ (Def.~\ref{def:noisy-distribution}). 
        Normalizing this distribution within the filtering interval $\mathcal{D}$ we obtain the filtered noisy distribution $P(x)$ (Def.~\ref{def:filtered_distribution}).
    }}
    \label{fig:scheme}
\end{figure}

In the absence of noise, the quantum computer targets sampling from a smoothed spectral distribution $[f*a_{U,|\psi\rangle}]$.
The circuits required for this distribution typically have a depth proportional to the inverse of width $\sigma$ of the kernel $f$ (which is a consequence of the no-fast-forward theorem~\cite{Berry07Efficient}).
This becomes more complicated in an early fault-tolerant or NISQ setting.
Under a stochastic noise model (where noise is treated as a series of discrete events that either occur or do not), the noisy probability distribution can be rewritten as a convex combination of the noiseless distribution $a(x)$ and a distribution $u(x)$ of all cases where a noise event happened, weighted by the fidelity $F\in(0,1)$:

\begin{definition}[Noisy distribution] \label{def:noisy-distribution}
Let $U$ be a unitary, $|\psi\rangle$ be a state, and $f$ be a kernel function (Def.~\ref{def:kernel_function}).
In the presence of stochastic noise, let $F$ be the circuit fidelity (the probability of no noise event occuring), and let $u(x):[0, 2\pi)\rightarrow [0,\infty)$ be the distribution sampled from the quantum computer in the event that at least one noise event occurs.
The noisy distribution is then the function
\begin{equation}
    p(x) = F \cdot [f * a](x) + (1-F) \cdot u(x),
\end{equation}
where $[f * a](x)$ is the distribution of $|\psi\rangle$ in the eigenbasis of $U$ with kernel function $f$ in the absence of noise (Def.~\ref{def:kernel_function}).
\end{definition}
Typically the circuit fidelity $F$ is exponentially small in the circuit depth $T$.
As we expect $T\propto\sigma^{-1}$, one can assume a form $F=e^{-\gamma/\sigma}$ for some decoherence rate $\gamma$.
Optimizing QPE in the presence of noise trades thus requires trading between deep circuits with small $\sigma$ (making estimation easier), and shallow circuits with large $F$~\cite{dutkiewiczError2025}.

The classical subroutine of a QPE algorithm takes the samples output from the quantum subroutine, and processes them to recover information about the phases $\{\phi_j\}$.
In this work we consider the estimation of a specific single phase, $\phi_0$, which is identified by the promise of a region $\mathcal{D}$ in which it alone exists.
This is in contrast to methods which attempt to estimate all phases in the problem simultaneously~\cite{obrien2019quantum, dutkiewicz2022heisenberg, ding2023simultaneous,ding2024quantum}, or to estimate a discretized form of the spectral distribution $a(x)$ itself~\cite{somma2019quantum}, or to prepare the ground state itself~\cite{geFaster2019,Lin20Preparation}.
In practice, this promise is a reasonable assumption for e.g.~the ground state energy of a gapped Hamiltonian, where $\mathcal{D}$ could be estimated via classical means or a lower-cost QPE method.
\begin{definition}[Promise interval]\label{def:promise_interval}
    Let $U$ be a unitary with eigendecomposition $U|\phi_j\rangle =e^{i\phi_j}|\phi_j\rangle$, $\phi_j\in[0,2\pi)$.
    A promise interval $\mathcal{D}$ for an eigenphase $\phi_0$ with inner buffer $d_{\text{in}}$ and outer buffer $d_{\text{out}}$ is a connected~\footnote{To avoid unnecessary complications, we do not consider promise intervals that wrap around the circle. In the case where this would occur (i.e. if $\phi_0$ is near $0$ or $2\pi$), one can trivially shift all phases by a constant to yield a connected promise interval $\mathcal{D}$.} subset of $[d,2\pi-d)$ centred around $\phi_{\mathrm{guess}}$ that satisfies the following two properties:
    \begin{enumerate}
        \item $\phi_0\in[\phi_{\mathrm{guess}}-\frac{|\mathcal{D}|}{2} + c, \phi_{\mathrm{guess}}+\frac{|\mathcal{D}|}{2} - c]$
        \item $\forall j\neq 0 ,\quad \phi_{j}\notin [\phi_{\mathrm{guess}}-\frac{|\mathcal{D}|}{2} - d ,\phi_{\mathrm{guess}}+\frac{|\mathcal{D}|}{2} + d]$
    \end{enumerate}
\end{definition}
In words, we require a buffer zone of width $d_{\text{in}} + d_{\text{out}}$ around the edges of $\mathcal{D}$, such that no phases lie within this buffer.
The existence of a promise interval $\mathcal{D}$ with inner buffer $d_{\text{in}}$ and outer buffer $d_{\text{out}}$ implies a gap $\Delta\geq d_{\text{in}}+d_{\text{out}}$ between $\phi_0$ and any other phase, and a promise interval can always be constructed given an $\epsilon$-accurate estimate of $\phi_0$ (i.e.~$|\phi - \phi_\text{guess}| < \epsilon$) and the promise of a gap $\Delta>2\epsilon + d_{\text{in}} + d_{\text{out}}$ (choosing a promise interval of size $|\mathcal{D}| = 2\epsilon + 2d_{\text{in}}$).
For simplicity, in this work we will fix the inner buffer size $d_{\text{in}} = \frac{|\mathcal{D}|}{6}$ as a fraction of the interval width $|\mathcal{D}|$.

To use the promise interval $\mathcal{D}$ to optimize our estimation of $\phi_0$, we will filter the noisy distribution $p(x)$ to lie within $\mathcal{D}$ only.
This yields a new distribution, that can be sampled from by sampling from $p(x)$ and rejecting samples from outside $\mathcal{D}$:
\begin{definition}[Filtered noisy distribution]\label{def:filtered_distribution}
    Let $p(x)$ be a noisy distribution (Def.~\ref{def:noisy-distribution}) for unitary $U$, state $|\psi\rangle$, kernel function $f$ and fidelity $F$, and let $\mathcal{D} \subset [0, 2\pi)$.
    The filtered noisy distribution $P(x)$ is the normalized distribution with support on $\mathcal{D}$ defined as
    \begin{equation}
        P(x) = \begin{cases}\frac{p(x)}{\int_{\mathcal{D}} p(x) \dd x} & x\in\mathcal{D}\\
     0 & x\notin\mathcal{D}
    \end{cases}.
    \end{equation}
\end{definition}

Given oracular access to some filtered noisy distribution $P(x)$, we measure the performance of our classical estimation of $\phi_0$ by bounding the bias and standard deviation of the constructed estimator $\tilde\phi_0$.
This is a common metric used in the quantum metrology community~\cite{luisOptimum1996,belliardoAchieving2020}.
It differs slightly from the confidence interval formalism commonly used in computer science~\cite{nielsen2001quantum}, however the two can be related to each other with at most a logarithmic overhead in the error probability through Chebyshev's inequality.
We are now ready to state the phase estimation problem considered in this work.
\begin{definition}[Classical and quantum subroutines of QPE]\label{def:classical-subroutine}
    Let $P(x)$ be a filtered noisy distribution (Def.~\ref{def:filtered_distribution}) for given $U$, $|\psi\rangle$, $f$, $F$, $\mathcal{D}$; and assume $\mathcal{D}$ is a promise interval (Def.~\ref{def:promise_interval}) for $\phi_0$, with buffers $d_{\text{in}}$ and $d_{\text{out}}$.
    The classical subroutine of the quantum phase estimation algorithm, given $M$ samples $\{x_j\}$ from $P(x)$, constructs an estimator $\tilde{\phi}(=\tilde{\phi}[\{x_j\}])$ for $\phi_0$, with bias
    \begin{equation}
        b=\E[\tilde\phi | x \sim P(x)]-\phi_0
    \end{equation}
    and variance
    \begin{equation}
        \epsilon^2 = \Var[\tilde\phi|x \sim P(x)] = \E[(\tilde\phi-\phi_0-b)^2 | x \sim P(x)]
    \end{equation}
    The quantum subroutine of a QPE algorithm is to generate samples from the distribution $P(x)$ given $\mathcal{D}$, $f$, $F$, a~circuit implementation of $U$, and copies of $\ket{\psi}$.
\end{definition}

In this work, for the sake of recovering simpler constant factors and making the proofs clearer, we restrict to a version of the above problem, but with the distribution $p(x)$ fixed to be a Gaussian:
\begin{definition}[Classical QPE subroutine with Gaussian kernels]
    Def.~\ref{def:classical-subroutine} with $f=f_\sigma$ Gaussian kernel function with variance $\sigma^2$ and $\epsilon_{\text{synth}}=0$, and inner buffer $d_{\text{in}} = |\mathcal{D}|/6$.
\end{definition}
The techniques used in our proofs easily extend to smaller inner buffers and any kernel function that vanishes exponentially in $x/\sigma$ [i.e.~$f_\sigma(x) \sim o\left(\frac{1}{\exp(x/\sigma)}\right)$], but the resulting expressions for bias and variance will have different constant factors.

\section{Background}
\label{sec:background}

Due to its BQP-completeness~\cite{Wocjan06Several} and use as a subroutine in various quantum algorithms~\cite{Shor95Polynomial, Harrow09Quantum, aspuru2005simulated}, much prior work has focused on optimizing phase estimation in various settings.
In this section, we describe the various methods for constructing the quantum subroutine for quantum phase estimation, followed by a review of the state of the art in constructing the classical subroutine (following our division of QPE into two subroutines in Def.~\ref{def:classical-subroutine}).

A large body of work exists on quantum phase estimation that has a different quantum-classical interface to the one considered in this work, namely single-control QPE and related methods~\cite{kimmel2015robust,obrien2019calculating,lin2022heisenberg,dutkiewicz2022heisenberg}.
These methods still have a well-defined split into quantum and classical subroutines, but here the quantum computer provides estimates of expectation values $\langle\psi|e^{iHt}|\psi\rangle$ instead of samples from a smoothed spectral distribution $f*a(x)$ (Def.~\ref{def:kernel_function}).
This allows these methods to access standard error mitigation techniques for expectation values~\cite{caiQuantum2023}.
However, it was shown in Ref.~\cite{najafiOptimum2023} that single control methods converge slower in estimation in the absence of noise.
Furthermore, the ability to filter noisy data (as studied in this work) allows phase estimation to tolerate higher levels of noise, analogous to the difference in fidelity cost between postselection and rescaling~\cite{dutkiewiczError2025}.
Thus, extending error mitigation techniques from single-control to QFT-based and QSP-based methods is clearly of relevance for early-fault-tolerant phase estimation.

\subsection{QFT-based phase estimation}\label{sec:qftqpe}

Quantum phase estimation was first studied as a subroutine in Shor's factoring algorithm~\cite{Shor95Polynomial}.,
Here, the quantum algorithm uses application of $U^k$ controlled on the $k$th basis state of a control register prepared in some initial state $\sum_{k=0}^{K-1}b_k|k\rangle$, to generate
\begin{equation}
    (c_K-U)\sum_{k=0}^{K-1}b_k|k\rangle|\psi\rangle = \sum_{k=0}^{K-1}\sum_{j}b_ka_je^{ik\phi_j}|k\rangle|\phi_j\rangle.
\end{equation}
Here, $(c_K-U)=\oplus_{k=0}^{K-1}(|k\rangle\langle k|\otimes U^k)$ is the unitary $U$ controlled by the entire quantum register.
The algorithm proceeds by performing the quantum Fourier transform on the control register, and reading out the result.
The cost of executing a single shot of $c_K-U$ is proportional to the maximum number of calls $K$ to the unitary; for Shor's algorithm this is logarithmic, however in quantum simulation this is bounded below by the no-fast-forward theorem~\cite{Berry07Efficient} to be worst-case linear in $K$.

A large body of work in phase estimation has focused on the optimization of the control register state.
Originally, the $b_k$ values were chosen to be a uniform superposition across $K=2^{n_a}$ qubits ($b_k=2^{-n_a}$) \cite{cleve1998quantum}, which was popularized as the `textbook phase estimation' due to its appearance in Ref.~\cite{nielsen2001quantum}.
Following the quantum Fourier transform, measurement in the computational basis sample bitstrings $\tilde{x}$ from a convolution of the spectral function with a Fejer kernel as kernel function
\begin{equation}
    p^{(QPEA)}(\tilde x) = \sum_j a_j \frac{2\pi}{K}f^{\text{(Fejer)}}_K\left(\frac{2\pi}{K}\tilde{x} - \phi_j\right), 
    \quad \frac{2\pi}{K}f^{\text{(Fejer)}}_K(x) = \frac{1}{K^2}\frac{1-\cos(K x)}{1-\cos(x)},
    \quad \tilde{x}\in\{0, 1, ..., K-1\}.
\end{equation}
Textbook phase estimation has the advantage of having simple state preparation, and yielding exact eigenvalues given the promise that $2^{n_a}\phi_j/(2\pi)\in\mathbb{N}$.
However, the Fejer kernel has a suboptimal width $\sigma$ (as a function of $K$); $\sigma\sim K^{-1/2}$, as $\epsilon^{-1}=O(\sigma^{-1})$, this implies that a classical estimator constructed from this data cannot achieve the Heisenberg limit (variance $\epsilon^{2}\sim K^{-2}$).
Refs.~\cite{higgins2009demonstrating} improved on this by careful choice of the control state amplitudes $c_k$, such that the resulting kernel functions achieved tighter widths $\sigma$.
Namely, Ref.~\cite{luisOptimum1996, babbushEncoding2018} uses a sine kernel to achieve an optimal standard deviation, while Ref.~\cite{gorecki2020pi, berry2024analyzing} uses a Kaiser window to achieve optimal confidence-probability bounds.
Gaussian kernels are also considered in Refs. \cite{rendon2024improved, rendon2023low}, as they allow for easy analysis when multiple samples are involved.
Circuit constructions for sine states and Kaiser window states are known~\cite{babbushEncoding2018, berry2024analyzing}.
More generally, as the cost of phase estimation for non-fast-forwardable unitaries grows linearly in the control register Hilbert space size, constructing even arbitrary control initial states should not be a significant factor in the overall cost of phase estimation.

\topic{Random phase technique}
In their standard definition, QFT-based QPE algorithms sample discrete variables.
However these can be easily be adapted to the continuous description of the distributions we gave in Defs.~\ref{def:kernel_function}, \ref{def:noisy-distribution} and \ref{def:filtered_distribution} through the random-phase technique \cite{cornelissen2023sublinear, vanapeldoorn2023quantum}.
This technique consists in classically sampling a phase $\phi_\text{ref}$ uniformly at random in the interval $[0, 2\pi)$ before every circuit run, and implementing the QPE circuit on the modified unitary $e^{i \phi_\text{ref}}U$ (this implies a very small additive overhead, logarithmic in the precision of the classical variable).
The reference phase is then added the output of the quantum circuit, yielding a random variable $x = \frac{2\pi}{K}\tilde{x} - \phi_\text{ref}$ with continuous support in $[0, 2\pi)$.
For instance, the resulting distribution for textbook QPE becomes
\begin{equation} \label{eq:continuous-fejer-kernel}
    p^{(QPEA)}(x) = \sum_j a_j f^{\text{(Fejer)}}_K\left(x - \phi_j\right), 
    \qquad f^{\text{(Fejer)}}_K(x) = \frac{1}{2\pi\,K}\frac{1-\cos(K x)}{1-\cos(x)},
    \qquad x\in[0, 2\pi).
\end{equation}
More generally, if the circuit samples bitstrings $\tilde{x}\in {0,1,\dots, K-1}$ with probability $p(\tilde x) = \sum_j a_j \frac{2\pi}{K}f(\frac{2\pi}{K}\tilde x-\phi_j)$ for some kernel function $f$, then the random phase technique will modify this to $p(\tilde x | \phi_\text{ref}) = \sum_j a_j \frac{2\pi}{K}f(\frac{2\pi}{K}\tilde x-(\phi_j+\phi_\text{ref}))$ and yield samples $x \in [0, 2\pi)$ distributed as
\begin{align}
    p(x) &= \sum_{\tilde x = 0}^{K-1}\int_{0}^{2\pi} \frac{\dd \phi_{\text{ref}}}{2\pi}  p(\tilde x | \phi_\text{ref}) \ \kappa\left(x - \left(\frac{2\pi}{K}\tilde{x} - \phi_\text{ref}\right)\right)
    \notag\\
    &= \sum_{\tilde x = 0}^{K-1}\int_{0}^{2\pi} \frac{\dd \phi_{\text{ref}}}{2\pi}\frac{2\pi}{K}\sum_j a_j f\left(\frac{2\pi}{K}\tilde{x} - (\phi_j+ \phi_\text{ref})\right)\ \kappa\left(x - \left(\frac{2\pi}{K}\tilde{x} - \phi_\text{ref}\right)\right)
    \notag\\
    &= \frac{1}{K}\sum_{\tilde x = 0}^{K-1}\sum_j a_j f\left(x-\phi_j\right) = \sum_j a_j f\left(x-\phi_j\right) = (a*f)(x).
    \label{eq:random_phase_noiseless}
\end{align}

To sample from $p(x)$ with a Gaussian kernel function of width $\sigma$ (Def.~\ref{def:gaussian_kernel_function}), we must prepare the QPE control register in a quantum state $\sum_{k=0}^{K-1}b_k|k\rangle$ whose computational basis amplitudes $b_k$ approximate a Gaussian distribution. 
The random phase technique (Eq.~\eqref{eq:random_phase_noiseless}) then smoothens out the discrete measurement grid, allowing us to sample from the (continuous) convolved distribution $f*a(x)$ (Def.~\ref{def:kernel_function}). 
We can bound the Gaussian tails to precision $\epsilon_{\text{synth}}$ by exploiting the Fourier duality of the discrete sampling errors analysed in \cite{rendon2024improved}.
This requires a register dimension (and thus maximum evolution time) scaling as $K = O(\sigma^{-1} \sqrt{\log(\epsilon_{\text{synth}}^{-1})})$.
Neglecting the one-off cost of preparing the state on the $n = \log_2(K)$ control qubits (which is subdominant to the cost of applying the $K$ controlled unitaries), we obtain the following result as a direct consequence of Ref.~\cite{rendon2024improved}, Theorem16:
\begin{lemma}[Gaussian kernel synthesis]
\label{lem:gaussian_kernel_synthesis}
One can prepare the window state and sample from the continuous phase distribution $p(x)$ of a Gaussian kernel function $f_\sigma$ (Def.~\ref{def:gaussian_kernel_function}) to precision $\epsilon_{\text{synth}} > 0$ using a preparation circuit of depth $O\left(\sigma^{-1} \sqrt{\log(\epsilon_{\text{synth}}^{-1})}\right)$.
\end{lemma}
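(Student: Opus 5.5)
The plan is to build the Gaussian window state on the control register, run QFT-based phase estimation with the random-phase technique, and bound the deviation of the resulting continuous distribution from an exact Gaussian kernel. Let $K=\lceil C\sigma^{-1}\sqrt{\log(\epsilon_{\text{synth}}^{-1})}\rceil$ for a suitable constant $C$. Take control amplitudes $b_k\propto e^{-(k-K/2)^2/(4\tilde\sigma^2)}$ on $k\in\{0,\dots,K-1\}$, where $\tilde\sigma = 1/(2\sigma)$ is chosen so that the squared Fourier transform has width $\sigma$ in phase. Since $|b_k|^2$ is a Gaussian of width $\tilde\sigma\sim\sigma^{-1}$, truncating to a window of size $K$ loses only mass $e^{-\Omega(K^2\sigma^2)}$. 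With the choice of $K$ above, this is at most $\epsilon_{\text{synth}}$. The window state is a product-like Gaussian state on $n=\log_2 K$ qubits, preparable by standard methods at cost polylogarithmic in $K$. This is subdominant, as the statement allows.

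Next, I would compute the noiseless outcome distribution of QPE with this control state, exactly as in the derivation of Eq.~\eqref{eq:random_phase_noiseless}. Measuring $\tilde x$ after the inverse QFT gives probability $\sum_j a_j\,|\hat b(2\pi\tilde x/K-\phi_j)|^2/K$, where $\hat b(\theta)=\sum_k b_k e^{-ik\theta}$ is the discrete-time Fourier transform. The random-phase step then produces the continuous distribution $\sum_j a_j f(x-\phi_j)$ with $f(\theta)=|\hat b(\theta)|^2/(2\pi)$. This $f$ is nonnegative and, by Parseval, normalized on $[-\pi,\pi)$, so it is a valid kernel function in the sense of Def.~\ref{def:kernel_function}.

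The main step is the uniform bound in Def.~\ref{def:gaussian_kernel_function}, namely $\sup_x|f(x)-g_\sigma(x)|<\epsilon_{\text{synth}}$, where $g_\sigma$ is the normalized periodic Gaussian. I would use Poisson summation. The DTFT of an untruncated, sampled Gaussian equals a periodized continuous Gaussian, $\sum_m \hat G(\theta+2\pi m)$, whose square is the Gaussian $e^{-\theta^2/2\sigma^2}$ up to normalization. Wrap-around terms with $m\neq 0$ are of size $e^{-\Omega(\sigma^{-2})}$, and a negligible-width correction handles them when $\sigma\ll 1$. Truncating $b$ to $K$ terms perturbs $\hat b$ uniformly by at most the $\ell_1$ norm of the discarded tail. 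That norm is $O(\sqrt{\tilde\sigma})\,e^{-\Omega(K^2/\tilde\sigma^2)}$, which after squaring and normalizing gives a uniform error $\mathrm{poly}(\sigma^{-1})\,e^{-\Omega(K^2\sigma^2)}$. Absorbing the polynomial prefactor into the constant $C$, together with a $\log\sigma^{-1}$ term inside the square root, yields precision $\epsilon_{\text{synth}}$.

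This is precisely the bound of Ref.~\cite{rendon2024improved}, Theorem~16, so I would cite that result for the constants. The total circuit uses at most $K-1$ controlled applications of $U$, which gives depth $O(\sigma^{-1}\sqrt{\log(\epsilon_{\text{synth}}^{-1})})$. The expected obstacle is the sup-norm control of the truncation plus aliasing error, in particular making sure that the polynomial prefactors do not spoil the $\sqrt{\log}$ scaling. I would handle this by the Poisson-summation bound and by requiring $\epsilon_{\text{synth}}\le\sigma$ if needed.
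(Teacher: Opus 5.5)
Your proposal is correct and follows the same route as the paper. Both use a Gaussian window state on the QFT control register, the random-phase technique to turn the discrete outcome distribution into the continuous kernel $|\hat b(\theta)|^2/2\pi$, and a Fourier-duality (Poisson-summation) bound on the truncation error giving $K=O(\sigma^{-1}\sqrt{\log(\epsilon_{\text{synth}}^{-1})})$, with constants deferred to Ref.~\cite{rendon2024improved}, Theorem~16 (the paper does no more than this).

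Your more explicit treatment usefully surfaces two conditions the paper leaves implicit. The first is the $\log\sigma^{-1}$ prefactor, which you absorb by taking $\epsilon_{\text{synth}}\le\sigma$. The second is the wrap-around error of order $g_\sigma(\pi)\sim\sigma^{-1}e^{-\pi^2/2\sigma^2}$ relative to the truncated, non-periodized target of Def.~\ref{def:gaussian_kernel_function}; this does not decrease with $K$, so it requires $\sigma=O(\log^{-1/2}\epsilon_{\text{synth}}^{-1})$ rather than merely $\sigma\ll 1$.
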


In the presence of noise this distribution will change as per Def.~\ref{def:noisy-distribution}. 
In order to get to the filtered distribution of Def.~\ref{def:filtered_distribution}, we neglect the samples that lie outside of the given promise interval $\mathcal{D}$.
To obtain $M$ samples from the filtered distribution $P(x)$, we need to run the quantum subroutine $M'>M$ times, yielding an average sample overhead {{$\E[M']/M = 1 / \int_\mathcal{D} p(x)$}}.
For kernel functions with fast decaying tails, such as the Gaussian kernel $f_\sigma$ (Def.~\ref{def:gaussian_kernel_function}), this overhead is approximately $a_0^{-1}$.
We will discuss this more in detail in section~\ref{sec:m-projection-for-qpe}.

\topic{Classical subroutine}
In the absence of noise, and given an initial eigenstate $|\psi\rangle=|\phi_0\rangle$, the optimal strategy for phase estimation involves a single-shot readout of an estimate of $\phi_0$ from the control register~\cite{nielsen2001quantum}.
This renders complicated classical post-processing unnecessary, as one cannot optimize further over a single estimate.
As a mixed state can be purified to the ground state using a circuit of depth $a_0^{-1/2}\Delta^{-1}$~\cite{geFaster2019,Lin20Preparation}(with $\Delta$ the gap to the first excited state), less focus has been traditionally given to the classical QPE subroutine.
However, in the presence of noise, one cannot afford the depth of such circuits,.
In Ref.~\cite{rendon2023low}, QFT-based phase estimation was studied in the absence of noise, using a mean estimator on a subset of lowest-energy outcomes.
This yielded a bound on the cost of estimation of circuit depth $T = O(1/\Delta)$ and number of repetitions $M = O(1/\epsilon^2)$.
In Ref.~\cite{dutkiewiczError2025}, we studied QFT-based phase estimation of an eigenstate in the presence of general noise and global depolarizing noise, finding that optimal phase estimation occurs at circuit fidelities $\sim 1/e \sim 30\%$.
However, no works have yet studied the realistic phase estimation context, with non-eigenstate starting states and noise.

\subsection{QSP-based phase estimation}\label{sec:qspqpe}

Recently a new class of algorithms to estimate eigenvalues emerged, which use a completely different quantum subroutine from the QFT-based or Hadamard-test-based QPE algorithms \cite{martyn2021grand, dongGround2022, wang2024faster}.
Given a target unitary $U$, these algorithms rely on variants of quantum signal processing  \cite{low2017optimal, lowHamiltonian2019, Gilyen19QSVT} to construct block-encodings of polynomial functions~$h(U)$:
\begin{equation}
    W_{h, U} = \begin{bmatrix}h(U) & \cdot \\ \cdot & \cdot\end{bmatrix}
    \, , \quad
    \bra{0} W_{h, U} \ket{0} = h(U),
\end{equation}
with $|h(e^{i\phi})| < 1\, \forall\phi\in[0, 2\pi)$.
Applying $W_{h, U}$ on $\ket{0}\ket{\psi}$ and measuring the control qubit will yield $0$ with probability
\begin{equation} \label{eq:binary-test-probability}
    p^{0}_{h, U, \psi} = \bra{\psi}\bra{0} W^\dag_{h, U} \ketbra{0} W_{h, U}\ket{0}\ket{\psi} = \bra{\psi} h^\dag(U) h(U) \ket{\psi},
\end{equation}
and $1$ otherwise.
Sampling from these binary-test circuits with an appropriately-chosen set of functions $\{h\}$ allows to extract information about the phases of $U$.

Following the approach of \cite{wang2025efficient}, we aim to reconstruct the smoothed spectral distribution $f*a(x)$ of Def.~\ref{def:kernel_function} with kernel $f$ by choosing a set of $h_x$ such that $h_x(e^{i \phi}) = \sqrt{f(x - \phi)}/\max_{\phi}\sqrt{f(\phi)}$, with $x$ taking values in the interval $[0, 2\pi)$.
The sample probability $p^{0}_{f(x-\phi), U, \psi}$ will be proportional to $f*a(x)$:
\begin{equation}
\label{eq:prob-accept}
    p^{0}_{h_x, U, \psi} %= \bra{\psi} h_x^\dag(U) h_x(U) \ket{\psi} 
    = \frac{1}{\max_\phi f(\phi)} \sum_j a_j f(x-\phi_j).
\end{equation}
Given access to these binary samples with probabilities, we can obtain $M$ samples from the smoothed spectral distribution $[f*a](x)$ through rejection sampling, with the following steps: (1) sample $x$ at random in $[0, 2\pi)$, (2) run the binary-test circuit with $h=h_x$; if the outcome is $0$ accept the sample $x$ (with probability $p^{0}_{h_x, U, \psi}$)
and (3) repeat from 1 until $M$ samples are accepted.
We call $M'$ the total number of repetitions, i.e.~the total number of quantum circuits ran in order to obtain $M$ accepted samples.
The sampling overhead is equal to the inverse of the expected acceptance probability
\begin{equation}
    \frac{\E[M']}{M} = \left[\frac{1}{2\pi}\int_0^{2\pi} p^0_{h_x, U, \psi}  \dd x \right]^{-1} = 2\pi \max_\phi f(\phi).
\end{equation}
In this setting, we can naturally implement filtering (Def.~\ref{def:filtered_distribution}) by changing step (1) of rejection sampling, choosing $x$ uniformly at random in $\mathcal{D}$ rather than in $[0, 2\pi)$.
The sampling overhead is then reduced to
\begin{equation}
    \frac{\E[M']}{M} = \left[\frac{1}{|\mathcal{D}|}\int_{\mathcal{D}} p^0_{h_x, U, \psi} \dd x\right]^{-1} =  \frac{|\mathcal{D}| \, \max_\phi f(\phi)}{\int_{\mathcal{D}} [f * a](x) \dd x}.
\end{equation}
In the case of a Gaussian kernel function $f_\sigma$ (Def.~\ref{def:gaussian_kernel_function}), the maximal value $\max_\phi f(\phi) = (\int_{-\pi}^{\pi} e^{- x^2 / 2\sigma^2} \dd x)^{-1}$ is proportional to $\sigma^{-1}$,
while $\int_{\mathcal{D}} [f * a](x) \dd x \xrightarrow{\sigma \to 0} a_0$ because only $\phi_0$ is in the promise interval.
The sampling overhead  is $\propto a_0^{-1}\sigma^{-1}|\mathcal{D}|$,
with an additional factor of $\sigma^{-1}|\mathcal{D}|$ comparing to the QFT-based method.

\topic{classical subroutine of \cite{wang2025efficient}}
In a pre-print version of \cite{wang2025efficient}, the authors proposed an algorithm (Algorithm 2 in \cite{wang2024faster}) to estimate $\phi_0$ using samples from a (filtered) Gaussian distribution (Def.~\ref{def:gaussian_kernel_function}). 
The authors generated data $\{x_j\}$ for this using the QSP-based circuits described in this section, and constructed a classical estimator by taking an average of the accepted samples; $\tilde\phi=\langle x_j\rangle_{x_j\in\mathcal{D}}$.
This algorithm further uses an adaptive choice of the interval $\mathcal{D}$ and Gaussian width $\sigma$ to achieve arbitrarily low bias.
Throughout this work we will use the mean estimator suggested here without these adaptive updates as an estimator to compare our results to.

Relative to the QFT-based implementation, QSP-based phase estimation techniques carry an additional sampling overhead of $\sigma^{-1}|\mathcal{D}|$.
In principle this could be reduced by an adaptive choice of $\mathcal{D}$, and QSP-based circuits have lower requirements for ancilla qubits ($1$ as opposed to $O(\sigma^{-1})$).
For simplicity we do not consider the overhead from the QSP scheme (nor do we consider adaptive updates of $\mathcal{D}$) further in this work.
However, the estimators designed in this work can be applied immediately to samples generated by the QSP-based rejection-sampling scheme above.

\section{Results}
\label{sec:results}

In this work, we construct a classical estimator for QPE (the ``filtered moment projection phase estimator'') in two steps.
The key idea here is that a sample from the smoothed spectral distribution that falls within the filtering interval is, with high likelihood, caused by the eigenvalue of interest $\phi_0$. 
Thus, we can fit the samples from the filtered distribution $P(x)$ [Def.~\ref{def:filtered_distribution}] with a simple model $Q(x|\phi)$ that consider a single eigenvalue $\phi$.
The small amount of samples due to eigenvalues other than $\phi_0$ (i.e.~spurious phases), will however produce a small amount of bias in the resulting estimator.
We bound both the bias and the variance of the resulting estimator.
We calculate these bounds in Sec.~\ref{sec:m_projection} for a generic moment projection estimator, under the assumption that $P(x)$ is close to $Q(x|\phi_0)$, but without assuming any specific form for $P(x)$ and $Q(x|\phi)$.

In practice, the effect of noise on the outcome distribution is far more complex than global depolarizing noise, and can't be modelled with an explicit functional form.
Instead, we can mitigate the noise using explicit unbiasing, a method we developed in previous work \cite{dutkiewiczError2025}.
In section \ref{sec:explicit-unbiasing} we show that the moment projection estimator also works with explicit unbiasing, which we refer to as \emph{noise-unbiasing} in this work.

To calculate the scaling of these bounds with quantities like the cost of execution on a quantum device, we must first fix a family of parametrized distributions.
We achieve this by ignoring all phases except for the target phase $\phi_0$, after which a distribution naturally occurs from the chosen kernel function (Def.~\ref{def:kernel_function}) that we sample data from.
In Sec.~\ref{sec:gaussian_no_noise} and Sec.~\ref{sec:gaussian_with_gdn} we focus on a Gaussian kernel (Def.~\ref{def:gaussian_kernel_function}) in the case of no noise and global depolarizing noise respectively.
In Lemma~\ref{lem:gaussian-no-noise} and Lemma~\ref{lem:gaussian-gdn}, we estimate the first-order constant factor terms for both cases, and in Theorem~\ref{thm:fmpe-cost-nonoise} and Theorem~\ref{thm:fmpe-cost-gdn}), we propagate this to costs in a standard phase estimation model.

\subsection{Moment projection estimator}\label{sec:m_projection}

\topic{motivation: max-likelihood}
When a stochastic phenomenon producing samples $x\sim P(x)$ can be modelled exactly with a parametrized distribution -- i.e.~there exists $Q(x|\phi)$ which matches the true distribution $Q(x|\phi_0) = P(x)$ for some \emph{true value} of the parameter $\phi=\phi_0$ -- the asymptotically optimal estimator for $\phi_0$ is obtained by likelihood maximisation.
In quantum phase estimation, this is the case if we are promised the initial state is the eigenstate $\ket{\psi}=\ket{\phi_0}$, thus the spectral distribution $a(x) = \delta(x-\phi_0)$.
Under the knowledge of the noise distribution $u(x)$ we can then fully model the distribution $p(x)$ (Def.~\ref{def:noisy-distribution}) by
\begin{equation}
    q(x|\phi) = F f(x-\phi) + (1-F) u(x)
\end{equation}
We explored this setting in a previous work \cite{dutkiewiczError2025}, further relaxing the assumption that $u(x)$ is known and defining a noise-unbiased maximum-likelihood estimator based on probabilistic error cancellation circuits.

To extend these techniques to the case of a more complicated spectral distribution, we propose to give up exactly modelling $P(x)$.
Instead, we fit the same single-phase model $Q(x|\phi)$ to $P(x)$ only within a promise interval $\mathcal{D}$ (Def.~\ref{def:promise_interval}), where we know that $f(x-\phi_0)$ is the main contributor to $P(x)$.
The estimator we choose is the maximiser of the likelihood of the model $Q(x|\phi)$. However, as we do not expect $Q(x|\phi_0) = P(x)$ exactly this is not a canonical \emph{maximum-likelihood estimator}.
Instead, this estimator is known in information geometry \cite{amariMethods2000} and machine learning \cite{murphyMachine2012} as \emph{moment projection}, \emph{M-projection} or \emph{reverse-KL minimization} \cite{nielsenWHAT2018,tuananhleReverse2017}.
These names derive from the observation that $Q(x|\phi_0)$ is an orthogonal projection $P(x)$ onto the manifold defined by $Q(x|\phi)$, in a geometry defined by the reverse Kullback-Leibler (KL) divergence $D_\text{KL}(P(x) \lVert Q(x|\phi)) = \int P(x) \log\frac{P(x)}{Q(x|\phi)} \dd x$.
We use the term ``moment projection'' going forward

\begin{definition}[Moment projection estimator]
\label{def:m-projection-estimator}
    Let $Q(x|\phi)$ be a model distribution parametrized by $\phi\in\mathcal{D}_{\phi}$, and let $\{x_j\}_{j=1,...,M}$ be $M$ independent samples distributed according to $P(x)$. 
    The moment projection estimator is defined as
    \begin{align}
        \tilde\phi &= \argmax_{\phi\in\mathcal{D}_{\phi}} \ell(\phi | \{x_j\})\\
        \ell(\phi | \{x_j\}) &=\frac{1}{M}\sum_j \log Q(x_j|\phi)
    \end{align}
\end{definition}

We want to apply this estimator to the case where $P = Q(x|\phi_0) + h(x)$, with $|h(x)|$ sufficiently small.
The discrepancy between $Q(x|\phi_0)$ and $P(x)$ will result in a bias in the estimator $\tilde{\phi}$ of $\phi_0$, i.e.~$\phi_* := \lim_{M\to\infty}\E[\tilde{\phi}]\neq \phi_0$.
We want to study this in the $M\rightarrow\infty$ limit; let us first obtain functional forms for $\phi_*$ and the variance $\epsilon^2/M$ of the $M$-projection estimator for an arbitrary family of distributions $Q(x|\phi)$.

\begin{lemma}[Asymptotic distribution of the moment projection estimator]
\label{lem:m-projection-convergence}
Let $Q(x|\phi)$ be a model distribution on $\mathcal{D}$ parametrized by 
$\phi\in\mathcal{D}_{\phi}\subset\mathbb{R}$.
Let $\tilde\phi$ be the moment projection estimator defined in
Def.~\ref{def:m-projection-estimator}, using $M$ independent samples 
$x\in\mathcal{D}$ drawn according to $P(x)$.
Assume that $\log Q(x|\phi)$ is twice continuously differentiable in $\phi$, 
that the expectations of its first and second derivatives exist under $P(x)$, 
and that the minimizer of $D_{\mathrm{KL}}(P(x)||Q(x|\phi))$ is unique and lies in the interior of $\mathcal{D}_\phi$.
Then, in the limit $M\rightarrow\infty$, the estimator is asymptotically Gaussian:
the random variable $\sqrt{M}(\tilde\phi-\phi_*)$ converges in distribution to 
$\mathcal{N}(0,\epsilon_*^2)$, where
\begin{align}
\phi_*  &= \lim_{M\to\infty} \E[\tilde{\phi}] = \argmin_\phi D_{\mathrm{KL}}(P(x)||Q(x|\phi))
      = \argmax_\phi \int_{\mathcal{D}}\dd  x\, P(x)[\log Q(x|\phi)],\\
\epsilon_*^2  &= \lim_{M\to\infty} M \Var[\tilde\phi] =
\frac{
\int_{\mathcal{D}}\dd  x\, P(x)[(\partial_\phi \log Q(x|\phi_*))^2]
}{
\left(
\int_{\mathcal{D}}\dd  x\, P(x)[\partial_\phi^2 \log Q(x|\phi_*)]
\right)^2
}.
\end{align}
\end{lemma}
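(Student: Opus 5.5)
This is the classical misspecified M-estimator (White's quasi-maximum-likelihood) argument, specialised to the scalar parameter $\phi$. Write $\ell_M(\phi)=\frac1M\sum_j\log Q(x_j|\phi)$ for the sample log-likelihood and $L(\phi)=\int_{\mathcal{D}}\dd x\,P(x)\log Q(x|\phi)$ for its population counterpart. The plan has three steps: identify the limit point $\phi_*$, show consistency, and linearise the first-order condition around $\phi_*$.

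\textbf{Identifying $\phi_*$.} Since
\[
D_{\mathrm{KL}}(P\|Q(\cdot|\phi))=\int_{\mathcal{D}}P\log P-L(\phi),
\]
and the first term does not depend on $\phi$, we get $\argmin_\phi D_{\mathrm{KL}}=\argmax_\phi L$. This gives the two equivalent expressions for $\phi_*$. Because $\phi_*$ lies in the interior of $\mathcal{D}_\phi$ and $\log Q$ is $C^2$, the first-order condition $L'(\phi_*)=\int_{\mathcal{D}} P\,\partial_\phi\log Q(x|\phi_*)\,\dd x=0$ holds. For this we differentiate under the integral, which the assumed existence of the derivative expectations permits.

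\textbf{Consistency.} By the law of large numbers, $\ell_M(\phi)\to L(\phi)$ almost surely for each fixed $\phi$. To transfer this to the argmax, we need uniform convergence of $\ell_M$ to $L$ on a compact neighbourhood of $\phi_*$, together with uniqueness and well-separatedness of the maximiser. The standard Wald / argmax-continuous-mapping argument then gives $\tilde\phi\to\phi_*$ in probability. Uniform convergence follows from a uniform law of large numbers, given continuity in $\phi$ and a dominating integrable envelope.

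This is the main obstacle, because the statement's hypotheses (unique minimiser and existence of expectations) are slightly weaker than what a fully rigorous proof needs. I would make explicit that we also assume local domination of $\log Q$ and $\partial_\phi^2\log Q$ near $\phi_*$ by $P$-integrable functions. In the intended application, $\mathcal{D}$ is compact and $Q$ is a Gaussian, possibly plus a constant, bounded away from zero on $\mathcal{D}$, so these conditions hold trivially. Alternatively, since $\mathcal{D}_\phi\subset\mathbb{R}$ is one-dimensional, one could argue directly: on a small interval around $\phi_*$, $\ell_M'$ changes sign with probability tending to one.

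\textbf{Linearisation.} Since $\tilde\phi$ is eventually interior, $\ell_M'(\tilde\phi)=0$. A mean-value expansion gives
\[
0=\ell_M'(\phi_*)+\ell_M''(\bar\phi)(\tilde\phi-\phi_*)
\]
for some $\bar\phi$ between $\tilde\phi$ and $\phi_*$.

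For the first term, $\ell_M'(\phi_*)$ is an average of i.i.d.\ terms $\partial_\phi\log Q(x_j|\phi_*)$ with mean $L'(\phi_*)=0$ and variance $\int_{\mathcal{D}} P\,(\partial_\phi\log Q)^2$. The central limit theorem therefore gives
\[
\sqrt M\,\ell_M'(\phi_*)\Rightarrow\mathcal N\Bigl(0,\int_{\mathcal{D}} P\,(\partial_\phi\log Q)^2\Bigr).
\]

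For the second term, consistency together with local domination and uniform LLN for $\partial_\phi^2\log Q$ gives $\ell_M''(\bar\phi)\to J:=\int_{\mathcal{D}} P\,\partial_\phi^2\log Q(x|\phi_*)$. We need $J\neq0$; I would assume $J<0$, i.e.\ a strict local maximum, as implied by nondegeneracy of the unique maximiser.

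Slutsky's lemma then yields $\sqrt M(\tilde\phi-\phi_*)\Rightarrow\mathcal N(0,\epsilon_*^2)$ with the stated sandwich variance $\epsilon_*^2$.

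\textbf{Moment statements.} Finally, the identities $\lim_M\E[\tilde\phi]=\phi_*$ and $\lim_M M\Var[\tilde\phi]=\epsilon_*^2$ require uniform integrability of $\sqrt M(\tilde\phi-\phi_*)$. If $\mathcal{D}_\phi$ is bounded, $\tilde\phi$ is bounded, which immediately gives $\E[\tilde\phi]\to\phi_*$. For the variance I would either interpret it as the asymptotic variance, as is standard, or invoke a uniform-integrability assumption, noting this caveat explicitly.
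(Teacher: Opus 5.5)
Your proposal is correct and follows essentially the same route as the paper's proof (App.~\ref{app:m-projection-proof}, which proves the NME version and specialises to $r=1$, $\alpha=1$, $c=0$): stationarity of $\ell$ at $\tilde\phi$, an expansion around $\phi_*$, the law of large numbers for the second-derivative term, a CLT for the score (whose mean vanishes because $\phi_*$ maximises $\int_{\mathcal{D}} P\log Q$), and Slutsky. Your additions make rigorous several steps that the paper only asserts: the mean-value remainder (appropriate when only $C^2$ is assumed), consistency via a uniform LLN (for the global argmax this must hold over all of $\mathcal{D}_\phi$, not just a neighbourhood of $\phi_*$), the explicit $J\neq0$ condition, and the uniform-integrability caveat for the $\lim\E[\tilde\phi]$ and $\lim M\Var[\tilde\phi]$ claims.
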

This result is a relatively standard application of the central limit theorem; we prove this for completeness in App.~\ref{app:m-projection-proof}.

The main result of this section is a bound on the closeness of the mean $\phi_*$ and variance $\sigma$ of the moment projection estimator in the case where $P(x) = Q(x|\phi_0) + h(x)$ for some ``target'' $\phi_0$, with $\|h\|_1 = \int_\mathcal{D} |h(x)| \dd x$ sufficiently small.
In the phase estimation case, $\phi_0$ is the underlying phase we are trying to estimate, and the bias $\phi_* - \phi_0 \neq 0$ emerges from our incomplete modelling of the target distribution $P(x) \neq Q(x|\phi_0)$ [Def.~\ref{def:filtered_distribution}]. %, which may be partially optimized by biasing $\phi_*$.
The resulting bias can be bounded proportionally to the norm of the model error $\lVert h \rVert$,
whilst the variance can be linked back to the Fisher information of $Q(x|\phi_0)$.

\begin{lemma}
[Moment projection estimator for distributions close to the model]
\label{lem:m-projection-expansion}

Under the same assumptions as in Lemma~\ref{lem:m-projection-convergence},
let
$\phi_0$ be a hidden target parameter such that
\begin{equation}
P(x)=Q(x|\phi_0)+h(x),
\end{equation}
where the deviation $h$ is small in the $1$-norm
\begin{equation}
\|h\|_1=\int_{\mathcal D}|h(x)|\dd x.
\end{equation}
Assume furthermore that
$\partial_\phi^2D_{\mathrm{KL}}(P(x)||Q(x|\phi))>0$
for $\phi\in[\phi_*,\phi_0]$.
Then for $M\to\infty$, the asymptotic bias $|\phi_0 - \phi_*|$, and the variance $\epsilon_*^2/M$ in Lemma~\ref{lem:m-projection-convergence} satisfy
\begin{align}
%|\phi_*-\phi_0|
|\phi_0 - \phi_*|&=\left|\frac{
\int_{\mathcal{D}}\dd  x\, h(x) [\partial_\phi\log Q(x|\phi)]_{\phi=\phi_0}
}{
\mathcal I_0
}\right|
+O(\|h\|_1^2)\label{eq:m-proj-1st-order-bias}\\
&\le
\|h\|_1
\frac{
\max_{x\in\mathcal{D}} |[\partial_\phi\log Q(x|\phi)]_{\phi=\phi_0}|
}{
\mathcal I_0
}
+O(\|h\|_1^2),\label{eq:mproj-bias}
\\
\epsilon_*^2
&=
\frac{1}{\mathcal I_0}+O(\|h\|_1),\label{eq:mproj-var}
\end{align}
where
\begin{equation}
\label{eq:fisher-info}
\mathcal I_0
=
\int_{\mathcal D}\dd x \,
Q(x|\phi_0)
\left[\partial_\phi\log Q(x|\phi)\right]^2_{\phi=\phi_0}
\end{equation}
is the Fisher information of $Q(x|\phi_0)$.
\end{lemma}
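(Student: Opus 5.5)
We start from Lemma~\ref{lem:m-projection-convergence}, which identifies $\phi_*$ as the unique stationary point of
\begin{equation}
G(\phi):=\int_{\mathcal D}\dd x\,P(x)\,\partial_\phi\log Q(x|\phi).
\end{equation}
That is, $\phi_*$ is characterised by $G(\phi_*)=0$. We substitute $P=Q(\cdot|\phi_0)+h$ and split $G$ into two terms.

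The model term is $\int Q(x|\phi_0)\,\partial_\phi\log Q(x|\phi)\,\dd x$. At $\phi=\phi_0$ it equals $\partial_\phi\int_{\mathcal D} Q(x|\phi)\,\dd x=0$ by normalisation of $Q$ on $\mathcal D$. Hence
\begin{equation}
G(\phi_0)=\int_{\mathcal D} h(x)\,[\partial_\phi\log Q(x|\phi)]_{\phi=\phi_0}\,\dd x .
\end{equation}
We then apply the mean value theorem on $[\phi_*,\phi_0]$: $0=G(\phi_*)=G(\phi_0)+G'(\bar\phi)(\phi_*-\phi_0)$ for some $\bar\phi$ in that interval. Since $G'(\phi)=-\partial_\phi^2 D_{\mathrm{KL}}(P\|Q(\cdot|\phi))$, the assumed positivity of the KL curvature makes $G'(\bar\phi)\neq0$. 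This gives the exact identity
\begin{equation}
\phi_0-\phi_*=\frac{G(\phi_0)}{-G'(\bar\phi)}.
\end{equation}

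To reach the first-order expression, we expand the denominator. We write
\begin{equation}
-G'(\phi)=-\int Q(x|\phi_0)\,\partial^2_\phi\log Q(x|\phi)\,\dd x-\int h\,\partial_\phi^2\log Q(x|\phi)\,\dd x .
\end{equation}
At $\phi=\phi_0$ the first integral equals $\mathcal I_0$, by the standard identity $-\E_Q[\partial^2\log Q]=\E_Q[(\partial\log Q)^2]$; this identity again uses $\partial_\phi^2\int Q=0$. The second integral is bounded by $\|h\|_1\max_x|\partial^2_\phi\log Q|$, so it is $O(\|h\|_1)$. The shift from $\phi_0$ to $\bar\phi$ costs $O(|\phi_*-\phi_0|)$ by continuity of the second derivative.

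We also need $|\phi_*-\phi_0|=O(\|h\|_1)$ a priori. It follows from the exact identity, because $|G(\phi_0)|\le\|h\|_1 S$ and $-G'$ is bounded below by the positivity assumption. Therefore $-G'(\bar\phi)=\mathcal I_0+O(\|h\|_1)$, and
\begin{equation}
\phi_0-\phi_*=\frac{G(\phi_0)}{\mathcal I_0}+O(\|h\|_1^2).
\end{equation}
This is Eq.~\eqref{eq:m-proj-1st-order-bias}. Eq.~\eqref{eq:mproj-bias} then follows from Hölder's inequality $|\int h g|\le\|h\|_1\max|g|$.

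For the variance, we take the formula for $\epsilon_*^2$ in Lemma~\ref{lem:m-projection-convergence} and perturb its numerator and denominator in the same way. Replacing $P$ by $Q(\cdot|\phi_0)$ changes each integral by at most $\|h\|_1$ times a sup of the integrand. Replacing $\phi_*$ by $\phi_0$ changes each by $O(|\phi_*-\phi_0|)=O(\|h\|_1)$, using continuity in $\phi$. At $(Q(\cdot|\phi_0),\phi_0)$ the numerator equals $\mathcal I_0$ and the denominator equals $\mathcal I_0^2$. So $\epsilon_*^2=\mathcal I_0^{-1}+O(\|h\|_1)$, using $\mathcal I_0>0$ to expand the ratio.

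The main obstacle is making the $O(\cdot)$ terms rigorous. Each of them needs $\sup_{x\in\mathcal D}$ of the first and second $\phi$-derivatives of $\log Q$ to be finite and locally uniform in $\phi$ near $\phi_0$. We will state this as an implicit regularity assumption, which holds for the compact $\mathcal D$ and the Gaussian models used later. The step I would check most carefully is the a priori bound $|\phi_*-\phi_0|=O(\|h\|_1)$, since it is what allows $\bar\phi$ to be replaced by $\phi_0$. It requires a uniform lower bound on the KL curvature over $[\phi_*,\phi_0]$, and I will obtain this by continuity from its value $\mathcal I_0$ at $h=0$.
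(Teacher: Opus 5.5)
Your proposal is correct and follows essentially the same route as the paper: stationarity of $\int_{\mathcal D}P\log Q$, the normalisation identity removing the $Q(x|\phi_0)$ term, identifying the leading denominator with $\mathcal I_0$, and substituting $P\to Q(\cdot|\phi_0)$, $\phi_*\to\phi_0$ in the variance, with the mean value theorem in place of the paper's Taylor expansion and the a priori bound $|\phi_*-\phi_0|=O(\|h\|_1)$ made explicit where the paper simply assumes it. Two small remarks: your identity actually reads $\phi_*-\phi_0=G(\phi_0)/(-G'(\bar\phi))$, a sign slip that is harmless for the absolute-value statement; and the circularity in your a priori bound is avoided by using local continuity at $\phi_0$ to get a zero $\hat\phi$ of $G$ with $|\hat\phi-\phi_0|=O(\|h\|_1)$ on the same side of $\phi_0$ as $\phi_*$ (the side fixed by the sign of $G(\phi_0)$), while the curvature assumption makes $G$ strictly monotone on $[\phi_*,\phi_0]$ and so forbids $\hat\phi$ from lying strictly between $\phi_0$ and $\phi_*$, giving $|\phi_*-\phi_0|\le|\hat\phi-\phi_0|$.
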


We prove this lemma in appendix~\ref{app:m-projection-proof}.

\noindent In practice, the assumptions of this theorem are satisfied if $Q(x|\phi)$ is a reasonable model for $P(x)$.
This in turn requires that:
\begin{enumerate}
\item $Q(x|\phi)$ is close to $P(x)$ in a single region around $\phi_0$, and for values of $\phi$ far from $\phi_0$ the distributions are very different [this ensures a well-defined global minimum of $D_{\mathrm{KL}}(P(x)||Q(x|\phi))$]
\item $Q(x|\phi)$ is smooth [this ensures that $D_{\mathrm{KL}}(P(x)||Q(x|\phi))$ is convex around in a finite region around the optimum $\phi_*$]
\item $\lVert h \rVert$ is small enough [this ensures $\phi_0$ and $\phi_*$ are close enough, and both contained in the convex region $\mathcal{D}_\phi$].
\end{enumerate}

\subsection{Noise unbiasing for moment projection}
\label{sec:explicit-unbiasing}

The moment projection estimator requires a model $Q(x|\phi)$ of the output distribution; in the case of general noise, this model is not efficiently computable.
One possibility is to approximate the effect of general noise with a simplified model, e.g.~one that assumes global depolarizing noise.
This will generally result in an estimation error (bias) due to the incorrect modelling of the noise.
Though this error may not be terribly large, we desire a method that can provably remove the bias from noise in the asymptotic resource limit.
For expectation value estimation, this is achievable via probabilistic error cancellation (PEC)~\cite{temmeError2017, endoPractical2018}, which expands the target expectation value as a linear combination of expectation values that can be estimated on a noisy device.
In Ref.~\cite{dutkiewiczError2025}, we extended the PEC approach to QFT-based phase estimation, by constructing a noise-unbiased maximum likelihood estimator (EUMLE).
The EUMLE writes the output distribution of the inaccesible (noiseless) QPE circuit as a quasi-probabilistic sum $P(x)=\sum_{a=0}^{r-1} \alpha_a P_a(x)$, where the $P_a$ are the output distributions of accessible (noisy) circuits.
From this, one can derive a likelihood function to optimize over, giving an estimate of the phase that is asymptotically bias-free whenever the noisy decomposition is correct.
In this section, we extend this result to a result that uses the full moment projection estimator.

Care needs to be taken when implementing the noise-unbiasing procedure, as sampling from the distributions $P_a$ can yield data that has near zero probability to be sampled from $Q(x|\phi_0)$.
This causes any estimation of $\phi_0$ to be dominated by the cost of obtaining sufficient samples to cancel this effect out; the variance of the resulting estimator becomes decoupled from the width $\sigma$ of the kernel function.
This was circumvented in Ref.~\cite{dutkiewiczError2025} by regularization; one adds a small spurious constant to $Q(x|\phi_0)\rightarrow Q_c(x|\phi_0)= Q(x|\phi_0)+c$.
This works when our range of estimation covers the full circle, however here the smaller interval $\mathcal{D}$ reintroduces a bias.
To solve this problem, we could consider adding a small amount of uniform noise to the data itself $P\rightarrow P _c(x)= P+c$, so that this regularization term properly models the data~\footnote{$ Q_c(x_j|\phi)$ and $P_c(x)$ are not normalized; we add the effect of the regularization directly to the likelihood rather than sampling from either distribution, so normalizing these functions is unnecessary.}.
However, this noisy distribution comes with a large variance term.
To circumvent this, instead of randomly sampling the noise, we can add the expected contribution to the likelihood itself.

\begin{definition}[Noise-unbiased M-Projection Estimator (NME)]\label{def:NME}
Let $\alpha \in \RR^r$ be a vector of real coefficients and let 
$\{P_a(x)\}_{a=1}^r$ be probability distributions on $\mathcal{D} \subset \RR$.
Let $Q(x|\phi)$ be a model distribution parametrized by $\phi\in\mathcal{D}_{\phi}$.
Fix a regularization constant $c\geq 0$, and define $ Q_c(x_j|\phi)=Q(x|\phi)+c$
Let $\{(x_j,a_j)\}_{j=1,...,M}$ be $M$ independent samples generated by first sampling
$a_j\in\{1,...,r\}$ with probability $P(a_j = a)=|\alpha_{a}|/\|\alpha\|_1$, and then sampling
$x_j$ from $P_{a_j}(x)$. The noise-unbiased moment projection estimator estimates
\begin{align}
\tilde\phi &= \argmax_{\phi\in\mathcal{D}_{\phi}} \ell(\phi|\{x_j,a_j\}),\\
\ell(\phi|\{x_j,a_j\}) &=
\frac{\|\alpha\|_1}{M}\sum_{j=1}^M
\mathrm{sgn}(\alpha_{a_j})\log [ Q_c(x_j|\phi)]+c\int_{\mathcal{D}}\dd  x\log[ Q_c(x|\phi)].
\end{align}
\end{definition}
As one might expect, the NME is not significantly different from the EUMLE of Ref.~\cite{dutkiewiczError2025}, and one can derive similar results to that work.
\begin{lemma}
[Asymptotic distribution of NME]
\label{lem:nme-convergence}
Let $Q(x|\phi)$ be a model distribution on $\mathcal{D}$ parametrized by 
$\phi\in\mathcal{D}_{\phi}\subset\mathbb{R}$  and $\{P_{a}(x)\}_{a=1}^r$ be sampleable distributions on $\mathcal{D}$. Fix a regularization constant $c > 0$, and define the regularized distributions $ Q_c(x|\phi)=Q(x|\phi)+c$, $P_c(x)=\sum_{a=0}^{r-1} \alpha_a P_a(x)+c$.
Let $\tilde\phi$ be the estimator defined in 
Def.~\ref{def:NME}, using $M$ independent samples.
Assume that $\log Q_c(x|\phi)$ is twice continuously differentiable in $\phi$, that the expectations of its derivatives exist under each distribution $P_a$, and that the minimizer of $D_{\mathrm{KL}}(P_c(x)|| Q_c(x|\phi))$ is unique and lies in the interior of $\mathcal{D}_\phi$.
Then, in the limit $M\rightarrow\infty$, the estimator is asymptotically Gaussian: $\sqrt{M}(\tilde\phi-\phi_*)$ converges in distribution to $\mathcal{N}(0,\epsilon_*^2)$, where
\begin{align}
\phi_* &= \argmax_\phi \int_{\mathcal{D}}\dd  x\, P_c(x)[\log  Q_c(x|\phi)],\\
\epsilon_*^2 &\leq
\lVert \alpha \rVert_1^2 \frac{
\int_{\mathcal{D}}\dd  x\, R(x)
[\partial_\phi \log  Q_c(x|\phi)]_{\phi = \phi_*}^2
}{
\left(
\int_{\mathcal{D}}\dd  x\, P_c(x)[\partial_\phi^2 \log  Q_c(x|\phi)]_{\phi = \phi_*}
\right)^2
}. \label{eq:eume-variance}
\end{align}
Here, $R(x)$ is the marginal distribution $R(x) = \frac{1}{\lVert \alpha \rVert_1} \sum_{a = 1}^r|\alpha_a|P_a(x)$.
\end{lemma}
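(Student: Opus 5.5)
The plan is to treat the NME as an M-estimator with a per-sample objective and to follow the same central-limit argument used for Lemma~\ref{lem:m-projection-convergence}. For each sample pair $(a_j,x_j)$ define
\begin{equation}
    g(\phi;a,x) = \|\alpha\|_1\,\mathrm{sgn}(\alpha_a)\log Q_c(x|\phi) + c\int_{\mathcal{D}}\dd y\,\log Q_c(y|\phi),
\end{equation}
so that $\ell(\phi|\{x_j,a_j\}) = \frac{1}{M}\sum_j g(\phi;a_j,x_j)$, and the pairs are i.i.d.

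The first step is to compute the population objective. Since $a$ is drawn with probability $|\alpha_a|/\|\alpha\|_1$, the sign cancels the absolute value and the $\|\alpha\|_1$ prefactor, giving
\begin{equation}
    \E[g(\phi;a,x)] = \sum_a \alpha_a\int_{\mathcal{D}} P_a(x)\log Q_c(x|\phi)\,\dd x + c\int_{\mathcal{D}}\log Q_c(x|\phi)\,\dd x = \int_{\mathcal{D}} P_c(x)\log Q_c(x|\phi)\,\dd x .
\end{equation}
Because $Q_c>0$ (as $c>0$), $\log Q_c$ is bounded on compact $\mathcal{D}$. The uniform law of large numbers therefore gives consistency of $\tilde\phi$ to the unique maximizer $\phi_*$ of this function. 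Maximizing $\int P_c\log Q_c$ is the same as minimizing $D_{\mathrm{KL}}(P_c\|Q_c)$ up to a $\phi$-independent term, so the assumed uniqueness and interiority apply.

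The second step is the usual Taylor expansion of the score equation $\partial_\phi\ell(\tilde\phi)=0$ around $\phi_*$:
\begin{equation}
    0 = \partial_\phi\ell(\phi_*) + \partial_\phi^2\ell(\bar\phi)(\tilde\phi-\phi_*)
\end{equation}
for some $\bar\phi$ between $\tilde\phi$ and $\phi_*$. By consistency, continuity of the second derivative and the law of large numbers, $\partial_\phi^2\ell(\bar\phi)\to \int P_c\,\partial_\phi^2\log Q_c(x|\phi_*)\,\dd x$, again using the sign-cancellation identity. The term $\sqrt{M}\,\partial_\phi\ell(\phi_*)$ is a normalized sum of i.i.d. variables with mean zero, by first-order optimality of $\phi_*$. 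The CLT then gives a Gaussian limit with variance $\Var[\partial_\phi g(\phi_*;a,x)]$, and Slutsky's lemma yields $\sqrt{M}(\tilde\phi-\phi_*)\to\mathcal{N}(0,\epsilon_*^2)$ with $\epsilon_*^2$ equal to this variance divided by the squared Hessian.

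The final step is bounding that variance, and it is where the inequality rather than equality appears. The deterministic regularization term $c\int\partial_\phi\log Q_c$ is a constant shift and does not contribute. The variance is then at most the second moment
\begin{equation}
    \E\big[\|\alpha\|_1^2(\partial_\phi\log Q_c(x|\phi_*))^2\big] = \|\alpha\|_1^2\int_{\mathcal{D}} R(x)\,[\partial_\phi\log Q_c(x|\phi_*)]^2\,\dd x,
\end{equation}
because $\mathrm{sgn}^2=1$ and the marginal of $x$ is exactly $R$. This produces Eq.~\eqref{eq:eume-variance}.

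The main obstacle is the technical justification of the regularity conditions: uniform convergence for consistency and the Hessian limit. I would handle these as in Lemma~\ref{lem:m-projection-convergence}, leaning on compactness of $\mathcal{D}$ and the lower bound $Q_c\ge c>0$, which makes all derivatives of $\log Q_c$ bounded whenever $Q$ is smooth. One further point needs care: the Hessian $\int P_c\,\partial^2_\phi\log Q_c$ must be nonzero at $\phi_*$. This follows from uniqueness and interiority of the maximizer only under a nondegeneracy assumption, which I would state implicitly as part of the hypotheses.
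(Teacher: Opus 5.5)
Your proposal is correct and follows essentially the same route as the paper's proof: the sign-cancellation identity identifies the limiting objective $\int_{\mathcal{D}} P_c\log Q_c$, you Taylor-expand the score equation around $\phi_*$, apply the law of large numbers to the Hessian and the CLT to the score (the deterministic regularization term drops out of the variance), conclude with Slutsky, and bound the variance by the second moment under the marginal $R(x)$. Your explicit treatment of consistency via a uniform law of large numbers, and your requirement that the Hessian at $\phi_*$ be nonzero, supply details the paper only asserts or leaves implicit.
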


\begin{lemma}
[NME for distributions close to the model]
\label{lem:nme-expansion}
Under the same assumptions as in Lemma~\ref{lem:nme-convergence}, let $\phi_0$ be a hidden target parameter such that
\begin{equation}
\label{eq:h-nme}
\sum_{a=0}^{r-1} \alpha_a P_a(x)=Q(x|\phi_0)+h(x),
\end{equation}
where the deviation $h$ is small in the $1$-norm
\begin{equation}
\|h\|_1=\int_{\mathcal D}|h(x)|\dd x.
\end{equation}
Assume further that $\partial_\phi^2D_{\mathrm{KL}}(P(x)||Q(x|\phi))>0$ for $\phi\in[\phi_*,\phi_0]$.
Then the asymptotic bias $|\phi_*-\phi_0|$ and variance $\epsilon_*^2/M$ in Lemma~\ref{lem:nme-convergence} satisfy
\begin{align}\label{eq:mle-bias}
|\phi_*-\phi_0|
&\le
\|h\|_1
\frac{
\max_{x\in\mathcal{D}} [\partial_\phi\log  Q_c(x|\phi)]_{\phi=\phi_0}
}{
 \mathcal{I}_c
}
+O(\|h\|_1^2),
\\
\epsilon_*^2
&\le
\|\alpha\|_1^2
\left(\frac{
\max_{x\in\mathcal D}
[\partial_\phi\log  Q_c(x|\phi)]_{\phi=\phi_0}
}{
\mathcal{I}_c
}\right)^2
+O(\|h\|_1),
\end{align}
where
\begin{equation}
\label{eq:fisher-info-nme}
\mathcal{I}_c
=
\int_{\mathcal D} \dd x \,
Q_c(x|\phi_0)
\left[\partial_\phi\log  Q_c(x|\phi)\right]^2_{\phi=\phi_0},
\end{equation}
is the Fisher information of $Q_c(x|\phi_0)$ [Eq.~\eqref{eq:fisher-info}].
\end{lemma}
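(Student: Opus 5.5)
The plan mirrors the proof of Lemma~\ref{lem:m-projection-expansion}, now applied to the regularized pair $(P_c, Q_c)$ from Lemma~\ref{lem:nme-convergence}. Since $P_c(x)=Q(x|\phi_0)+h(x)+c = Q_c(x|\phi_0)+h(x)$, the regularization enters the data and the model identically, and the deviation between them is still exactly $h$. By Lemma~\ref{lem:nme-convergence}, $\phi_*$ solves the stationarity condition
\begin{equation}
G(\phi):=\int_{\mathcal D}\dd x\,\bigl[Q_c(x|\phi_0)+h(x)\bigr]\,\partial_\phi\log Q_c(x|\phi)=0 .
\end{equation}
The term $Q_c(x|\phi_0)\partial_\phi\log Q_c = \partial_\phi Q_c$ evaluated at $\phi_0$ integrates to $\partial_\phi\int_{\mathcal D}Q(x|\phi)\dd x$. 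This vanishes whenever $Q$ is normalized on $\mathcal D$ for every $\phi$, because $c|\mathcal D|$ does not depend on $\phi$. The fact that $c$ drops out here is exactly why the term $c\int_{\mathcal D}\log Q_c$ appears in the likelihood, so at $h=0$ the point $\phi_0$ is an exact stationary point.

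For the bias, I will Taylor expand $G$ around $\phi_0$: $0=G(\phi_*)=G(\phi_0)+G'(\xi)(\phi_*-\phi_0)$. Here $G(\phi_0)=\int h\,\partial_\phi\log Q_c|_{\phi_0}$. For $G'(\phi_0)$ I will use the standard identity
\begin{equation}
\int Q_c\,\partial_\phi^2\log Q_c=\partial_\phi^2\!\int Q_c-\int Q_c(\partial_\phi\log Q_c)^2=-\mathcal I_c,
\end{equation}
which gives $G'(\phi_0)=-\mathcal I_c+O(\|h\|_1)$. The assumption $\partial_\phi^2 D_{\mathrm{KL}}>0$ on $[\phi_*,\phi_0]$ ensures $G'$ does not vanish along the segment. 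Together with the fact that $|G(\phi_0)|=O(\|h\|_1)$, this shows $|\phi_*-\phi_0|=O(\|h\|_1)$. Substituting back yields $\phi_*-\phi_0=\mathcal I_c^{-1}\int h\,\partial_\phi\log Q_c|_{\phi_0}+O(\|h\|_1^2)$. Bounding the integrand by $\|h\|_1\max_x|\partial_\phi\log Q_c|$ then gives the bias statement.

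For the variance, I will start from Eq.~\eqref{eq:eume-variance}. The denominator is $(\mathcal I_c+O(\|h\|_1))^2$ by the same identity, after moving the evaluation point from $\phi_*$ to $\phi_0$ at a cost of $O(\|h\|_1)$ by smoothness. For the numerator, $R$ is a probability distribution, so $\int R\,(\partial_\phi\log Q_c)^2\le \max_x(\partial_\phi\log Q_c)^2$. We cannot use $\mathcal I_c$ here, because $R\neq Q_c$ in general. Replacing $\phi_*$ by $\phi_0$ again costs only $O(\|h\|_1)$, which gives $\epsilon_*^2\le\|\alpha\|_1^2 S_c^2\mathcal I_c^{-2}+O(\|h\|_1)$.

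The main obstacle is controlling the $O(\|h\|_1)$ remainders uniformly. This requires boundedness of $\partial_\phi^k\log Q_c$ for $k\le 3$ on $\mathcal D\times[\phi_*,\phi_0]$, which is where $c>0$ helps, since $Q_c\ge c$ prevents the score from blowing up. The same issue arises in confirming that normalization of $Q$ on $\mathcal D$ holds, or holds up to negligible boundary terms. If $Q$ is only approximately normalized on $\mathcal D$ (for example, a truncated Gaussian), I would absorb $\partial_\phi\int_{\mathcal D}Q$ into the definition of $Q$ by renormalizing it on $\mathcal D$. Alternatively, I would treat that term as an additional contribution to $h$.
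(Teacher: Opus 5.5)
Your proposal is correct and takes essentially the same route as the paper. The paper obtains the bias by rerunning the proof of Lemma~\ref{lem:m-projection-expansion} under the substitutions $Q\to Q_c$, $P\to P_c=Q_c(\cdot|\phi_0)+h$, where normalization of $Q$ on $\mathcal{D}$ kills the zeroth-order term and gives $-\mathcal{I}_c$ in the denominator; it bounds the variance numerator by $\max_x[\partial_\phi\log Q_c]^2$, using only that $R$ is normalized, and identifies the denominator as $\mathcal{I}_c+O(\|h\|_1)$, exactly as you do, so your additional care (absolute values on the score, the mean-value form of the expansion, the derivative bounds) only makes explicit what the paper leaves implicit.
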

We prove Lemma \ref{lem:nme-convergence} and Lemma~\ref{lem:nme-expansion} in Appendix~\ref{app:m-projection-proof}. 

\noindent While the EUMLE in Ref.~\cite{dutkiewiczError2025} was compatible with PEC alone, NME is compatible with a larger class of error mitigation methods.
This is because EUMLE required the quasiprobability to exactly match the model, i.e. $\sum_{a=0}^{r-1} \alpha_a P_a(x) = Q(x|\phi_0)$, and therefore the quasiprobability distribution needed to reconstruct the noiseless probability $P_0(x)$ exactly.
Instead, since NME allows for a small mismatch between the quasiprobability $\sum_{a=0}^{r-1} \alpha_a P_a(x)$ and the model $Q(x|\phi_0)$, it suffices to have an approximate linear decomposition $\sum_{a=0}^{r-1} \alpha_a P_a(x) \approx P_0(x)$. This can be achieved by any linear quantum error mitigation technique, including PEC but also methods such as zero-noise extrapolation or symmetry verification \cite{cai2021practical}, which reconstruct mitigated quantities as linear combinations of measurements from noisy circuits.
This can be treated as an additional contribution to $h$.
The asymptotic bias in Eq.~\eqref{eq:mle-bias} will then contain contributions from both the modelling error (the mismatch between the noiseless distribution $P_0$ and $Q(x|\phi_0)$) and the imperfect $\sum_{a=0}^{r-1} \alpha_a P_a$ reconstruction of $P_0$ produced by the QEM procedure.
We can separate these contributions to the error via the triangle inequality,
\begin{equation}
    \|h(x)\|_1 = \|\sum_{a=0}^{r-1} \alpha_a P_a(x) -Q(x|\phi_0)\| \leq \|\sum_{a=0}^{r-1} \alpha_a P_a(x) -P_0(x)\| + \| P_0(x) -Q(x|\phi_0)\|.
\end{equation}

\subsection{Moment projection for phase estimation}
\label{sec:m-projection-for-qpe}

In the previous section we focused on how the error in fitting a distribution with an imperfect parametric model propagates to a bias in the estimate of the parameter.
In this section, we apply these tools to QPE.
First, in Lemma \ref{lem:generating-filtered-samples} we prove that the underlying sampling input (Def.~\ref{def:classical-subroutine}) can be generated by QFT-based QPE techniques, and that we can construct an appropriate promise interval $\mathcal{D}$ under standard assumptions.
Then, in Lemma \ref{lem:generating-filtered-samples-pec}, we extend this procedure to show we can generate samples for the noise-unbiased estimator (Def.~\ref{def:NME}) under the same assumptions, given the ability to sample from the PEC-decomposition circuits for the same QFT-based QPE routine.
Finally, we define estimators for noiseless and noisy phase estimation, to which  we apply the results of the previous section.

\begin{lemma}[Generating filtered QPE samples]
\label{lem:generating-filtered-samples}
Let $U$ be a unitary with target phase $\phi_0$.
Assume an initial guess $\phi_{guess}$ of $\phi_0$ accurate up to $\Delta/3$, for $\Delta\geq \min_{j\neq 0}|\phi_j-\phi_0|$.
Then, given the ability to sample from $p(x)$ (Def.~\ref{def:noisy-distribution}) for $U$, a state $\ket{\psi}$ with ground state overlap $a_0 > 0$, and a kernel function $f$ using $t$ calls to a circuit implementation of $U$ per sample, one can construct a promise interval $\mathcal{D}$ and generate $M$ samples from $P(x)$ that satisfies the conditions of Def.~\ref{def:classical-subroutine} (i.e. one can execute the quantum subroutine), using on average $Mt/P_{A}$ calls to a circuit implementation of $U$, where $P_A$ is a lower bound on the probability of accepting a sample $P_{accept}$ given by
\begin{equation}
\label{eq:p_accept}
    P_A = Fa_0\int_{\phi_{guess}-\Delta/2}^{\phi_{guess}+\Delta/2}f(x-\phi_0)\dd x  +(1-F) \frac{|\mathcal{D}|}{2\pi}
    \geq
    Fa_0\int_{-\Delta/6}^{\Delta/6}f(x)\dd x  +(1-F) \frac{|\mathcal{D}|}{2\pi} 
\end{equation}
\end{lemma}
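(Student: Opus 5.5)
The proof is a construction followed by a rejection-sampling count. First I build the promise interval $\mathcal{D}$ explicitly as $\mathcal{D} = [\phi_{guess}-\Delta/2,\ \phi_{guess}+\Delta/2]$, so $|\mathcal{D}| = \Delta$. I then check the two conditions of Def.~\ref{def:promise_interval}, using the convention $d_{\text{in}} = |\mathcal{D}|/6 = \Delta/6$.

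\begin{itemize}
\item \emph{Target inside.} Since $|\phi_0-\phi_{guess}|<\Delta/3 = \Delta/2-\Delta/6$, the target phase $\phi_0$ lies in $\mathcal{D}$ at distance at least $\Delta/6$ from either edge.
\item \emph{Other phases outside.} For $j\neq 0$ the gap assumption gives $|\phi_j-\phi_0|\geq\Delta$. By the triangle inequality, $|\phi_j-\phi_{guess}| > \Delta-\Delta/3 = 2\Delta/3 = \Delta/2+\Delta/6$. So every other phase lies outside $\mathcal{D}$ with outer buffer at least $\Delta/6$.
\item \emph{Connectedness.} If $\mathcal{D}$ would wrap around $0$, I apply the global phase shift from the footnote to Def.~\ref{def:promise_interval}, which changes nothing else.
\end{itemize}

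Next come the samples. I draw $x\sim p(x)$ using $t$ calls to $U$ per shot, and keep $x$ only if $x\in\mathcal{D}$. By construction, the accepted samples are i.i.d.\ from $P(x)$ of Def.~\ref{def:filtered_distribution}, which is exactly the quantum subroutine of Def.~\ref{def:classical-subroutine}. The number of shots needed per accepted sample is geometric with mean $1/P_{accept}$, where $P_{accept}=\int_{\mathcal{D}}p(x)\,\dd x$. Hence $M$ samples cost on average $Mt/P_{accept}\le Mt/P_A$ calls, provided $P_{accept}\ge P_A$.

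The remaining step, and the only delicate one, is lower-bounding $P_{accept}$. I expand $p=F\sum_j a_j f(\cdot-\phi_j)+(1-F)u$.

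\begin{itemize}
\item \emph{Spurious phases.} The terms with $j\neq 0$ are non-negative, so dropping them only lowers the integral.
\item \emph{Target term.} This term gives exactly $Fa_0\int_{\phi_{guess}-\Delta/2}^{\phi_{guess}+\Delta/2}f(x-\phi_0)\,\dd x$.
\item \emph{Noise term.} This is the main obstacle. The bound requires $\int_{\mathcal{D}}u\geq|\mathcal{D}|/2\pi$, which does not hold for an arbitrary noise distribution $u$. I will handle it by stating the assumption explicitly: $u$ is uniform, as for global depolarizing noise combined with the random-phase technique, or more generally at least uniform on $\mathcal{D}$. Otherwise the noise contribution is simply dropped, which still lower-bounds $P_{accept}$ by the target term alone.
\end{itemize}

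Finally, for the second inequality in Eq.~\eqref{eq:p_accept}, I substitute $y=x-\phi_0$. The integration window becomes an interval containing $\phi_0$ with at least $\Delta/6$ on each side, by the inner-buffer computation above, so it contains $[-\Delta/6,\Delta/6]$. Non-negativity of $f$ then gives $\int_{\phi_{guess}-\Delta/2}^{\phi_{guess}+\Delta/2}f(x-\phi_0)\,\dd x\ \geq\ \int_{-\Delta/6}^{\Delta/6}f(y)\,\dd y$.
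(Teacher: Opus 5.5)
Your proposal is correct and follows the paper's proof step for step: the same interval $\mathcal{D}=[\phi_{guess}-\Delta/2,\phi_{guess}+\Delta/2]$, the same triangle-inequality check of the inner and outer buffers ($\Delta/6$ each), the same rejection-sampling count, and the same two positivity arguments (drop the $j\neq0$ terms, then shrink the integration window to $[\phi_0-\Delta/6,\phi_0+\Delta/6]$). Your caveat on the noise term is justified: the paper's proof silently sets $u(x)=1/2\pi$ (global depolarizing noise) when writing $P_{accept}$, so $P_{accept}\geq P_A$ does depend on that assumption, or on $\int_{\mathcal{D}}u\geq|\mathcal{D}|/2\pi$, exactly as you say.
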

\begin{proof}
    First, we will construct a promise interval that satisfies the properties in Def.~\ref{def:promise_interval}.
    Let $\mathcal{D}$ be an interval of size $|\mathcal{D}| = \Delta$ centered around $\phi_{guess}$, i.e. $\mathcal{D} = [\phi_{guess}-\Delta/2, \phi_{guess}+\Delta/2]$.
    By assumption $|\phi_0-\phi_{guess}|<\Delta/3$, and the first condition in Def.~\ref{def:promise_interval} is satisfied for an inner buffer $d_{\text{in}} = \Delta/6$.
    Using triangle inequality, and the assumption that $\forall_{j>0} |\phi_j - \phi_0| > \Delta$,
    we have
    \begin{equation}
        |\phi_j - \phi_{guess}| \geq |\phi_j-\phi_0| - |\phi_0 - \phi_{guess}| \geq \frac{2}{3}\Delta
    \end{equation}
    Therefore the second condition in Def.~\ref{def:promise_interval} is satisfied with an outer buffer $d_{\text{out}} =\min_{j\neq 0}|\phi_j-\phi_{guess}|-|\mathcal{D}|/2 \geq \Delta/6$.

    To obtain $M$ samples within the promise interval $\mathcal{D}$, we need to get $M' \geq M$ samples from $p(x)$.
    We can write probability of accepting each sample as
    \begin{equation}
      P_{accept} = \int_\mathcal{D} \dd x \,\ \left[F (a*f)(x) +(1-F) \frac{1}{2\pi} \right]
      = F \left[ \sum_j a_j \int_\mathcal{D} f(x-\phi_j) \dd x \right] +(1-F) \frac{|\mathcal{D}|}{2\pi}.
\end{equation}
Since $f$ is positive, we can bound it from below by neglecting the contributions of $\phi_{j\neq0}$ as
\begin{equation}
    P_{accept} \geq Fa_0 \int_\mathcal{D} f(x-\phi_0) \dd x +(1-F) \frac{|\mathcal{D}|}{2\pi}.
\end{equation}
Again using positivity of $f$, we can bound the integral in the expression above by an integral over a subset $[\phi_0-\Delta/6, \phi_0+\Delta/6] \subset \mathcal{D}$, yielding the desired bound.
\end{proof}

While Lemma~\ref{lem:generating-filtered-samples} establishes the sampling procedure required by the M-projection estimator, the noise-unbiased estimator (Definition~\ref{def:NME}) requires samples from a quasiprobability decomposition of the filtered distribution. The following lemma shows that such samples can be generated directly from a quasiprobability decomposition of the noisy QPE distribution by the same rejection-sampling procedure.

\begin{lemma}[Generating filtered QPE samples with PEC]
\label{lem:generating-filtered-samples-pec}
Let $U$ be a unitary with target phase $\phi_0$.
Assume an initial guess $\phi_{guess}$ of $\phi_0$ accurate up to $\Delta/3$, for $\Delta\geq \min_{j\neq 0}|\phi_j-\phi_0|$.
 Assume
a quasiprobability decomposition $p(x) = \sum_{a}\beta_a p_a(x)$ of the distribution $p(x)$ (Def.~\ref{def:noisy-distribution}), for $U$, a state $\ket{\psi}$ with ground state overlap $a_0 > 0$, and a kernel function $f$.
Then, given the ability to sample from the quasiprobability distibution, i.e. sample $a\sim\frac{|\beta_a|}{\|\beta\|_1}$ and sample from each $p_a(x)$ using $t$ calls to a circuit implementation of $U$ per sample, one can construct a promise interval $\mathcal{D}$ and generate $M$ samples from a quasiprobability decomposition $P(x) = \sum_{a}\alpha_a P_a(x)$
of $P(x)$ that satisfies the conditions of Def.~\ref{def:classical-subroutine} (i.e. one can execute the quantum subroutine), using on average $Mt/P'_{A}$ calls to a circuit implementation of $U$, where $P'_A$ is a lower bound on the probability of accepting a sample $P_{accept}$ given by
\begin{equation}
\label{eq:p_accept_pec}
    P'_A = \frac{\|\alpha\|_1}{\|\beta\|_1}P_A \geq \frac{1}{\|\beta\|_1}P_A.
\end{equation}
The weights $\alpha$ in the quasiprobability decomposition are given by
\begin{align}
    \label{eq:filtered-weights}
    \alpha_a &= \beta_a \frac{\int_\mathcal{D} p_a(x) \dd x }{\int_\mathcal{D} p(x) \dd x }
\end{align}
and their norm satisfies
\begin{align}\label{eq:filtered-weights-norm}
    \|\alpha\|_1 &=\sum_{a=0}^{r-1} |\beta_a| \frac{\int_\mathcal{D} p_a(x) \dd x }{\int_\mathcal{D} p(x) \dd x } \leq 1+\kappa\frac{\|\beta\|_1}{P_A},
\end{align}
where $P_A$ is the lower bound on the probability of acceptance for the target probability $p(x)$ given in Eq.~\eqref{eq:p_accept}, and $\kappa = \|\beta\|_1^{-1}\int_{\mathcal{D}}(\sum_{a=0}^{r-1}|\beta_a|p_a(x) - p(x))\dd x \in [0,1)$ is the excess sampling probability inside the promise interval.
\end{lemma}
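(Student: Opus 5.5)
The plan has three parts. I reuse the promise interval from Lemma~\ref{lem:generating-filtered-samples} and keep it unchanged. I then filter each quasiprobability branch by rejection and read off the induced decomposition of $P(x)$. Finally, I bound the acceptance probability and $\|\alpha\|_1$.

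\textbf{Promise interval.} I take $\mathcal{D}=[\phi_{guess}-\Delta/2,\phi_{guess}+\Delta/2]$ exactly as in the proof of Lemma~\ref{lem:generating-filtered-samples}. Both conditions of Def.~\ref{def:promise_interval} then hold with $d_{\text{in}},d_{\text{out}}\geq\Delta/6$, and nothing here depends on the sampling procedure.

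\textbf{Sampling procedure and induced decomposition.} The procedure is to draw $a\sim|\beta_a|/\|\beta\|_1$, draw $x\sim p_a$, and keep $(x,a)$ only if $x\in\mathcal{D}$. Write $Z_a=\int_\mathcal{D}p_a$ and $Z=\int_\mathcal{D}p$. The accepted pairs have joint density $|\beta_a|p_a(x)/\sum_b|\beta_b|Z_b$ on $\mathcal{D}$. Their marginal in $a$ is proportional to $|\beta_a|Z_a$, and conditioned on $a$, $x$ is distributed as the filtered branch $P_a(x)=p_a(x)/Z_a$.

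Setting $\alpha_a=\beta_aZ_a/Z$ gives the identity
\[
\sum_a\alpha_aP_a(x)=\frac{\sum_a\beta_ap_a(x)}{Z}=\frac{p(x)}{Z}=P(x)\quad\text{on }\mathcal{D}.
\]
Moreover, $|\alpha_a|/\|\alpha\|_1=|\beta_a|Z_a/\sum_b|\beta_b|Z_b$ is exactly the accepted marginal. So the accepted samples are precisely samples from the quasiprobability decomposition required by Def.~\ref{def:NME}, and this proves Eq.~\eqref{eq:filtered-weights}.

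\textbf{Acceptance probability.} It equals
\[
P_{accept}=\sum_a\frac{|\beta_a|}{\|\beta\|_1}Z_a=\frac{\|\alpha\|_1Z}{\|\beta\|_1}.
\]
Since $Z=\int_\mathcal{D}p\geq P_A$ by Lemma~\ref{lem:generating-filtered-samples}, this gives $P'_A=\|\alpha\|_1P_A/\|\beta\|_1$. The triangle inequality $\|\alpha\|_1\geq|\sum_a\alpha_a|=\int_\mathcal{D}P=1$ yields the second inequality in Eq.~\eqref{eq:p_accept_pec}. Each trial costs $t$ calls, so $M$ accepted samples cost on average $Mt/P'_A$ calls.

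\textbf{Norm bound.} This is the only step needing care. Write
\[
\|\alpha\|_1=\frac{1}{Z}\int_\mathcal{D}\sum_a|\beta_a|p_a=\frac{1}{Z}\Big(Z+\int_\mathcal{D}\Big(\sum_a|\beta_a|p_a-p\Big)\Big)=1+\frac{\kappa\|\beta\|_1}{Z},
\]
and then use $Z\geq P_A$ to obtain Eq.~\eqref{eq:filtered-weights-norm}.

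It remains to show $\kappa\in[0,1)$. Nonnegativity follows because $\sum_a|\beta_a|p_a\geq|\sum_a\beta_ap_a|=p$ pointwise. For the upper bound, $\kappa\leq\|\beta\|_1^{-1}\int_\mathcal{D}\sum_a|\beta_a|p_a\leq1$. The inequality is strict because $\int_\mathcal{D}p=Z>0$: the target-phase term alone gives $Z\geq Fa_0\int_{-\Delta/6}^{\Delta/6}f>0$ whenever $F>0$. I expect the only subtle point to be confirming that the sign information $\mathrm{sgn}(\alpha_a)=\mathrm{sgn}(\beta_a)$ is preserved. This holds because $Z_a,Z>0$; if $Z_a=0$ the branch never contributes and can be dropped.
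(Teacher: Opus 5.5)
Your proposal is correct and follows essentially the same route as the paper's proof: reuse the interval from Lemma~\ref{lem:generating-filtered-samples}, filter each branch by rejection to obtain $\alpha_a=\beta_a Z_a/Z$, compute $P_{accept}=\|\alpha\|_1 Z/\|\beta\|_1$, and rewrite $\|\alpha\|_1=1+\kappa\|\beta\|_1/Z$ with $Z\geq P_A$. Two of your checks go slightly beyond the paper, which states these facts without proof: that $\kappa\in[0,1)$, and that $\mathrm{sgn}(\alpha_a)=\mathrm{sgn}(\beta_a)$ once branches with $Z_a=0$ are dropped.
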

\begin{proof}
    We can construct the same promise interval $\mathcal{D}$ as in Lemma~\ref{lem:generating-filtered-samples}.
    
    First, we prove that the filtered target probability $P(x)$ admits the quasiprobability decomposition $P(x) = \sum_{a=0}^{r-1} \alpha_a P_a(x)$ with the weights defined in Eq.~\eqref{eq:filtered-weights} and that we can sample from it.
    Let $P_a$ be the distributions of  filtered samples from $p_a$, i.e.
    \begin{equation}
        P_a(x) = \begin{cases}
            \frac{p_a(x)}{\int_\mathcal{D}p_a(x)}, & x\in\mathcal{D}\\
            0, & x \notin \mathcal{D}.
        \end{cases}
    \end{equation}
    Then, for any $x\in \mathcal{D}$ we have
    \begin{equation}
        \sum_{a=0}^{r-1} \alpha_a P_a(x)=\frac{\sum_{a=0}^{r-1} \beta_a  p_a(x)}{\int_\mathcal{D}
        p(x) \dd x}=\frac{p(x)}{\int_\mathcal{D}p(x) \dd x}=P(x)
    \end{equation}
    which is the desired quasiprobability decomposition.
    We generate samples from this decomposition by first sampling $a\sim\frac{|\beta_a|}{\|\beta\|_1}$, then sampling $x\sim p_a(x)$, and accepting if $x\in \mathcal D$.
    The overall acceptance probability is
\begin{align}
        P_{accept} =  \frac{1}{\|\beta\|_1}\sum_{a=0}^{r-1} |\beta_a|\int_\mathcal{D} p_a(x) \dd x = \frac{1}{\|\beta\|_1}\sum_{a=0}^{r-1} |\alpha_a|\int_\mathcal{D} p(x) \dd x = \frac{\|\alpha\|_1}{\|\beta\|_1} \int_\mathcal{D} p(x) \dd x.
\end{align}
Conditioned on acceptance, the joint distribution of $(a,x)$ is, for $x\in\mathcal{D}$,
\begin{align}
    \Pr(a,x|\mathrm{accepted})
    &=
    \frac{1}{P_{accept}}\frac{|\beta_a|}{\|\beta\|_1} p_a(x)
    =
    \frac{1}{\frac{\|\alpha\|_1}{\|\beta\|_1} \int_\mathcal{D} p(x) \dd x}\frac{|\beta_a|\int_\mathcal{D} p_a(x) \dd x}{\|\beta\|_1} \frac{p_a(x)}{\int_\mathcal{D} p_a(x) \dd x}
    =
    \frac{|\alpha_a|}
         {\|\alpha\|_1}
    P_a(x).
\end{align}
This enables the desired sampling procedure for the quasiprobability decomposition of $P(x)$, since the accepted samples satisfy $\Pr(a) = \frac{|\alpha_a|}
         {\|\alpha\|_1}$ and $\Pr(x|a) = P_a(x)$.

Finally, we prove the claimed inequalities.
Since the target probability $p(x)$ and the promise interval $\mathcal{D}$ are the same as in Lemma~\ref{lem:generating-filtered-samples}, we have $\int_\mathcal{D} p(x) \dd x \geq P_A$.
Substituting this bound into the expression for the acceptance probability gives
\begin{equation}
    P_{accept} \geq \frac{\|\alpha\|_1}{\|\beta\|_1}P_A = P'_A.
\end{equation}
    The lower bound $\|\alpha\|_1 \geq 1$ used in Eq.~\eqref{eq:p_accept_pec} is a straightforward consequence of the normalisation of distributions $p(x)$, $p_a(x)$.
    To prove the desired upper bound on the norm $\|\alpha\|_1$, we substitute in the definition of $\kappa$,
    \begin{align}
        \|\alpha\|_1 = \frac{1}{\int_{\mathcal{D}}p(x)\dd x}\sum_{a=0}^{r-1}|\beta_a| \int_{\mathcal{D}}p_a(x) \dd x
        = 1 + \frac{\kappa \|\beta\|_1}{\int_{\mathcal{D}}p(x) \dd x}
    \end{align}
    and use $\int_{\mathcal{D}}p(x) \geq P_A$.
\end{proof}

The quantity $\kappa$ introduced above measures how much of the quasiprobability overhead remains after filtering, and therefore determines the error mitigation overhead in Theorem~\ref{thm:fnmpe-cost-pauli-informal}.
It depends both on the quasiprobability distribution and the location of the promise interval $\mathcal{D}$.
Since
\begin{align}
    \kappa =
    \frac{1}{\|\beta\|_1}\int_{\mathcal{D}}\left(\sum_{a=0}^{r-1}|\beta_a|p_a(x)-p(x)\right)\dd x
    =
    \frac{1}{\|\beta\|_1}\sum_{a=0}^{r-1}(|\beta_a|-\beta_a)\int_{\mathcal{D}}p_a(x)\dd x,
\end{align}
we can equivalently interpret $\kappa$ as the average probability mass inside $\mathcal{D}$ contributed by the quasiprobability terms with negative coefficients.
To gain intuition for the meaning of $\kappa$, we analyze two limiting cases and the example of global depolarising noise discussed earlier.
%(1)
First, suppose all the negative contributions are entirely supported outside of $\mathcal{D}$.
Then $\kappa = 0$ and $\|\alpha\|_1 = 1$, so filtering completely removes the quasiprobability overhead, at the expense of the lowest acceptance probability.
%(2)
Conversely, if all distributions $p_a(x)$ are fully supported on $\mathcal{D}$, then every sample is accepted, but $\kappa = 1-\|\beta\|_1^{-1}$ and $\|\alpha\|_1 = \|\beta\|_1$.
In this case filtering introduces no rejection overhead, but also provides no reduction of the quasiprobability overhead.
%(3)
Finally, for global depolarising noise, the negative component is uniform %($p_1(x) = \frac{1}{2\pi}$, $p_0(x) = F p(x) + (1-F)p_1(x)$).
and $\|\alpha\|_1 = 1 + O(\|\beta\|_1 \frac{|\mathcal{D}|}{2\pi})$,
so the quasiprobability overhead is substantially reduced when $|\mathcal{D}| \ll 2\pi$.

% \begin{align}
%     p_1(x) = \frac{1}{2\pi}\\
%     p_0(x) = F p(x) + (1-F)p_1(x)\\
%     \beta = [1/F, -(1/F-1)]\\
%     \|\beta\| = 2/F -1\\
%     \kappa = 2|\beta_1|/\|\beta\| |D|/2\pi = \frac{1-F}{1-F/2} |D|/2\pi\\
%     \alpha = [1 +(1/F-1)|D|/2\pi/\int p, -(1/F -1)|D|/2\pi/\int p]\\
%     \|\alpha\|=1+2(1/F)|D|/2\pi/\int p
% \end{align}

With Lemmas~\ref{lem:generating-filtered-samples} and \ref{lem:generating-filtered-samples-pec} in hand, we have established that the required sampling subroutine can be implemented in the ideal setting, in the presence of global depolarising noise, and under arbitrary noise using PEC.
We now construct estimators for phase estimation which satisfy the assumptions of Lemmas~\ref{lem:m-projection-expansion} and \ref{lem:nme-expansion}.
As described in Sec.~\ref{sec:background}, these depend on a choice of kernel function $f$ (Def.~\ref{def:kernel_function}); we leave this free for now, but will consider Gaussian kernels in the rest of this section (Sections~\ref{sec:gaussian_no_noise}, \ref{sec:gaussian_with_gdn}, \ref{sec:gaussian_unbiasing}).
In the noiseless case, the probability distribution $P(x)$ is a combination of contributions from different phases, which we approximate by a model $Q(x|\phi)$ which assumes a single phase.
The mismatch $P(x)-Q(x|\phi) = h(x)$ then comes from the signal due to residual phases; within the promise interval this contribution is small, which we investigate in the next sections.
Here, we introduce the models $Q(x|\phi)$ and related moment projection phase estimators, with different assumptions about noise. Starting from the noiseless case:

\begin{definition}
\label{def:m-projection-phase-estimator}
    [Filtered Moment projection Phase Estimator (FMPE), noiseless case] 
    Let $f$ be a kernel function (Def.~\ref{def:kernel_function}),
    $\mathcal{D}$ be a promise interval (Def.~\ref{def:promise_interval}),
    $p(x)$ be the distribution in Def.~\ref{def:noisy-distribution} with kernel function $f$, fidelity $F =1$ and ground state overlap $a_0$,
    and $P(x)$ be the corresponding filtered distribution on $\mathcal{D}$ (Def.~\ref{def:filtered_distribution}).
    In the noiseless case, the FMPE is moment projection estimator (Def.~\ref{def:m-projection-estimator}) with samples $x\sim P(x)$ and a model distribution
    \begin{align} \label{eq:noiseless-model}
        Q(x|\phi) = \frac{f(x-\phi)}{\int_{\mathcal{D}}f(x-\phi)\dd x \,}.
    \end{align}
\end{definition}

If instead the circuit is affected by global depolarising noise, we can model it exactly by adding a constant noise level of $(1-F)\frac{1}{2\pi}$ to the model for $p(x)$ and normalizing properly after filtering.
This yields a new estimator, where again the mismatch $h$ only comes from the spurious phases:
\begin{definition}
\label{def:m-projection-phase-estimator-with-gdn}
    [FMPE, assuming global depolarising noise] 
    Let $f$ be a kernel function (Def.~\ref{def:kernel_function}),
    $\mathcal{D}$ be a promise interval (Def.~\ref{def:promise_interval}),
    $p(x)$ be a noisy distribution in Def.~\ref{def:noisy-distribution} with kernel function $f$, fidelity $F$ and ground state overlap $a_0$,
    and $P(x)$ be the corresponding filtered distribution on $\mathcal{D}$ (Def.~\ref{def:filtered_distribution}).
    Assuming global depolarising noise, FMPE is moment projection estimator (Def.~\ref{def:m-projection-estimator}) with samples $x\sim P(x)$ and a model distribution
    \begin{align} \label{eq:qphi_with_noise}
        Q(x|\phi) = \frac{Fa_0f(x-\phi) +(1-F)\frac{1}{2\pi}}{F a_0\int_{\mathcal{D}}f(x-\phi)\dd x + (1-F)\frac{|\mathcal{D}|}{2\pi}}.
    \end{align}
\end{definition}

In Figure~\ref{fig:m-projection} we give a schematic representation of this estimator.

\noindent The case of general noise cannot be simply modelled by adding a term in $Q(x|\phi)$, as modelling general noise would amount to simulating the full quantum circuit. 
Instead, we use the noise-unbiased moment projection estimator introduced above:

\begin{definition}
\label{def:m-projection-phase-estimator-unbiased}
    [Noise-Unbiased Filtered Moment projection Phase Estimator (NU-FMPE)] 
    Let $f$ be a kernel function (Def.~\ref{def:kernel_function}),
    $\mathcal{D}$ be a promise interval (Def.~\ref{def:promise_interval}).
    Let the coefficients $\alpha \in \RR^r$ and probability distributions on $\mathcal{D}$ $\{P_a(x)\}_{a=1}^r$ be such that $P(x) = \sum_{a=0}^{r-1} \alpha_a P_a(x)$ is the distribution in Def.~\ref{def:filtered_distribution} with kernel function $f$, fidelity $F=1$ and ground state overlap $a_0$.
    The NU-FMPE is NME (Def.~\ref{def:NME}) with
    samples $(a,x)\sim \frac{|\alpha_a|}{\|\alpha\|_1}P_a(x)$ and
    the noiseless model distribution from Eq.~\eqref{eq:noiseless-model}.
    % \begin{align}
    %     Q(x|\phi) = \frac{f(x-\phi)}{\int_{\mathcal{D}}f(x-\phi)\dd x}.
    % \end{align}
\end{definition}

\begin{figure}
    \centering
    \includegraphics[width=0.55\linewidth]{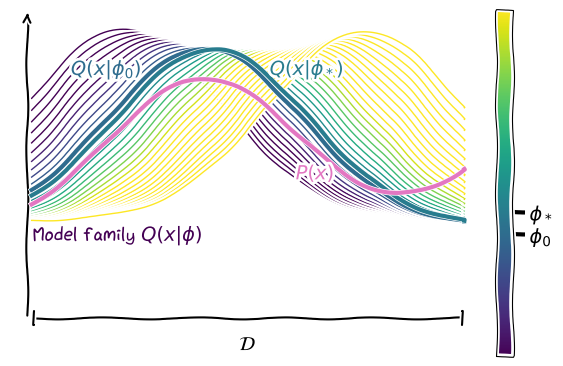}
    \caption{Schematic representation of the FMPE (Def.~\ref{def:m-projection-phase-estimator-with-gdn}): among a family of models $Q(x|\phi)$ [Eq.~\eqref{eq:qphi_with_noise}, colored as per the colorbar based on the value of the parameter $\phi$] supported on $\mathcal{D}$, the one that minimizes the inverse KL divergence with $P(x)$ (Def.~\ref{def:filtered_distribution}) is $Q(x|\phi_*)$.
    The target distribution $P(x)$ is the final result of the filtered-QPE scheme represented in Fig.~\ref{fig:scheme}.
    We also highlight the model distribution for the ideal value of $\phi=\phi_0$: $Q(x|\phi_0)$ -- the difference between $\phi_0$ and the optimal parameter $\phi_*$ is the bias of the FMPE.
    }
    \label{fig:m-projection}
\end{figure}

\subsection{Gaussian kernel and no noise}
\label{sec:gaussian_no_noise}

In order to apply Lemma~\ref{lem:m-projection-expansion} to a real phase estimation problem, we need to define a noise model and a kernel function.
We first consider the noiseless setting, and choose a Gaussian kernel.
This allows us to directly compare our moment projection estimator to the sample mean estimator of Ref.~\cite{wang2024faster}.
For simplicity of notation, we use the following shorthand for the Gaussian and its integral on the filtering interval $\mathcal{D}$
\begin{align}
g_\sigma(x) & = \frac{1}{\sqrt{2\pi}\sigma} e^{-\frac{x^2}{2\sigma^2}},\\
G_\sigma(\phi) & = \int_\mathcal{D} g_\sigma(x-\phi)\dd x.
\end{align}

\topic{explicit difference between estimators}
Let us first give some intuition into how the M-projector differs from a simple mean estimator $\bar{x} = M^{-1}\sum_j x_j$.
In the case where the domain of the filtering is $[-\infty, \infty]$, the sample mean estimator matches the moment projection estimator with $Q(x|\phi) = g_\sigma(x-\phi)$.
Taking a finite filtering interval $\mathcal{D}$,
\begin{equation}
\label{eq:gaussian-prop-dist-no-noise}
    P(x) = \frac{\sum_j a_j g_\sigma(x-\phi_j)}{\sum_j a_j G_\sigma(x-\phi_j)},
\end{equation}
modifies the model for the moment projection estimator to
\begin{equation}
\label{eq:gaussian-model-no-noise}
    Q(x|\phi) = \frac{g_\sigma(x-\phi)}{G_\sigma(\phi)} \text{ for } x\in \mathcal{D}.
\end{equation}
The objective function we have to maximize (log-likelihood $\ell$ of the model) is then
\begin{equation}
    \ell(\phi |\{x_j\})
     = -\frac{1}{M}\sum_j\frac{(x_j-\phi)^2}{2\sigma^2} 
     - \log G_\sigma(\phi)
\end{equation}
Maximising this leads to
\begin{equation}
\label{eq:estimator-vs-mean-gaussian-no-noise}
    \tilde\phi = 
    \bar{x} + \sigma^2\frac{g_\sigma(\tilde\phi-\phi_{guess}-\frac{|\mathcal{D}|}{2})-g_\sigma(\tilde\phi-\phi_{guess}+\frac{|\mathcal{D}|}{2})}{G_\sigma(\tilde\phi)}.
    %\sigma^2 \frac{\partial_\phi G_\sigma(\tilde\phi)}{G_\sigma(\tilde\phi)}
\end{equation} 
The second term is an additional correction due to the normalization term, which depends on $\phi$ but not on the samples.
This clarifies explicitly the difference between the mean estimator and the moment projection estimator for a filtered Gaussian model.

If the probability distribution of samples matches the model $P(x) = Q(x|\phi_0)$ (i.e., for the case of QPE, if we do not have spurious phases), the moment projection estimator matches the maximum-likelihood estimator, which we know to be unbiased.
This implies the bias of the mean estimator in this case is precisely the second term of Eq.~\eqref{eq:estimator-vs-mean-gaussian-no-noise}.
This is exponentially small in $\sigma^{-1}$, as long as $\phi_0$ is contained in the promise interval with inner buffer (Def.~\ref{def:promise_interval}).

In the presence of spurious phases, the moment projection estimator also picks up a bias. 
We characterise the bias and variance of this estimator in the following lemma.
\begin{lemma} \label{lem:gaussian-no-noise}
    Consider the FMPE (Def.~\ref{def:m-projection-phase-estimator}) with a Gaussian kernel function $f_\sigma(x) \propto g_\sigma(x)$ (Def.~\ref{def:gaussian_kernel_function} with $\epsilon_{\text{synth}} = 0$) and ground state overlap $a_0$.
    Assume that the $\mathcal{D}$ is a promise interval (Def.~\ref{def:promise_interval}) with inner buffer $d_{\text{in}} = |\mathcal{D}|/6$ and outer buffer $d_{\text{out}} = d$, i.e. target phase $\phi_0$ falls within the filtering interval $\mathcal{D} = [\phi_{guess}-|\mathcal{D}|/2, \phi_{guess}+|\mathcal{D}|/2]$,
    \begin{equation}
    \label{eq:phi0_buffer-fixed}
        |\phi_0 - \phi_{guess}| \leq |\mathcal{D}|/3,
    \end{equation}
    and further that all $\phi_j$ for $j\neq0$ are sufficiently far from the filtering region
    \begin{equation}
    \label{eq:phi1_buffer-fixed}
        \min_{j\neq0} \max_{x\in \mathcal{D}}|\phi_j -x|\geq d.
    \end{equation}
    Then, for any $\sigma \leq |\mathcal{D}|/6$,
    the asymptotic bias and variance of the estimator in Lemma~\ref{lem:m-projection-convergence} satisfy
    \begin{align}
    \label{eq:nonoise-bias-bound}
    % |\phi-\phi_*| &\leq g_\sigma(d) \times \frac{1-a_0}{a_0} |\mathcal{D}|^2 \times C_b
    % % F_1\left(\frac{|\mathcal{D}|}{\sigma}\right)
    % + O(|\mathcal{D}|^2 g_\sigma(d)^2)\\
    %|\phi_0-\phi_*|
    \lim_{M\to\infty} |b|
    &\leq 4\sigma^2 g_\sigma(d) \times \frac{1-a_0}{a_0} \frac{|\mathcal{D}|}{d} + O\big( g^2_\sigma(d)\big),\\
    \lim_{M\to\infty} M \Var[\tilde\phi] &\leq 2\sigma^2 
    %\times F_2\left(\frac{|\mathcal{D}|}{\sigma}\right)
    + O\big(g_\sigma(d)\big).
    \label{eq:nonoise-var-bound}
    \end{align}
\end{lemma}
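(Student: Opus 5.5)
The plan is to apply Lemma~\ref{lem:m-projection-expansion} with the model $Q(x|\phi)=g_\sigma(x-\phi)/G_\sigma(\phi)$ of Eq.~\eqref{eq:gaussian-model-no-noise} and the true distribution $P(x)$ of Eq.~\eqref{eq:gaussian-prop-dist-no-noise}. There are three quantities to control: the model error $h=P-Q(\cdot|\phi_0)$ in $L^1(\mathcal{D})$, the score $\partial_\phi\log Q$ at $\phi_0$, and the Fisher information $\mathcal{I}_0$.

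\textbf{Model error.} Write $P(x)=\big(a_0g_\sigma(x-\phi_0)+s(x)\big)/\big(a_0G_\sigma(\phi_0)+S\big)$, where $s(x)=\sum_{j\neq0}a_jg_\sigma(x-\phi_j)$ and $S=\int_{\mathcal{D}}s$. Then
\begin{equation}
h(x)=\frac{s(x)-S\,Q(x|\phi_0)}{a_0G_\sigma(\phi_0)+S},
\end{equation}
so $\|h\|_1\le 2S/(a_0G_\sigma(\phi_0))$. By the outer-buffer condition Eq.~\eqref{eq:phi1_buffer-fixed}, every $x\in\mathcal{D}$ is at distance at least $d$ from each spurious $\phi_j$. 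Hence $s(x)\le(1-a_0)g_\sigma(d)$, which gives $S\le(1-a_0)|\mathcal{D}|g_\sigma(d)$. For the denominator, the inner buffer $|\mathcal{D}|/6\ge\sigma$ gives $G_\sigma(\phi_0)\ge\int_{-\sigma}^{\sigma}g_\sigma\approx0.68$, and in fact $1-O(e^{-\cdot})$ for smaller $\sigma$. Together these yield $\|h\|_1\lesssim \frac{1-a_0}{a_0}|\mathcal{D}|\,g_\sigma(d)$, up to an $O(1)$ constant.

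\textbf{Score and Fisher information.} Differentiating,
\begin{equation}
\partial_\phi\log Q(x|\phi)=\frac{x-\phi}{\sigma^2}-\frac{G_\sigma'(\phi)}{G_\sigma(\phi)},
\end{equation}
where $G_\sigma'(\phi)=g_\sigma(a-\phi)-g_\sigma(b-\phi)$ for $\mathcal{D}=[a,b]$. For the bias I will not use the crude bound in Eq.~\eqref{eq:mproj-bias}, since $\max_x|x-\phi_0|/\sigma^2\sim|\mathcal{D}|/\sigma^2$ would give the wrong power. Instead I will use the exact first-order expression Eq.~\eqref{eq:m-proj-1st-order-bias}, $\int_{\mathcal{D}}h\,\partial_\phi\log Q$. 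The contribution $S\,Q$ integrates to zero against the score, because the score has zero mean under $Q(\cdot|\phi_0)$. So only $\int s(x)\,\partial_\phi\log Q\,dx/(a_0G_\sigma+S)$ remains. Since $|x-\phi_0|\le|\mathcal{D}|$, this is of order $(1-a_0)g_\sigma(d)|\mathcal{D}|\cdot|\mathcal{D}|/\sigma^2 \cdot 1/a_0$.

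To obtain the claimed factor $|\mathcal{D}|/d$ rather than $|\mathcal{D}|^2/\sigma^2$, I expect to need a sharper estimate of $\int_{\mathcal{D}}g_\sigma(x-\phi_j)(x-\phi_0)\,dx$. The Gaussian tail from a distance-$d$ point decays with length scale $\sigma^2/d$. So $\int_{\mathcal{D}} g_\sigma(x-\phi_j)\,dx\lesssim g_\sigma(d)\,\sigma^2/d$, using the Mills-ratio bound $\int_d^\infty g_\sigma\le \sigma^2 g_\sigma(d)/d$, and the weight $|x-\phi_0|$ is at most $|\mathcal{D}|$ in the relevant region. This gives $|\int s\,\partial_\phi\log Q|\lesssim(1-a_0)g_\sigma(d)\frac{\sigma^2}{d}\frac{|\mathcal{D}|}{\sigma^2}$, plus the exponentially small $G'/G$ term.

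For the Fisher information, $\mathcal{I}_0=\mathrm{Var}_{Q}[x]/\sigma^4$. This is the variance of a Gaussian truncated to an interval containing $[\phi_0-\sigma,\phi_0+\sigma]$, which is at least some constant times $\sigma^2$. Taking truncation at $\pm\sigma$ as the worst case, the variance is about $0.29\sigma^2$; with the true buffer $|\mathcal{D}|/6\ge\sigma$ it is larger, at least $\sigma^2/2$. Hence $\mathcal{I}_0^{-1}\le 2\sigma^2$, which is exactly the variance bound Eq.~\eqref{eq:nonoise-var-bound} via Eq.~\eqref{eq:mproj-var}, with the correction $O(\|h\|_1)=O(g_\sigma(d))$. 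Combining, the bias is at most $2\sigma^2\cdot 2\frac{1-a_0}{a_0}g_\sigma(d)\frac{|\mathcal{D}|}{d}$, i.e. $4\sigma^2g_\sigma(d)\frac{1-a_0}{a_0}\frac{|\mathcal{D}|}{d}$. The second-order remainder $O(\|h\|_1^2)=O(g_\sigma(d)^2)$ is absorbed since $|\mathcal{D}|,a_0$ are held fixed.

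\textbf{Main obstacle.} The constants are the hard part. I need a clean lower bound $\mathrm{Var}_Q\ge\sigma^2/2$ for the truncated Gaussian, given only an asymmetric inner buffer of at least $|\mathcal{D}|/6\ge\sigma$. I also need the tail-integral estimate that converts one factor $|\mathcal{D}|/\sigma^2$ into $1/d$. If the variance constant fails at buffer exactly $\sigma$, I would exploit that one side of the interval has room $\ge|\mathcal{D}|/2\ge 3\sigma$, which only increases the truncated variance. I would also verify the convexity assumption of Lemma~\ref{lem:m-projection-expansion}: $\partial_\phi^2 D_{\rm KL}=1/\sigma^2-\partial_\phi^2\log G_\sigma>0$, which holds because $\log G_\sigma$ is concave (log-concavity of the Gaussian CDF difference).
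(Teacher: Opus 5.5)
Your proposal is correct and is essentially the paper's proof. Both apply Lemma~\ref{lem:m-projection-expansion}, bound the spurious mass by the Mills-ratio estimate $\int_{\mathcal{D}}g_\sigma(x-\phi_j)\,\dd x\le\sigma^2g_\sigma(d)/d$, bound the score by $O(|\mathcal{D}|/\sigma^2)$, and lower-bound the Fisher information using the inner buffer. Your identity $\mathcal{I}_0=\mathrm{Var}_{Q}[x]/\sigma^4$ is the same quantity the paper evaluates as $\mathcal{N}^{-1}\big(\int_{\mathcal{D}}(q')^2/q-\mathcal{N}'^2/\mathcal{N}\big)$. The only refinement is that you discard the $-S\,Q$ part of $h$ by orthogonality to the score, which saves the factor $2$ in the paper's $\|h\|_1\le 2\|u\|/\mathcal{N}$. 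The crude bound Eq.~\eqref{eq:mproj-bias} does not actually fail: it already yields $|\mathcal{D}|/d$ once $S$ is bounded by Mills' ratio rather than pointwise, and that is exactly what the paper does.

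One slip in your side check of convexity: $\partial_\phi^2D_{\mathrm{KL}}=\sigma^{-2}+(\log G_\sigma)''$, with a plus sign. Concavity of $\log G_\sigma$ therefore works against you rather than for you. Positivity follows instead from your own exponential-family identity, since this expression equals $\mathrm{Var}_{Q(\cdot|\phi)}[x]/\sigma^4>0$.
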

The proof of this theorem is given in Appendix~\ref{app:proof-of-gaussian-cors}. 
Our choice of the buffer between the ground state eigenphase and interval edges [Eq.~\eqref{eq:phi0_buffer-fixed}] is artificial; in principle this can be removed entirely without affecting the asymptotic scaling of our estimator with $\sigma$ and $a_0$.
The non-zero buffer between the spurious phases $\phi_j$ and the filtering interval $\mathcal{D}$ [Eq.~\eqref{eq:phi1_buffer-fixed}] is necessary however to accommodate the fact that our model distribution $Q(x|\phi)$ explicitly does not consider any additional phases; when $d=0$, the $M$-projection estimator remains biased for arbitrarily small $\sigma$.

\topic{comparison of biases}
In order to characterize the performance of the moment projection estimator beyond upper bounds and compare it to the mean estimator, we integrate numerically the first-order bias in Eq.~\eqref{eq:m-proj-1st-order-bias}, ignoring the $O(|\mathcal{D}|^2 g_\sigma(d)^2)$ correction.
We observe in Fig.~\ref{fig:no-noise} that the bias of the moment projection estimation is largely independent on the value of $\phi_0$; the bias is only due to the contribution of the spurious phases and decreases as $\phi_1$ is further from the interval $\mathcal{D}$, as predicted by the bound in Eq.~\eqref{eq:nonoise-bias-bound}.
In contrast, the mean estimator $\tilde\phi=\langle x_j\rangle_{x_j\in\mathcal{D}}$ picks up an additional bias that depends on the value of $\phi_0$ [Eq.~\eqref{eq:estimator-vs-mean-gaussian-no-noise}]. 
When this latter bias dominates, the moment projection estimator achieves a total bias that is exponentially smaller in $\sigma^{-1}$ than the bias of the mean estimator.
The improvement is especially evident for $\phi_1$ farther from the filtering interval.
The narrow region where the mean estimator has vanishing bias is due to a fortuitous cancellation of the positive bias coming from the spurious phases and the negative bias coming from the filtered-out samples from the ground phase.

At this point we have constructed an estimator that takes samples from a distribution $P(x)$ and fits a Gaussian model $Q(x|\phi)$ with a fixed distribution $\sigma$.
This considers only the classical subroutine of phase estimation (Def.~\ref{def:classical-subroutine}).
To connect this to the standard phase estimation literature, we extend this to a quantum algorithm
\begin{theorem}[cost of noiseless FMPE]
\label{thm:fmpe-cost-nonoise}
    Let $U$ be a unitary with spectral gap $\Delta$ around a target state $\phi_0$ ($\forall_{j>0} |\phi_j - \phi_0| > \Delta$).
    Assume oracle access to a controlled version of $U$, and an initial state $\ket{\psi}$ such that $|\braket{\phi_0}{\psi}|^2 > \eta$.
    Further assume an initial estimate $\phi_{guess}$ of $\phi_0$ such that $|\phi_0-\phi_{guess}|<\Delta/3$.
    Then, the FMPE (Def.~\ref{def:m-projection-phase-estimator}) using $t$ calls to $U$ per circuit produces an estimate $\tilde\phi$ with RMS error up to $\epsilon$ using $M' = O(\eta^{-1}t^{-2}\epsilon^{-2})$ samples and $T = O(\eta^{-1}t^{-1}\epsilon^{-2})$ total calls to $U$, as long as $t= {\Omega}(\Delta^{-1}\log^{1/2}(\Delta\epsilon^{-1}\eta^{-1}))$ and $M'=\Omega(\eta^{-1})$.
\end{theorem}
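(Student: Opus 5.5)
The plan is to chain Lemma~\ref{lem:generating-filtered-samples}, Lemma~\ref{lem:gaussian_kernel_synthesis} and Lemma~\ref{lem:gaussian-no-noise}, and then choose $\sigma\propto t^{-1}$ so that the bias term is subdominant to the variance term.

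First, I will construct the promise interval. I apply Lemma~\ref{lem:generating-filtered-samples} with $F=1$ and take $\mathcal{D}=[\phi_{guess}-\Delta/2,\phi_{guess}+\Delta/2]$. This gives $|\mathcal{D}|=\Delta$, inner buffer $\Delta/6$, and outer buffer $d\geq\Delta/6$, so the hypotheses of Lemma~\ref{lem:gaussian-no-noise} hold whenever $\sigma\leq\Delta/6$.

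Second, I will relate $t$ to $\sigma$. By Lemma~\ref{lem:gaussian_kernel_synthesis}, a Gaussian kernel of width $\sigma$ needs $t=\Theta(\sigma^{-1})$ calls to $U$, up to the synthesis-precision logarithm, which I ignore as in the paper's convention $\epsilon_{\text{synth}}=0$. I therefore set $\sigma=c/t$ for a constant $c$.

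With $F=1$, the acceptance probability is $P_A\geq a_0\int_{-\Delta/6}^{\Delta/6}g_\sigma(x)\dd x\geq a_0/2$, since $\sigma\leq\Delta/6$ means the window covers at least one standard deviation on each side. Hence $M'=O(M/\eta)$ circuit runs yield $M$ filtered samples. A Chernoff bound shows that $M'=\Omega(\eta^{-1})$ suffices for $M\geq 1$ with constant probability, and that $M=\Theta(\eta M')$ with high probability. This is where the condition $M'=\Omega(\eta^{-1})$ comes from.

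Third, I will bound the error. Lemma~\ref{lem:gaussian-no-noise} gives variance at most $2\sigma^2/M+O(g_\sigma(d)/M)$ and bias at most $4\sigma^2 g_\sigma(d)\frac{1-a_0}{a_0}\frac{\Delta}{d}+O(g_\sigma(d)^2)$. Using $d\geq\Delta/6$, the bias is at most $24\,\sigma\,a_0^{-1}e^{-\Delta^2/(72\sigma^2)}/\sqrt{2\pi}$, up to higher-order terms. The RMS error is $\sqrt{b^2+\Var}$.

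To make this at most $\epsilon$, I require $2\sigma^2/M\leq\epsilon^2/2$, which gives $M=O(t^{-2}\epsilon^{-2})$, and $|b|\leq\epsilon/2$. The bias condition holds once $\Delta^2/(72\sigma^2)\geq\log(C\sigma\epsilon^{-1}\eta^{-1})$. Since $\sigma\leq\Delta$, it suffices that $\sigma\leq\Delta\,[72\log(C\Delta\epsilon^{-1}\eta^{-1})]^{-1/2}$. This is equivalent to $t=\Omega(\Delta^{-1}\log^{1/2}(\Delta\epsilon^{-1}\eta^{-1}))$, which also enforces $\sigma\leq\Delta/6$ and makes the $O(g_\sigma(d))$ corrections to the variance negligible.

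Combining the steps, $M'=O(\eta^{-1}M)=O(\eta^{-1}t^{-2}\epsilon^{-2})$ and $T=tM'=O(\eta^{-1}t^{-1}\epsilon^{-2})$.

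The main obstacle is that Lemma~\ref{lem:gaussian-no-noise} controls only the asymptotic ($M\to\infty$) bias and variance, with hidden higher-order terms. To turn this into a finite-$M$ statement, I would argue that the asymptotic regime is entered once $M$ is at least a constant, because the log-likelihood is strictly concave near $\phi_0$ with curvature about $\sigma^{-2}$. I would also make sure that the $O(g_\sigma^2)$ and $O(g_\sigma)$ corrections are absorbed by choosing the constant in the $\Omega$ for $t$ large enough. A secondary issue is handling the random number of accepted samples, $M$, which I would treat by conditioning on $M\geq M'\eta/4$.
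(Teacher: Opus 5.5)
Your proposal is correct and follows the same route as the paper's proof. You synthesise a Gaussian kernel with $\sigma=\Theta(t^{-1})$, build the width-$\Delta$ promise interval via Lemma~\ref{lem:generating-filtered-samples}, invoke Lemma~\ref{lem:gaussian-no-noise} to get the logarithmic depth condition from the bias and $M=O(t^{-2}\epsilon^{-2})$ from the variance, and pay an $\eta^{-1}$ filtering overhead. Your explicit constants (the $a_0/2$ acceptance bound and the $e^{-\Delta^2/72\sigma^2}$ bias estimate) and the concentration argument for the random number of accepted samples fill in details the paper leaves implicit; the finite-$M$ caveat you flag is handled just as informally there, via the requirement $M=\omega(1)$ for asymptotic normality.
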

\begin{proof}
By Lemma~\ref{lem:gaussian_kernel_synthesis}, we can generate samples from $p(x)$ with a Gaussian kernel function $f_\sigma$ using $t = \Theta(\sigma^{-1})$ calls to $U$ [we ignore the overhead of $\log(\epsilon_{\text{synth}})$].
By Lemma~\ref{lem:generating-filtered-samples}, we can then generate samples from $P(x)$ with $\mathcal{D}$ that satisfies the assumptions of Lemma~\ref{lem:gaussian-no-noise} with $|\mathcal{D}|, d = \Theta(\Delta)$.
By Lemma~\ref{lem:gaussian-no-noise}, $t = {\Omega}(\Delta^{-1}\log^{1/2}(\Delta\epsilon^{-1}\eta^{-1}))$ is enough to ensure that the bias is $O(\epsilon)$.
To ensure that the variance is $O(\epsilon^2)$, we need $M = O(t^{-2}\epsilon^{-2})$ filtered samples, and we need $M=\omega(1)$ to ensure asymptotic normality.
By Lemma~\ref{lem:generating-filtered-samples}, to generate $M$ filtered samples, we need an expected $M' = \Theta(M \eta^{-1})$ total samples and thus $T = O(\eta^{-1}Mt) = O(\eta^{-1}t^{-1}\epsilon^{-2})$ total calls to $U$.
\end{proof}

It remains to fix the number of calls to the unitary $t$ in each circuit, however, in doing so we must maintain the large-$M$ limit in which our results were obtained.
In the early-FT setting, this is achieved as we fix $t$ by the maximum depth allowable for a circuit (see Sec.~\ref{sec:gaussian_with_gdn} for more details), and achieve arbitrary precision by increasing the number of samples $M$.
In the absence of experimental error, one can fix $t=c_{t}\epsilon^{-1}$ and recover the Heisenberg limit at sufficiently large $M$.
However, one cannot for instance set $t \propto a_0^{-1}$ and take the limit $a_0\rightarrow 0$, as this would lead to arbitrarily low $M$.

 \begin{figure}
     \centering
     \includegraphics[width=\linewidth]{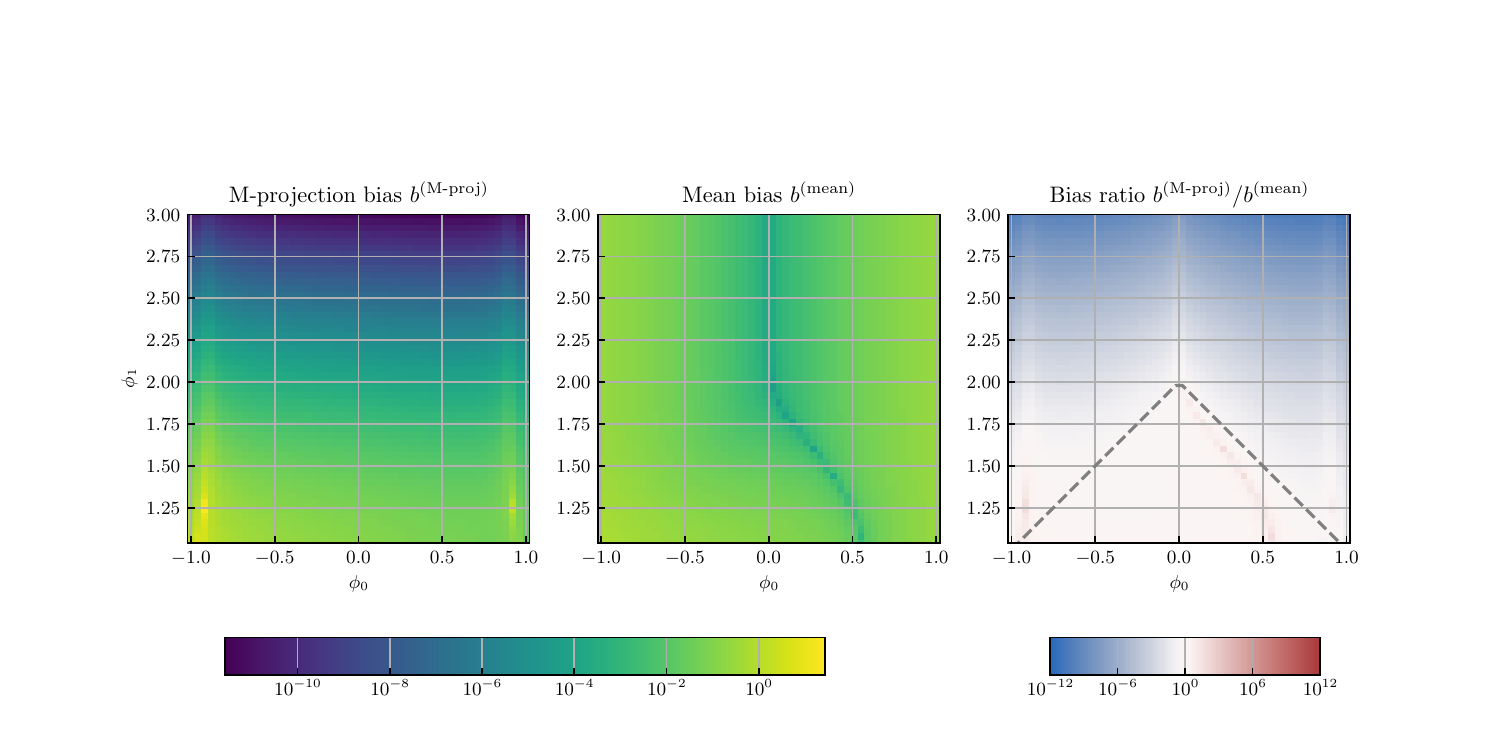}
     \caption{
     \label{fig:no-noise}
     Comparison of the bias of the moment projection and  mean estimators for the distribution $P(x)$ in Eq.~\eqref{eq:gaussian-prop-dist-no-noise}, with phases $\phi_0$, $\phi_1$, amplitudes $a_0 = 1-a_1 = 0.7$, filtering region $\mathcal{D} = [-1,1]$, and kernel width $\sigma = 0.3$. 
     (Left) The first order bias $b^{(\text{M-proj})}$ of the moment projection estimator [first term in Eq.~\eqref{eq:m-proj-1st-order-bias}].
     (Center) The bias of the mean estimator $b^{(\text{mean})} = \left| \int_{\mathcal{D}}P(x)x\dd x - \phi_0 \right|$.
     (Right) The ratio of the two biases, $b^{(\text{M-proj})}/b^{(\text{mean})}$.
     The dashed lines represent $d = |\phi_0 - \phi_{guess}|$ and  mark the regime in which the dominant source of the mean estimator's bias is the distance of $\phi_0$ from the center of the interval [Eq.~\eqref{eq:estimator-vs-mean-gaussian-no-noise}].
     }
 \end{figure}

\subsection{Gaussian kernel and global depolarising noise}\label{sec:gaussian_with_gdn}

We now consider the performance of the moment projection estimator again using a Gaussian kernel, but this time under the presence of global depolarizing noise (GDN). Global depolarizing noise assumes that each error event is maximally scrambling; this can be modelled by (with probability $p(x)$) replacing the quantum state with the maximally mixed state on $N$ qubits; i.e. $\rho\rightarrow (1-p)\rho + p I$.

In the presence of GDN, the probability distribution $p(x)$ becomes (as per Def.~\ref{def:noisy-distribution})
\begin{equation}
    p(x) = F a*f_\sigma(x) + (1-F)\frac{1}{2\pi}
\end{equation}
where $F$ is the circuit fidelity.
To see that this is correct, one can repeat the calculation in Eq.~\eqref{eq:random_phase_noiseless} modifying the probability of each circuit as $p(\tilde x | \phi_\text{ref}) \to Fp(\tilde x | \phi_\text{ref}) + (1-F)\frac{1}{K}$, as in the presence of GDN each bitstring is equally probable.
The probability of accepting a sample is
\begin{equation}
    \int_\mathcal{D} \dd x \, p(x) = F\sum_j a_j \int_\mathcal{D}\dd x f_\sigma(x-\phi_j) + (1-F)\frac{|\mathcal{D}|}{2\pi}
\end{equation}
and the filtered distribution becomes
\begin{equation}
\label{eq:gaussian-prob-dist-with-noise}
    P(x) =\frac{F \sum_j a_j g_\sigma(x-\phi_j) + (1-F) \frac{1}{2\pi}\mathcal{M}_\sigma}{F \sum_j a_j G_\sigma(\phi_j) + (1-F)\frac{|\mathcal{D}|}{2\pi}\mathcal{M}_\sigma}.
\end{equation}
where $\mathcal{M}_\sigma = \int_{-\pi}^{\pi}g_\sigma(x)\dd x = \erf\left(\frac{\pi}{\sqrt2}\sigma^{-1}\right)$ is the normalisation of the kernel function $f_\sigma$ (Def.~\ref{def:gaussian_kernel_function}).

In the presence of global depolarizing noise, mean sampling incurs a large sampling overhead compared to the moment projection estimator.
Ignoring terms exponentially small in $\sigma$ for simplicity, the expected value of $P(x)$ is
\begin{equation}
    E[x] \approx (1-w)\phi_0 + w\phi_{\mathrm{guess}},
\end{equation}
where the noise weight $w$ is given by
\begin{equation}
w = \frac{(1-F)\frac{1}{2\pi}\mathcal{M}_\sigma}{Fa_0 +(1-F)\frac{|\mathcal{D}|}{2\pi}\mathcal{M}_\sigma}.
\end{equation}
To get an estimator with an exponentially small bias using the sample average $\bar{x}$, we can use a shifted and rescaled mean estimator:
\begin{equation}
    \tilde\phi = \Big(1+\frac{1-w}{w}\Big)\bar{x}- \phi_{guess}\frac{1-w}{w}.
\end{equation}
However, as $\sigma\to0$, the variance of $P(x)$ remains constant, which means in turn that the variance of the mean estimator is constant.

In contrast, to extend our moment projection estimator to the setting with GDN, we can include the additional noise term in our model probability, and neglect the contribution of the other phases as before:
$Q(x|\phi) \propto F\,a_0\,f_\sigma(x-\phi_j) + (1-F)\, \frac{1}{2\pi}$.
We assume that $F$ and $a_0$ are known; determining these from the data itself would be an interesting target for future work.
As in the noiseless case, the bias decreases exponentially with decreasing $\sigma$, and the variance decreases as $\sigma^2$.

\begin{lemma} \label{lem:gaussian-gdn}
    Consider the moment projection phase estimator (Def.~\ref{def:m-projection-phase-estimator-with-gdn}) with  Gaussian kernel function $f_\sigma$ (Def.~\ref{def:gaussian_kernel_function} with $\epsilon_{\text{synth}} = 0$), fidelity $F$ and ground state overlap $a_0$.
    Assume that the $\mathcal{D}$ is a promise interval (Def.~\ref{def:promise_interval}) with inner buffer $d_{\text{in}} = |\mathcal{D}|/6$ and outer buffer $d_{\text{out}} = d$.
    Then, for
    \begin{equation}
        \sigma \leq \frac{|\mathcal{D}|}{6}\min\left(1,   \left[1.2+\log\left(1 + 0.7\frac{|\mathcal{D}|}{2\pi}\frac{1-F}{F a_0}\right)\right]^{-1/2}\right)
    \end{equation}
    the asymptotic bias and variance of the estimator in Lemma~\ref{lem:m-projection-convergence} satisfy
    \begin{align}
    |\phi_0-\phi_*| &\leq 22\ \sigma^2g_\sigma(d)\frac{1-a_0}{a_0} \left( 1+ \frac{1-F}{Fa_0}\right) \frac{|\mathcal{D}|}{d}   + O(g_\sigma(d))\\
    \lim_{M\to\infty} M \Var[\tilde\phi] &\leq 11\ \sigma^2 \left( 1 + \frac{1-F}{Fa_0}\right)\left( 1 + \frac{1}{2\pi}\frac{1-F}{Fa_0}\right) + O( g_\sigma(d)^2).
    \end{align}
\end{lemma}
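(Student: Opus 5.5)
We apply Lemma~\ref{lem:m-projection-expansion} with the model $Q(x|\phi)$ of Eq.~\eqref{eq:qphi_with_noise} and the filtered distribution $P(x)$ of Eq.~\eqref{eq:gaussian-prob-dist-with-noise}. This reduces the proof to three estimates, mirroring the proof of Lemma~\ref{lem:gaussian-no-noise}:
\begin{enumerate}
\item an upper bound on $\|h\|_1$, where $h=P-Q(\cdot|\phi_0)$;
\item an upper bound on the maximal score $\max_{x\in\mathcal{D}}|\partial_\phi\log Q(x|\phi)|_{\phi_0}$;
\item a lower bound on the Fisher information $\mathcal{I}_0$.
\end{enumerate}
Throughout we write $A=Fa_0$ and $B=(1-F)\mathcal{M}_\sigma/2\pi$, so that $Q\propto A g_\sigma(x-\phi)+B$ with normalizer $N(\phi)=AG_\sigma(\phi)+B|\mathcal{D}|$. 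We drop the factor $\mathcal{M}_\sigma\le 1$ wherever it only helps.

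\textbf{Step 1 (model error).} $P$ and $Q(\cdot|\phi_0)$ differ only by the terms $F a_j g_\sigma(x-\phi_j)$ with $j\neq0$ in the numerator and $F a_j G_\sigma(\phi_j)$ in the denominator. By the outer buffer, each such term is at most $g_\sigma(d)$ pointwise on $\mathcal{D}$ and at most $|\mathcal{D}|g_\sigma(d)$ after integration. Writing $P=(N_0+\delta n)/(N(\phi_0)+\delta N)$, we get $\|h\|_1\le 2\,\delta N/N(\phi_0)\lesssim 2F(1-a_0)|\mathcal{D}|g_\sigma(d)/N(\phi_0)$. The inner buffer $d_{\text{in}}=|\mathcal{D}|/6$ and $\sigma\le|\mathcal{D}|/6$ give $G_\sigma(\phi_0)\ge\operatorname{erf}(1/\sqrt2)\approx0.68$. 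Hence $N(\phi_0)\gtrsim A$, and
\[
\|h\|_1\lesssim \frac{1-a_0}{a_0}\,|\mathcal{D}|\,g_\sigma(d),
\]
up to a constant.

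\textbf{Step 2 (score).} Compute
\[
\partial_\phi\log Q = \frac{A\,g_\sigma(x-\phi)(x-\phi)/\sigma^2}{Ag_\sigma(x-\phi)+B} - \frac{A\,\partial_\phi G_\sigma}{N}.
\]
The second term is exponentially small, because $\partial_\phi G_\sigma$ is a difference of Gaussians evaluated at distance at least $|\mathcal{D}|/6$ from $\phi_0$. The first term is bounded by $\max_u |u|g_\sigma(u)/\sigma^2\sim 1/(\sigma^2\sqrt{2\pi e})$, using $B\ge0$. However, the factor $|\mathcal{D}|/d$ and the $\sigma^2$ prefactor in the target bias bound suggest a finer treatment. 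Instead of crudely bounding $\|h\|_1\cdot\max|\text{score}|$, we evaluate $\int h\,\partial_\phi\log Q$ from Eq.~\eqref{eq:m-proj-1st-order-bias} directly. The pieces of $h$ coming from $\phi_j$ are concentrated near the edge of $\mathcal{D}$ closest to $\phi_j$. There we integrate $g_\sigma(x-\phi_j)(x-\phi_0)g_\sigma(x-\phi_0)$ and obtain the factor $\sigma^2 g_\sigma(d)/d$, reusing the integral from the noiseless case.

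\textbf{Step 3 (Fisher information; main obstacle).} We need $\mathcal{I}_0\gtrsim \sigma^{-2}\bigl(1+\tfrac{1-F}{Fa_0}\bigr)^{-1}\bigl(1+\tfrac{1}{2\pi}\tfrac{1-F}{Fa_0}\bigr)^{-1}/11$. Write
\[
\mathcal{I}_0=\frac{1}{N}\int_{\mathcal{D}}\frac{A^2g_\sigma^2 u^2/\sigma^4}{Ag_\sigma+B}\,\dd x,\qquad u=x-\phi_0,
\]
up to the exponentially small normalizer derivative. Restrict the integral to the core region $|u|\le\sigma\sqrt{2L}$, where $Ag_\sigma(u)\ge B$, with
\[
L=\log\!\Bigl(1+c\,\frac{|\mathcal{D}|}{2\pi}\frac{1-F}{Fa_0}\Bigr)+1.2 .
\]
On this region the integrand is at least $\tfrac12 Ag_\sigma u^2/\sigma^4$. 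The condition on $\sigma$ in the statement is exactly what guarantees that this core lies inside $\mathcal{D}$ (using the inner buffer) and that a constant fraction of the Gaussian second moment is captured. Combined with $N\le A+B|\mathcal{D}|$, this gives the claimed product of factors. The hardest part is tracking the numerical constants (22, 11, 1.2, 0.7) through these truncations. We would fix the truncation at $\sigma\sqrt{2L}$ and bound the resulting erf-type integrals numerically.

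Finally, substituting Steps 1–3 into Eqs.~\eqref{eq:mproj-bias} and \eqref{eq:mproj-var} yields both bounds. The $O(\cdot)$ remainders are inherited from Lemma~\ref{lem:m-projection-expansion}, together with the exponentially small normalizer-derivative terms.
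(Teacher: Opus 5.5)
Your Step~3 fails, and that step is where the noise dependence of the statement comes from.

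\textbf{Fisher information.} Write $r:=\frac{1-F}{Fa_0}$. The set where $Fa_0 g_\sigma(u)\ge(1-F)\mathcal{M}_\sigma/2\pi$ \emph{shrinks} as $r$ grows; its half-width is $\sigma\sqrt{2\log(\sqrt{2\pi}/(\sigma\mathcal{M}_\sigma r))}$. Your radius $\sigma\sqrt{2L}$ instead grows with $r$. Since $\max_u g_\sigma(u)=1/(\sqrt{2\pi}\sigma)$, the set is empty once $\sigma\mathcal{M}_\sigma r>\sqrt{2\pi}$. The hypothesis only forces $\sigma=O(|\mathcal{D}|\log^{-1/2}r)$, so this regime is allowed. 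For example, $|\mathcal{D}|=1$, $r=10^6$, $\sigma=0.046$ satisfies the hypothesis with $\sigma r\approx 5\times10^4$, and there your lower bound on $\mathcal{I}_0$ is vacuous. Note also that signal dominance on a window of width $\Theta(\sigma)$ together with $N\le Fa_0+(1-F)\mathcal{M}_\sigma|\mathcal{D}|/2\pi$ would yield only \emph{one} noise factor. The second factor in the statement is the cost of the noise term in the integrand's denominator, which you assumed away.

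The missing idea is to keep that denominator. Restrict to the fixed window $|u|\le\sigma$, which lies inside $\mathcal{D}$ by the inner buffer and $\sigma\le|\mathcal{D}|/6$. Replace $g_\sigma(u)$ by $g_\sigma(\sigma)$ \emph{only in the denominator}. This gives $\int_{\mathcal{D}}\frac{(Fa_0)^2g_\sigma^2u^2/\sigma^4}{Fa_0g_\sigma+(1-F)\mathcal{M}_\sigma/2\pi}\dd x\ge\frac{Fa_0}{\sigma^2}\int_{-1}^{1}g_1(\xi)\xi^2\dd\xi\,\big/\,\bigl(1+\frac{\sigma\mathcal{M}_\sigma}{2\pi g_1(1)}\,r\bigr)$. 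Then $\sigma\mathcal{M}_\sigma\le|\mathcal{D}|/6$ turns this into the factor $(1+0.7\frac{|\mathcal{D}|}{2\pi}r)^{-1}$, and the upper bound on $N$ supplies the other factor.

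You also misread the role of the hypothesis on $\sigma$. The normalizer correction $-N'(\phi_0)^2/N(\phi_0)$ in $N\mathcal{I}_0$, which you dismiss, is of order $\frac{Fa_0}{\sigma^2}g_1^2(|\mathcal{D}|/6\sigma)$. It is not negligible against a main term suppressed by $(1+0.7\frac{|\mathcal{D}|}{2\pi}r)^{-1}$ unless $(|\mathcal{D}|/6\sigma)^2\ge 1.2+\log(1+0.7\frac{|\mathcal{D}|}{2\pi}r)$. Keeping that correction below half the main term, not containing a core region, is what the condition buys.

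\textbf{Bias.} This part is repairable but incorrect as written.
\begin{itemize}
\item Your Step~1 pointwise bound loses exactly the factor you later search for. The right tool is the Gaussian tail (Mills ratio) estimate $\int_{\mathcal{D}}g_\sigma(x-\phi_j)\dd x\le\int_d^\infty g_\sigma\le\frac{\sigma^2}{d}g_\sigma(d)$, which gives $\|h\|_1\le 2F(1-a_0)\frac{\sigma^2}{d}g_\sigma(d)/N(\phi_0)$.
\item Your score bound $\max_u|u|g_\sigma(u)/\sigma^2$ is false in general. The ratio $\frac{Fa_0g_\sigma}{Fa_0g_\sigma+(1-F)\mathcal{M}_\sigma/2\pi}$ is at most $1$, not at most $g_\sigma$. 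For $F=1$ the score equals $(x-\phi_0)/\sigma^2$ up to the normalizer term and reaches about $\frac{5|\mathcal{D}|}{6\sigma^2}$ at the far edge.
\item For the same reason you cannot drop that denominator and integrate $g_\sigma(x-\phi_j)(x-\phi_0)g_\sigma(x-\phi_0)$.
\end{itemize}
With the Mills bound and the crude score bound $|\mathcal{D}|/\sigma^2$, the plain product $\|h\|_1\max|\partial_\phi\log Q|/\mathcal{I}_0$ already gives $2F(1-a_0)\frac{|\mathcal{D}|}{d}g_\sigma(d)/(N\mathcal{I}_0)$. The crude score bound includes $|N'|/N\lesssim0.05|\mathcal{D}|/\sigma^2$. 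Combined with the Fisher bound above (constant $10.1$), this yields $22\ge2\times10.1$. Your direct evaluation of $\int h\,\partial_\phi\log Q$ would also work, and even saves a factor $2$, since the part of $h$ proportional to $Q(x|\phi_0)$ is orthogonal to the score. It still needs the same tail bound.
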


We provide the proof of Lemma~\ref{lem:gaussian-gdn} in Appendix~\ref{app:proof-of-gaussian-cors}.
We can now convert the above result into the physical cost to execute phase estimation, assuming a fixed circuit fidelity $F$.

\begin{theorem} [Cost of noisy FMPE with fixed fidelity]
\label{thm:fmpe-cost-fixed-fidelity}
    Let $U$ be a unitary with spectral gap $\Delta$ around a target state $\phi_0$ ($\forall_{j>0} |\phi_j - \phi_0| > \Delta$).
    Assume oracle access to a controlled version of $U$, and an initial state $\ket{\psi}$ such that $|\braket{\phi_0}{\psi}|^2 = a_0$, and global depolarising noise with fixed circuit fidelity $F$.
    Further assume an initial estimate $\phi_{guess}$ of $\phi_0$ such that $|\phi_0-\phi_{guess}|<\Delta/3$.
    Then, the FMPE (Def.~\ref{def:m-projection-phase-estimator}) using $t$ calls to $U$ per circuit produces an estimate $\tilde\phi$ with RMS error $\epsilon$ using $M' = O(F^{-2}a_0^{-2}t^{-2}\epsilon^{-2})$ samples and $T = O(\epsilon^{-2}t^{-1}F^{-2}a_0^{-2})$ total calls to $U$, as long as $t= {\Omega}(\Delta^{-1}\log^{1/2}(\Delta\epsilon^{-1}a_0^{-2}F^{-1}))$ and $M'=\Omega(F^{-2}a_0^{-2})$.
\end{theorem}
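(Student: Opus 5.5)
The proof follows the same template as the proof of Theorem~\ref{thm:fmpe-cost-nonoise}, with Lemma~\ref{lem:gaussian-no-noise} replaced by Lemma~\ref{lem:gaussian-gdn}. (The statement cites Def.~\ref{def:m-projection-phase-estimator}, but under global depolarising noise the estimator meant is the one of Def.~\ref{def:m-projection-phase-estimator-with-gdn}; we use that one.)

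First, by Lemma~\ref{lem:gaussian_kernel_synthesis} we generate samples from $p(x)$ with Gaussian kernel $f_\sigma$ using $t=\Theta(\sigma^{-1})$ calls to $U$, ignoring the $\log^{1/2}(\epsilon_{\text{synth}}^{-1})$ factor. By Lemma~\ref{lem:generating-filtered-samples}, with $\mathcal{D}=[\phi_{guess}-\Delta/2,\phi_{guess}+\Delta/2]$, we obtain a promise interval with $d_{\text{in}}=|\mathcal{D}|/6$ and $d_{\text{out}}=d\geq\Delta/6$. Hence $|\mathcal{D}|,d=\Theta(\Delta)$ and the hypotheses of Lemma~\ref{lem:gaussian-gdn} hold.

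Next we simplify the constants in Lemma~\ref{lem:gaussian-gdn}, treating $F$ as fixed and small, so that $1+\frac{1-F}{Fa_0}=O(F^{-1}a_0^{-1})$. The variance bound becomes $M\Var[\tilde\phi]=O(\sigma^2F^{-2}a_0^{-2})$. To make the RMS error $O(\epsilon)$ we therefore need $M=O(t^{-2}F^{-2}a_0^{-2}\epsilon^{-2})$ filtered samples, together with enough samples for asymptotic normality.

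Two conditions on $\sigma$ remain, and checking them is the only nontrivial step.

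\textbf{Condition on $\sigma$ in Lemma~\ref{lem:gaussian-gdn}.} It requires $\sigma\lesssim|\mathcal{D}|\,[\log(1+|\mathcal{D}|F^{-1}a_0^{-1})]^{-1/2}$, which amounts to $t=\Omega(\Delta^{-1}\log^{1/2}(\Delta F^{-1}a_0^{-1}))$.

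\textbf{Bias.} The bias is $O\big(\sigma^2 g_\sigma(d)\,a_0^{-2}F^{-1}\,|\mathcal{D}|/d\big)$. Since $g_\sigma(d)\sim\sigma^{-1}e^{-d^2/2\sigma^2}$, requiring this to be at most $\epsilon$ gives $d^2/\sigma^2\gtrsim\log(\sigma\,\epsilon^{-1}a_0^{-2}F^{-1})$. With $\sigma\leq\Delta$, this becomes $t=\Omega(\Delta^{-1}\log^{1/2}(\Delta\epsilon^{-1}a_0^{-2}F^{-1}))$. This condition dominates the previous one up to constants, and it is the condition stated in the theorem. The $O(\cdot)$ remainder terms in Lemma~\ref{lem:gaussian-gdn} are higher order in $g_\sigma(d)$, so they are absorbed into the same choice of $t$.

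Finally we count raw samples. By Lemma~\ref{lem:generating-filtered-samples} the acceptance probability satisfies $P_{accept}\geq P_A\geq Fa_0\int_{-\Delta/6}^{\Delta/6}f_\sigma=\Omega(Fa_0)$, since $\sigma\leq\Delta/6$ makes the Gaussian integral $\Omega(1)$. The expected number of raw samples is therefore $M'=O(M F^{-1}a_0^{-1})$.

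Here the stated $M'=O(F^{-2}a_0^{-2}t^{-2}\epsilon^{-2})$ does not follow directly. Naively multiplying would give an extra factor $F^{-1}a_0^{-1}$. The remedy is to use the acceptance term $(1-F)|\mathcal{D}|/2\pi$ in $P_A$, and to refine the variance factor $(1+\frac{1-F}{Fa_0})(1+\frac{1}{2\pi}\frac{1-F}{Fa_0})$ into one factor that cancels against $P_{accept}^{-1}$. The product $\Var\cdot P_{accept}^{-1}$ should then scale as $\sigma^2F^{-2}a_0^{-2}$. I expect this careful balancing to be the main obstacle. If it fails, I would state the cost with the looser combined factor.

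The total number of calls is $T=M't=O(\epsilon^{-2}t^{-1}F^{-2}a_0^{-2})$. The condition $M'=\Omega(F^{-2}a_0^{-2})$ guarantees $M=\omega(1)$ filtered samples, so the large-$M$ limit used in Lemma~\ref{lem:m-projection-convergence} is justified.
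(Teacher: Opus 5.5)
Your outline follows the paper's proof. You use Lemma~\ref{lem:generating-filtered-samples} to get a promise interval with $|\mathcal{D}|,d=\Theta(\Delta)$, Lemma~\ref{lem:gaussian-gdn} to get the bias and hence the condition on $t$, and then $T=M't$. Reading the estimator as the one in Def.~\ref{def:m-projection-phase-estimator-with-gdn} is correct, and so is your derivation of the $t$ condition. The gap is the step you flagged yourself. As written you only prove $M'=O(F^{-3}a_0^{-3}t^{-2}\epsilon^{-2})$, and the cancellation that would remove the extra $F^{-1}a_0^{-1}$ is stated as a hope, not argued. So the stated sample count is not established.

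The missing point is that the variance bound in Lemma~\ref{lem:gaussian-gdn} is the wrong starting object. Its proof first lower-bounds $\mathcal{N}\mathcal{I}_0$ (Step 5 of Appendix~\ref{app:proof-of-gaussian-cors}, Eq.~\eqref{eq:inverse-fi-bound-with-noise}). It then multiplies by the upper bound on $\mathcal{N}$ in Eq.~\eqref{eq:N-upper-bound}, which is where the factor $1+\frac{1}{2\pi}\frac{1-F}{Fa_0}$ comes from. But $\mathcal{N}$ is, up to $\mathcal{M}_\sigma\le 1$, the acceptance probability itself: $P_A=\mathcal{N}/\mathcal{M}_\sigma$. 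The paper therefore never multiplies a post-filtering variance by $P_{accept}^{-1}$. It writes $\Var[\tilde\phi]\approx 1/(M\mathcal{I}_0)\le 1/(M'P_A\mathcal{I}_0)$ and bounds the combination directly:
\[
\frac{1}{P_A\mathcal{I}_0}=\frac{\mathcal{M}_\sigma}{\mathcal{N}\mathcal{I}_0}\le C_1\frac{\sigma^2}{Fa_0}\Big(1+C_2\frac{|\mathcal{D}|}{2\pi}\frac{1-F}{Fa_0}\Big)=O\big(\sigma^2F^{-2}a_0^{-2}\big).
\]
This is Eq.~\eqref{eq:fisher-info-before-filtering}, and it gives $M'=\Theta(t^{-2}\epsilon^{-2}F^{-2}a_0^{-2})$ immediately.

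You could instead try to recover the cancellation from the statements of Lemma~\ref{lem:gaussian-gdn} and Eq.~\eqref{eq:p_accept}, as you propose. The two factors do not match exactly, however. The lemma's factor is $(Fa_0+\frac{1-F}{2\pi})/(Fa_0)$, while $P_A\gtrsim 0.68\,Fa_0+(1-F)\frac{|\mathcal{D}|}{2\pi}$. You would get $F^{-2}a_0^{-2}$ only up to an extra factor $\max(1,|\mathcal{D}|^{-1})=O(1+\Delta^{-1})$, which comes from using the two bounds separately. Working with $\mathcal{N}\mathcal{I}_0$ directly avoids this.
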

\begin{proof}
As in the proof of Theorem~\ref{thm:fmpe-cost-nonoise}, we can then generate samples from $P(x)$ with $\mathcal{D}$ that satisfies the assumptions of Lemma~\ref{lem:gaussian-gdn} with $|\mathcal{D}|, d = \Theta(\Delta)$.
By Lemma~\ref{lem:gaussian-gdn}, $t= {\Omega}(\Delta^{-1}\log^{1/2}(\Delta\epsilon^{-1}a_0^{-2}F^{-1}))$ is enough to ensure that the bias is $O(\epsilon)$.
By Lemma~\ref{lem:generating-filtered-samples} and Eq.~\eqref{eq:fisher-info-before-filtering} to ensure that the variance is $O(\epsilon^2)$, we need $M' = \Theta(\sigma^{2}\epsilon^{-2}F^{-2}a_0^{-2}) = \Theta(t^{-2}\epsilon^{-2}F^{-2}a_0^{-2}) $ shots, so $T = M't = \Theta(\epsilon^{-2}t^{-1}F^{-2}a_0^{-2}) $ total calls to $U$.
\end{proof}

Note that while in the noiseless case it was enough to assume a lower bound $\eta \leq a_0$, here we instead assume that both $a_0$ and $F$ are known exactly.
The difference is that in the noiseless setting these parameters do not enter the model $Q(x|\phi)$, and a lower bound on $a_0$ is only needed to choose the total number of samples $M$.
In the present setting, by contrast, both $a_0$ and $F$ appear directly in the model, since they determine the relative size of the signal term $F a_0 f(x-\phi_0)$ and the noise term $(1-F)/(2\pi)$.
In principle these parameters could be estimated from the data itself; optimizing this would be an interesting task for future work.

So far in this work, we have treated the circuit fidelity and the width of the Gaussian $g_{\sigma}$ as independent variables.
However, to generate a distribution of width $\sigma$ using either the methods described in Sec.~\ref{sec:qspqpe} or Sec.~\ref{sec:qftqpe} requires a circuit depth $t\sim \sigma^{-1}$.
Under a typical noise model, the circuit fidelity decreases exponentially in the circuit depth~$t$: $F=e^{-\gamma t}$ for some decay rate $\gamma$.
This creates a trade-off in choosing the optimal circuit depth between limiting the onset of noise and achieving Heisenberg rather than sampling noise scaling, which we studied in detail in Ref.~\cite{dutkiewiczError2025}.
We now re-apply these methods to develop a QPE estimator in the presence of global depolarizing noise.

\begin{theorem} [Cost of noisy FMPE with fixed noise rate]
\label{thm:fmpe-cost-gdn}
    Let $U$ be a unitary with spectral gap $\Delta$ around a target state $\phi_0$ ($\forall_{j>0} |\phi_j - \phi_0| > \Delta$).
    Assume oracle access to a controlled version of $U$, and an initial state $\ket{\psi}$ such that $|\braket{\phi_0}{\psi}|^2 = a_0$, and global depolarising noise with circuit fidelity $F = e^{-\gamma t}$ for $t$ uses of $U$.
    Further assume an initial estimate $\phi_{guess}$ of $\phi_0$ such that $|\phi_0-\phi_{guess}|<\Delta/3$.
    Then, the FMPE (Def.~\ref{def:m-projection-phase-estimator}) using $t$ calls to $U$ per circuit produces an estimate $\tilde\phi$ with RMS error $\epsilon$ using $M' = O(e^{2\gamma t}a_0^{-2}t^{-2}\epsilon^{-2})$ samples and $T = O(\epsilon^{-2}t^{-1}e^{2\gamma t}a_0^{-2})$ total calls to $U$, as long as $t = \Omega(\Delta^{-1}(\gamma\Delta^{-1} + \log^{1/2}(a_0^{-2} \epsilon^{-1})))$  and $M'=\Omega(e^{2\gamma t}a_0^{-2})$.
\end{theorem}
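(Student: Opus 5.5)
The plan is to derive this theorem directly from Theorem~\ref{thm:fmpe-cost-fixed-fidelity}, substituting $F=e^{-\gamma t}$. First, as in the proofs of Theorems~\ref{thm:fmpe-cost-nonoise} and \ref{thm:fmpe-cost-fixed-fidelity}, Lemma~\ref{lem:gaussian_kernel_synthesis} gives samples from $p(x)$ with a Gaussian kernel of width $\sigma=\Theta(t^{-1})$ using $t$ calls to $U$. Lemma~\ref{lem:generating-filtered-samples} then supplies a promise interval $\mathcal{D}$ of width $\Delta$ centred at $\phi_{guess}$, with inner buffer $\Delta/6$ and outer buffer $d\geq\Delta/6$. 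This places us in the setting of Lemma~\ref{lem:gaussian-gdn} with $|\mathcal{D}|,d=\Theta(\Delta)$.

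Next I check the hypothesis of Lemma~\ref{lem:gaussian-gdn} on $\sigma$. With $F=e^{-\gamma t}$ we have $(1-F)/(Fa_0)\leq e^{\gamma t}a_0^{-1}$, so the logarithm in the admissibility condition is at most $\gamma t+\log(1+a_0^{-1})+O(1)$. The condition therefore reads $t^{-1}\lesssim\Delta\,[\gamma t+\log a_0^{-1}]^{-1/2}$, that is, $t^2\Delta^2\gtrsim\gamma t+\log a_0^{-1}$. This holds whenever $t\gtrsim\gamma\Delta^{-2}+\Delta^{-1}\log^{1/2}a_0^{-1}$, which is implied by the assumed $t=\Omega(\Delta^{-1}(\gamma\Delta^{-1}+\log^{1/2}(a_0^{-2}\epsilon^{-1})))$.

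For the bias, Lemma~\ref{lem:gaussian-gdn} gives $|b|\lesssim\sigma^2 g_\sigma(d)\,a_0^{-2}e^{\gamma t}$, so it suffices that $\sigma e^{-d^2/2\sigma^2}a_0^{-2}e^{\gamma t}\lesssim\epsilon$. In terms of $t$ this requires $c\,t^2\Delta^2\geq\gamma t+\log(a_0^{-2}\epsilon^{-1})+O(\log(t^{-1}))$. This is the step I expect to be the main obstacle, because the fidelity factor $e^{\gamma t}$ grows with $t$ and must be beaten by the Gaussian tail $e^{-\Theta(t^2\Delta^2)}$. I would handle it by splitting the exponent budget in two. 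First, $t\geq C\gamma\Delta^{-2}$ gives $c t^2\Delta^2/2\geq\gamma t$, since the quadratic dominates the linear term. Second, $t\geq C\Delta^{-1}\log^{1/2}(a_0^{-2}\epsilon^{-1})$ makes the remaining half of the exponent cover the logarithm. The $O(g_\sigma(d))$ correction term is absorbed in the same way.

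For the variance, Lemma~\ref{lem:gaussian-gdn} gives $M\Var\lesssim\sigma^2(1+e^{\gamma t}/a_0)^2=O(t^{-2}e^{2\gamma t}a_0^{-2})$. Hence $M=\Theta(t^{-2}e^{2\gamma t}a_0^{-2}\epsilon^{-2})$ filtered samples make the variance $O(\epsilon^2)$. Combined with the bias bound, the RMS error is $O(\epsilon)$.

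Finally, the acceptance probability from Lemma~\ref{lem:generating-filtered-samples} is $P_A=\Omega(Fa_0)$ and $P_A\leq 1$. As in the proof of Theorem~\ref{thm:fmpe-cost-fixed-fidelity}, I would state the sample count in terms of total shots $M'$. Equivalently, I would use that the Fisher information per raw shot scales as $F^2a_0^2\sigma^{-2}$, which yields $M'=O(e^{2\gamma t}a_0^{-2}t^{-2}\epsilon^{-2})$ and $T=M't=O(\epsilon^{-2}t^{-1}e^{2\gamma t}a_0^{-2})$. The condition $M'=\Omega(e^{2\gamma t}a_0^{-2})$ guarantees enough accepted samples for the asymptotic normality of Lemma~\ref{lem:m-projection-convergence} to apply.
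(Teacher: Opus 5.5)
Your route is the same as the paper's: Gaussian kernel synthesis with $\sigma=\Theta(t^{-1})$, the promise interval of Lemma~\ref{lem:generating-filtered-samples}, and the bias and variance bounds of Lemma~\ref{lem:gaussian-gdn} with $F=e^{-\gamma t}$. Your checks of the $\sigma$-admissibility condition and of the bias condition $c\,t^2\Delta^2\gtrsim\gamma t+\log(a_0^{-2}\epsilon^{-1})$ are correct, and more explicit than the paper's. The step that fails as written is the passage from filtered samples $M$ to raw shots $M'$. You derive $M=\Theta(t^{-2}e^{2\gamma t}a_0^{-2}\epsilon^{-2})$ from the filtered variance bound and then quote $P_A=\Omega(Fa_0)$. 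These two facts only give $M'\le M/P_A=O(t^{-2}e^{3\gamma t}a_0^{-3}\epsilon^{-2})$, which is off by a factor $1/(Fa_0)=e^{\gamma t}/a_0$. The statement you then call ``equivalent'' is that the Fisher information per raw shot is of order $F^2a_0^2\sigma^{-2}$ (more precisely, at least of that order). That is the right statement, but it is strictly stronger than what $P_A=\Omega(Fa_0)$ plus the filtered bound yields, and you give no argument for it.

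The missing point is that the filtered bound in Lemma~\ref{lem:gaussian-gdn} contains two factors of order $(1-F)/(Fa_0)$. One of them, $(1+\frac{1}{2\pi}\frac{1-F}{Fa_0})$, comes from the upper bound on the normalisation $\mathcal{N}$, i.e.\ from the fact that accepted samples are mostly noise. That factor has to cancel against the noise contribution $(1-F)|\mathcal{D}|/2\pi$ to the acceptance probability, which $P_A=\Omega(Fa_0)$ discards. You can repair the step in two ways:
\begin{itemize}
\item Keep that term in $P_A$; the cancellation then holds up to a $|\mathcal{D}|$-dependent constant.
\item Bound the raw-shot quantity directly, as the paper does: $\Var[\tilde\phi]\approx(M'P_{accept}\mathcal{I}_0)^{-1}\le(M'P_A\mathcal{I}_0)^{-1}$, with $1/(P_A\mathcal{I}_0)=\mathcal{M}_\sigma/(\mathcal{N}\mathcal{I}_0)\le C_1\frac{\sigma^2}{Fa_0}\bigl(1+C_2\frac{|\mathcal{D}|}{2\pi}\frac{1-F}{Fa_0}\bigr)=O(\sigma^2e^{2\gamma t}a_0^{-2})$. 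This is Eq.~\eqref{eq:fisher-info-before-filtering}.
\end{itemize}
Alternatively, you can invoke Theorem~\ref{thm:fmpe-cost-fixed-fidelity} as a black box, as your first sentence suggests. Its bound $M'=O(F^{-2}a_0^{-2}t^{-2}\epsilon^{-2})$ already contains this cancellation. Substituting $F=e^{-\gamma t}$, together with your self-consistency check on $t$, then finishes the proof.
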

\begin{proof}
As in the proof of Theorem~\ref{thm:fmpe-cost-nonoise}, we can then generate samples from $P(x)$ with $\mathcal{D}$ that satisfies the assumptions of Lemma~\ref{lem:gaussian-no-noise} with $|\mathcal{D}|, d = \Theta(\Delta)$.
By Lemma~\ref{lem:gaussian-gdn}, to ensure that the bias is $O(\epsilon)$, it is enough to take $t = \Omega(\Delta^{-1}(\gamma\Delta^{-1} + \log^{1/2}(a_0^{-2} \epsilon^{-1})))$
By Lemma~\ref{lem:generating-filtered-samples} and Eq.~\eqref{eq:fisher-info-before-filtering} to ensure that the variance is $O(\epsilon^2)$, we need $M' = \Theta(\sigma^{2}\epsilon^{-2}e^{2\gamma t}a_0^{-2}) = \Theta(t^{-2}\epsilon^{-2}e^{2\gamma t}a_0^{-2}) $ shots, so $T = M't = \Theta(\epsilon^{-2}t^{-1}e^{2\gamma t}a_0^{-2}) $ total calls to $U$.
\end{proof}

Unlike in Theorems~\ref{thm:fmpe-cost-nonoise} and~\ref{thm:fmpe-cost-fixed-fidelity}, where one can take $t = \Theta(\epsilon^{-1})$ and obtain Heisenberg-limited scaling $T = \Theta(\epsilon^{-1})$, such a choice is no longer optimal here, as the cost grows exponentially with $t$.
Instead, as in Ref.~\cite{dutkiewiczError2025}, optimizing over $t$ yields an optimal depth $t = \Theta(\gamma^{-1})$ and total cost
$T = \Theta(\gamma \epsilon^{-2} a_0^{-2})$.
Thus, the optimal circuit depth is set by the noise rate rather than the target precision.
This generalizes the result $T = \Theta(\gamma \epsilon^{-2})$ of Ref.~\cite{dutkiewiczError2025} to the case of imperfect initial states, with an additional multiplicative overhead of $a_0^{-2}$.

\subsection{Gaussian kernel and arbitrary noise with noise unbiasing}
\label{sec:gaussian_unbiasing}

Unlike the case of global depolarizing noise, arbitrary noise cannot be modelled at the level of distribution outcome.
To deal with arbitrary noise, we instead use the NU-FMPE (Definition \ref{def:m-projection-phase-estimator-unbiased}), which fits a noiseless model to the filtered samples from the quasiprobability decomposition of the ideal noiseless outcome distribution $P(x)$.
In this section we report the results on the bias and variance of the NU-FMPE when we assume a Gaussian kernel, thus $P(x)$ and $Q(x|\phi)$ as in Eq.~\eqref{eq:gaussian-prop-dist-no-noise} and Eq.~\eqref{eq:gaussian-model-no-noise} respectively, and a quasiprobability decomposition $P(x) = \sum_{a=0}^r \alpha_a P_a(x)$.

\begin{lemma} \label{lem:gaussian-unbiased}
    Consider the noise-unbiased moment projection phase estimator (Def.~\ref{def:m-projection-phase-estimator-unbiased}) with  Gaussian kernel function $f_\sigma$ (Def.~\ref{def:gaussian_kernel_function} with $\epsilon_{\text{synth}} = 0$), coefficients $\alpha$ and ground state overlap $a_0$.
    Assume that $\mathcal{D}$ is a promise interval (Def.~\ref{def:promise_interval}) with inner buffer $d_{\text{in}} = |\mathcal{D}|/6$ and outer buffer $d_{\text{out}} = d$.
    Then, for any $\sigma \leq \min(|\mathcal{D}|/6, 1/3c)$,
    the asymptotic bias and variance of the estimator in Lemma~\ref{lem:nme-convergence} satisfy
    \begin{align}
    \label{eq:unbiased-bias-bound}
    |\phi_0-\phi_*| &\leq 4.4\,\sigma^2 g_\sigma(d) \frac{1-a_0}{a_0} \frac{1+c|\mathcal{D}|}{d} + O\big( g^2_\sigma(d)\big),\\
    %\label{eq:unbiased-var-bound}
    \lim_{M\to\infty} M \Var[\tilde\phi] &\leq 102\, \|\alpha\|_1^2\sigma^2 (1 + |\mathcal{D}| c)^2\log(\frac{1}{2c\sigma}) + O\big(g_\sigma(d)\big).    
    \end{align}
\end{lemma}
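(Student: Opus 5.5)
The plan is to instantiate Lemma~\ref{lem:nme-expansion} with the Gaussian model $Q(x|\phi)=g_\sigma(x-\phi)/G_\sigma(\phi)$ of Eq.~\eqref{eq:gaussian-model-no-noise} and the noiseless filtered target $P(x)$ of Eq.~\eqref{eq:gaussian-prop-dist-no-noise}. The proof then reduces to bounding three quantities: the model error $\|h\|_1$, the maximal regularized score $S_c=\max_{x\in\mathcal{D}}|\partial_\phi\log Q_c(x|\phi)|_{\phi_0}$, and the regularized Fisher information $\mathcal{I}_c$. The bias bound is $\|h\|_1 S_c/\mathcal{I}_c$ and the variance bound is $\|\alpha\|_1^2 S_c^2/\mathcal{I}_c^2$.

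\textbf{Step 1: the model error $h$.} Write $h=P-Q(\cdot|\phi_0)$. I would reuse the computation from the proof of Lemma~\ref{lem:gaussian-no-noise}, since $P$ and $Q$ are identical to the noiseless FMPE case. The spurious contributions satisfy $a_jg_\sigma(x-\phi_j)\le a_j g_\sigma(d)$ on $\mathcal{D}$, because each $\phi_j$ lies at distance at least $d$ from $\mathcal{D}$. The normalization $a_0G_\sigma(\phi_0)$ is at least $a_0(1-O(e^{-|\mathcal{D}|^2/18\sigma^2}))$, using the inner buffer $|\mathcal{D}|/6$ and $\sigma\le|\mathcal{D}|/6$. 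Together these give
\[
\|h\|_1\lesssim 2|\mathcal{D}|\,g_\sigma(d)\tfrac{1-a_0}{a_0}+O(g_\sigma(d)^2).
\]

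\textbf{Step 2: the score.} Differentiating gives
\[
\partial_\phi\log Q_c=\frac{Q(x|\phi)}{Q(x|\phi)+c}\Big[\frac{x-\phi}{\sigma^2}-\partial_\phi\log G_\sigma(\phi)\Big].
\]
The second term in the bracket is exponentially small. The prefactor is at most $\min(1,Q/c)$, so the score is bounded by $\max_u \min\big(1,\,g_\sigma(u)/c\big)\,|u|/\sigma^2$. Where $g_\sigma(u)\gtrsim c$, that is for $|u|\lesssim\sigma\sqrt{2\log(1/(\sqrt{2\pi}c\sigma))}$, the bound is $|u|/\sigma^2$. Beyond that point $g_\sigma(u)|u|/(c\sigma^2)$ decays. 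Hence
\[
S_c\lesssim\sigma^{-1}\sqrt{2\log(1/(2c\sigma))}.
\]
The condition $\sigma\le 1/3c$ ensures the logarithm is positive and order one or larger.

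\textbf{Step 3: the Fisher information.} I need a lower bound $\mathcal{I}_c\gtrsim\sigma^{-2}/(1+c|\mathcal{D}|)$. This is the step I expect to be the main obstacle, because the regularization suppresses the score precisely in the tails, and the tail cut must be tracked carefully to get explicit constants like $102$. My first attempt is to write
\[
\mathcal{I}_c=\int_{\mathcal{D}} \frac{Q^2}{Q+c}\,(\partial_\phi\log Q)^2\,\dd x
\]
and restrict the integral to the core $|x-\phi_0|\le\sigma$, which lies inside $\mathcal{D}$ by the inner buffer. On this core $Q\ge g_\sigma(\sigma)\approx 0.24/\sigma$ and $Q/(Q+c)\ge 1/(1+c\sigma/0.24)$, so $\mathcal{I}_c\ge C\sigma^{-2}$ for an explicit constant $C$. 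The factor $1+c|\mathcal{D}|$ should enter through the normalization of the quasi-likelihood. In Lemma~\ref{lem:nme-convergence} the curvature denominator is taken with respect to $P_c=P+c$, whose total mass is $1+c|\mathcal{D}|$, and comparing it with the score integral produces that factor. I would track it conservatively, for example by bounding ratios such as $(Q+c)/Q$ on the core.

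\textbf{Step 4: assembly.} Multiplying the bounds gives:
\[
\text{bias}\lesssim 2|\mathcal{D}|g_\sigma(d)\tfrac{1-a_0}{a_0}\cdot\frac{S_c}{\mathcal{I}_c}.
\]
For the bias I would not use the crude $S_c$ bound. Instead, as in Lemma~\ref{lem:gaussian-no-noise}, I would use the sharper first-order expression of Eq.~\eqref{eq:m-proj-1st-order-bias}, in which $h$ is supported near the edges of $\mathcal{D}$ and the score there is at most $|\mathcal{D}|/\sigma^2$ times a decaying factor. This yields $\sigma^2 g_\sigma(d)(1-a_0)(1+c|\mathcal{D}|)/(a_0 d)$ with constant $4.4$. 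For the variance, squaring $S_c/\mathcal{I}_c$ gives
\[
\sigma^2\log(1/2c\sigma)(1+c|\mathcal{D}|)^2\|\alpha\|_1^2,
\]
with the numeric constant $102$ collected from the core-restriction constants. Finally, I would check the convexity hypothesis $\partial_\phi^2D_{\mathrm{KL}}>0$ on $[\phi_*,\phi_0]$. This follows from the same core-dominated estimate, since the curvature differs from $\mathcal{I}_c$ by $O(\|h\|_1)$.
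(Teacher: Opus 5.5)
\paragraph{Bias bound: genuine gap.}
Your Steps 2 and 3 follow the paper's route. Regularization caps the score at about $\sigma^{-1}\sqrt{2\log(1/2c\sigma)}$ via the same $\xi/(1+Ae^{\xi^2/2})$ argument, and restricting $\mathcal{I}_c$ to the core $|x-\phi_0|\le\sigma$ is the paper's restriction to $\xi\in[-1,1]$. Squaring $S_c/\mathcal{I}_c$ then gives the variance as in the paper, up to constants. The bias half, however, has a genuine gap.

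Your Step~1 bounds the spurious Gaussians pointwise, $g_\sigma(x-\phi_j)\le g_\sigma(d)$, over an interval of length $|\mathcal{D}|$, giving $\|h\|_1\lesssim 2|\mathcal{D}|g_\sigma(d)(1-a_0)/a_0$. This loses a factor of order $\sigma^2/(d|\mathcal{D}|)$. Combined with $S_c/\mathcal{I}_c=O(\sigma\sqrt{\log(1/2c\sigma)})$, it only yields a bias $O(|\mathcal{D}|\sigma g_\sigma(d)\sqrt{\log(1/2c\sigma)})$, not the $\sigma^2 g_\sigma(d)/d$ form.

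Step~4 then asserts that a ``sharper first-order expression'' with a ``decaying factor'' gives the target with constant $4.4$, but nothing is derived. Three problems:
\begin{itemize}
\item The appeal to Lemma~\ref{lem:gaussian-no-noise} is inaccurate. The paper proves that lemma with the crude product $\|h\|_1\max_x|\partial_\phi\log Q|/\mathcal{I}_0$.
\item $h=(u-\|u\|_1Q(\cdot|\phi_0))/(a_0G_\sigma(\phi_0)+\|u\|_1)$ is not supported only near the edges of $\mathcal{D}$.
\item The $\sigma^2/d$ can only come from integrating the tail of $u$, which you never do.
\end{itemize}

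The missing ingredient is Mills' ratio:
\[
\int_{\mathcal{D}}g_\sigma(x-\phi_j)\,\dd x\le\int_d^\infty g_\sigma(y)\,\dd y\le\frac{\sigma^2}{d}\,g_\sigma(d),
\qquad
\|h\|_1\le\frac{2}{G_\sigma(\phi_0)}\,\frac{1-a_0}{a_0}\,\frac{\sigma^2}{d}\,g_\sigma(d).
\]
With this, the crude product $\|h\|_1S_c/\mathcal{I}_c$ already suffices. It gives a bias $\lesssim 2C_1\frac{1-a_0}{a_0}\frac{\sigma^3}{d}g_\sigma(d)(1+C_2c|\mathcal{D}|)\sqrt{\log(1/2c\sigma)}$. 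The paper then applies $x\sqrt{\log(1/x)}\le(2e)^{-1/2}$ at $x=2c\sigma$, replacing $2C_1\sigma\sqrt{\log(1/2c\sigma)}$ by $C_1/(c\sqrt{2e})\lesssim 4.4/c$. This is where the $4.4$ comes from.

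The argument therefore lands on $4.4\,\sigma^2g_\sigma(d)\frac{1-a_0}{a_0}\frac{1+0.7c|\mathcal{D}|}{c\,d}$. That matches the remark after the lemma comparing to the noiseless bias. The displayed statement appears to have lost the $1/c$, so do not expect to hit it literally.

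\paragraph{Origin of the factor $1+c|\mathcal{D}|$.}
Your explanation of this factor is wrong. Because $\int_{\mathcal{D}}Q(x|\phi)\,\dd x=1$ for every $\phi$,
\[
\int_{\mathcal{D}}Q_c\,(-\partial_\phi^2\log Q_c)\,\dd x=\int_{\mathcal{D}}\frac{(\partial_\phi Q)^2}{Q_c}\,\dd x-\partial_\phi^2\int_{\mathcal{D}}Q\,\dd x=\mathcal{I}_c .
\]
So the curvature under $P_c=Q_c+h$ is $\mathcal{I}_c+O(\|h\|_1)$, and the extra mass $c|\mathcal{D}|$ of $P_c$ contributes no factor.

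In the paper the factor arises only because $c\sigma$ in the core denominator $1+c\sigma/g_1(1)$ is bounded by $c|\mathcal{D}|/6$. Your own core estimate uses $c\sigma\le 1/3$ and gives $\mathcal{I}_c\gtrsim\sigma^{-2}$ uniformly in $c|\mathcal{D}|$. That is stronger, so you can simply drop the mass mechanism.
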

The bias bound differs from the noiseless case (Lemma~\ref{lem:gaussian-no-noise}) only by a multiplicative constant factor of $1.1\frac{1+c|\mathcal{D}|}{c|\mathcal{D}|}$, and is independent of the quasiprobability overhead.
The variance, however, is increased by factor of  $51\|\alpha\|_1^2 (1 + |\mathcal{D}| c)^2\log(\frac{1}{2c\sigma})$.
As expected, error mitigation removes noise-induced bias at the cost of sample overhead.
We now combine this result with Lemma~\ref{lem:generating-filtered-samples-pec} to derive the overall resource requirements of the NU-FMPE.

\begin{theorem} [Cost of NU-FMPE with arbitrary noise]
\label{thm:fnmpe-cost}
    Let $U$ be a unitary with spectral gap $\Delta$ around a target state $\phi_0$ ($\forall_{j>0} |\phi_j - \phi_0| > \Delta$).
    Assume noisy oracle access to a controlled version of $U$, and an initial state $\ket{\psi}$ such that $|\braket{\phi_0}{\psi}|^2 \geq \eta$.
    Assume that for any circuit using $t$ calls to controlled-$U$ and one preparation of $\ket{\psi}$, there exists a
    quasiprobability decomposition into implementable noisy circuits with
    coefficient vector $\beta$, whose 1-norm satisfies
    $\|\beta\|_1 \leq A$.
    Further assume an initial estimate $\phi_{guess}$ of $\phi_0$ such that $|\phi_0-\phi_{guess}|<\Delta/3$.
    Then, the NU-FMPE (Def.~\ref{def:m-projection-phase-estimator-unbiased}) using $t$ calls to $U$ per circuit produces an estimate $\tilde\phi$ with RMS error $\epsilon$ using $M' = O((A+\kappa A^2)\eta^{-1}t^{-2}\epsilon^{-2}\log(\Delta t))$ samples and $T = O((A+\kappa A^{2})\eta^{-1}\epsilon^{-2}t^{-1}\log(\Delta t))$ total calls to $U$, as long as $t= {\Omega}(\Delta^{-1}\log^{1/2}(\Delta\epsilon^{-1}\eta^{-1}))$  and $M'=\Omega(\eta^{-1}(A+\kappa A^2))$,
    and $\kappa \in [0,1)$  is the filtering parameter introduced in Lemma~\ref{lem:generating-filtered-samples-pec}.
\end{theorem}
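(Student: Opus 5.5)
The plan follows the template of the proof of Theorem~\ref{thm:fmpe-cost-nonoise}, now using Lemma~\ref{lem:generating-filtered-samples-pec} for sample generation and Lemma~\ref{lem:gaussian-unbiased} for the classical estimator. By Lemma~\ref{lem:gaussian_kernel_synthesis}, a Gaussian kernel of width $\sigma$ costs $t=\Theta(\sigma^{-1})$ calls to $U$ (ignoring the $\log\epsilon_{\text{synth}}^{-1}$ overhead). Applying the PEC decomposition with coefficients $\beta$ to each such circuit gives a quasiprobability decomposition $p=\sum_a\beta_a p_a$ with $\|\beta\|_1\le A$. Lemma~\ref{lem:generating-filtered-samples-pec} then builds the promise interval $\mathcal{D}$ of width $\Delta$ centred on $\phi_{guess}$. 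This interval has $d_{\text{in}}=|\mathcal{D}|/6$ and $d=d_{\text{out}}\ge\Delta/6$, and it yields filtered samples $(a,x)$ from the decomposition $P=\sum_a\alpha_aP_a$ with $\|\alpha\|_1\le 1+\kappa A/P_A$. These are exactly the inputs required by Definition~\ref{def:m-projection-phase-estimator-unbiased}.

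\textbf{Bias.} I will fix the regularisation constant as $c=\Theta(|\mathcal{D}|^{-1})=\Theta(\Delta^{-1})$. Then the factor $(1+c|\mathcal{D}|)/d=O(\Delta^{-1})$, and the condition $\sigma\le 1/3c$ reduces to $\sigma=O(\Delta)$. The bias bound of Lemma~\ref{lem:gaussian-unbiased} becomes $O(\sigma^2\sigma^{-1}e^{-d^2/2\sigma^2}\eta^{-1}\Delta^{-1})$. Requiring this to be at most $\epsilon$, with $d=\Theta(\Delta)$ and $\sigma=\Theta(t^{-1})$, gives $t=\Omega(\Delta^{-1}\log^{1/2}(\Delta\epsilon^{-1}\eta^{-1}))$, matching the stated condition.

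\textbf{Variance.} With this choice of $c$, Lemma~\ref{lem:gaussian-unbiased} gives $M\Var[\tilde\phi]=O(\|\alpha\|_1^2\sigma^2\log(\Delta/\sigma))=O(\|\alpha\|_1^2t^{-2}\log(\Delta t))$. Hence $M=O(\|\alpha\|_1^2t^{-2}\epsilon^{-2}\log(\Delta t))$ filtered samples suffice for variance $O(\epsilon^2)$. The number of raw circuits is $M'=M/P'_A$, with $P'_A=\|\alpha\|_1P_A/A$ from Eq.~\eqref{eq:p_accept_pec}, so
\[
M' = O\!\left(\frac{A\|\alpha\|_1}{P_A}\,t^{-2}\epsilon^{-2}\log(\Delta t)\right).
\]
Inserting $\|\alpha\|_1\le 1+\kappa A/P_A$ gives $A\|\alpha\|_1/P_A\le A/P_A+\kappa A^2/P_A^2$.

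\textbf{Main obstacle.} The last step needs $P_A=\Omega(\eta)$, which follows from Eq.~\eqref{eq:p_accept} in the regime $\sigma\lesssim\Delta$ because $\int_{-\Delta/6}^{\Delta/6}f_\sigma=1-o(1)$. The difficulty is that the $\kappa$ term then carries $P_A^{-2}$, i.e.\ $\eta^{-2}$, whereas the statement claims a single factor $\eta^{-1}$ multiplying $(A+\kappa A^2)$. To reconcile this I would first try to bound the cross term more carefully. The natural route is to absorb the extra $P_A^{-1}$ using the $(1-F)|\mathcal{D}|/2\pi$ contribution to $P_A$, or to redefine $\kappa$ relative to $\int_{\mathcal{D}}p$ instead of the bound $P_A$. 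Failing that, I would state the bound with $\eta^{-1}$ absorbed into $\kappa$, as the informal version's $\widetilde O$ notation suggests.

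\textbf{Asymptotic regime and total cost.} Finally, $M=\omega(1)$ is needed for the asymptotic normality of Lemma~\ref{lem:nme-convergence}, which gives the stated lower bound on $M'$. The total cost is $T=M't$, which produces the claimed expression for $T$.
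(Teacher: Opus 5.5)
Your plan matches the paper's proof step for step: Lemma~\ref{lem:generating-filtered-samples-pec} for the interval and the filtered quasiprobability samples, $\sigma=\Theta(t^{-1})$, $c=\Theta(\Delta^{-1})$, the bias of Lemma~\ref{lem:gaussian-unbiased} fixing $t$, its variance fixing $M=O(\|\alpha\|_1^2t^{-2}\epsilon^{-2}\log(\Delta t))$, and then division by the acceptance probability $\|\alpha\|_1P_A/\|\beta\|_1$. Your bookkeeping up to $M'=O\big((AP_A^{-1}+\kappa A^2P_A^{-2})\,t^{-2}\epsilon^{-2}\log(\Delta t)\big)$ is correct. The obstacle you flag is genuine, not something you missed. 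The paper closes the argument by asserting $\|\alpha\|_1=O(1+\kappa\|\beta\|_1)$ ``by Lemma~\ref{lem:generating-filtered-samples-pec}'', but Eq.~\eqref{eq:filtered-weights-norm} only gives $\|\alpha\|_1\le 1+\kappa\|\beta\|_1/P_A$. A factor $P_A^{-1}=O(\eta^{-1})$ is therefore silently dropped from the $\kappa$ term.

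Your first repair cannot work. Def.~\ref{def:m-projection-phase-estimator-unbiased} decomposes the noiseless ($F=1$) filtered distribution, so $P_A=a_0\int_{\mathcal{D}}f_\sigma(x-\phi_0)\dd x$ has no $(1-F)|\mathcal{D}|/2\pi$ floor to exploit. A sharper variance estimate will not rescue the displayed bound either. Write $s=[\partial_\phi\log Q_c]_{\phi=\phi_0}$. The second moment of the per-sample term $\|\alpha\|_1\mathrm{sgn}(\alpha_a)s(x)$ equals $\|\alpha\|_1\int_{\mathcal{D}}\sum_a|\beta_a|p_a s^2\,\dd x\big/\int_{\mathcal{D}}p\,\dd x$. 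Suppose the negative-weight components put their (weighted) in-$\mathcal{D}$ mass $\kappa\|\beta\|_1/2$ where $|s|$ is near its maximum, i.e.\ within $O(\sigma\sqrt{\log(1/c\sigma)})$ of $\phi_0$; the paper's own numerics show $p_1$ peaked near $\phi_0$. Then one genuinely needs $M'\gtrsim\kappa A^2\eta^{-2}t^{-2}\epsilon^{-2}$, which exceeds the claimed bound as $\eta\to0$ at fixed $\kappa$ and $A$.

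So what the lemmas actually support is $M'=O\big((A\eta^{-1}+\kappa A^2\eta^{-2})t^{-2}\epsilon^{-2}\log(\Delta t)\big)$ and $T=tM'$. Your other option is to measure the excess relative to $\int_{\mathcal{D}}p$, i.e.\ set $\kappa'=\kappa/\int_{\mathcal{D}}p$ so that $\|\alpha\|_1=1+\kappa'\|\beta\|_1$. This reproduces the displayed formula, but only by giving up $\kappa'<1$, since $\kappa'$ can be of order $\eta^{-1}$. Commit to one of these corrected statements rather than hedging.

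One small fix to your bias step. The bound stated in Lemma~\ref{lem:gaussian-unbiased} is missing a factor $1/c$; its proof ends with $2C_1\sigma\sqrt{\log(1/2c\sigma)}\lesssim 4.4/c$. With $c=\Theta(\Delta^{-1})$ the bias is therefore $O(\sigma\eta^{-1}e^{-d^2/2\sigma^2})$, without your extra $\Delta^{-1}$. This is the form that gives exactly the $\log^{1/2}(\Delta\epsilon^{-1}\eta^{-1})$ condition on $t$.
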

\begin{proof}
We prove this theorem by combining Lemma~\ref{lem:gaussian-unbiased} with Lemma~\ref{lem:generating-filtered-samples-pec}.
By Lemma~\ref{lem:generating-filtered-samples-pec}, we can construct a promise interval with that satisfies the assumptions of Lemma~\ref{lem:gaussian-unbiased}, with $|\mathcal{D}|, d_\text{in}, d_\text{out} = \Theta(\Delta)$, and sample from a PEC decomposition with weights $\alpha$ of the probability distibution $P(x)$ on this interval, by using circuit with $t$ uses of $U$ that synthesises a Gaussian with $\sigma = \Theta(t^{-1})$ (Lemma~\ref{lem:gaussian_kernel_synthesis}).
We fix the regularisation constant in NU-FMPE to be $c = \Theta(\Delta^{-1})$, so that the product $c|\mathcal{D}|$ which appears in the bounds in Lemma~\ref{lem:gaussian-unbiased} is $\Theta(1)$.
Due to Lemma~\ref{lem:gaussian-unbiased}, $t= {\Omega}(\Delta^{-1}\log^{1/2}(\Delta\epsilon^{-1}\eta^{-1}))$ is enough to ensure the bias is $O(\epsilon)$, and $M = \Theta(\|\alpha\|_1^2 \epsilon^{-2}t^{-2}\log(\Delta t))$ filtered samples are enough to ensure the variance is $O(\epsilon^2)$.
By Lemma~\ref{lem:generating-filtered-samples-pec}, the probability of accepting a sample is at least $\Omega( \eta^{-1}\|\alpha\|_1/\|\beta\|_1)$.
Therefore, the number of shots needed is $M' = O(\|\alpha\|_1 \|\beta\|_1\eta^{-1}t^{-2}\epsilon^{-2}\log(\Delta t))$.
By Lemma~\ref{lem:generating-filtered-samples-pec}, $\|\alpha\|_1 = O(1+\kappa\|\beta\|_1)$, so $M' = O(A(1+\kappa A)\,\eta^{-1}t^{-2}\log(\Delta t)\epsilon^{-2})$.
\end{proof}

Finally, we apply the result above to practically relevant noise models and express the complexity in terms of the circuit fidelity $F$.
For local stochastic noise channels with uniform error rate that is locally invertible the quasiprobability overhead norm $\|\beta\|_1$ satisfies $\|\beta\|_1 \leq F^{-2}$ \cite{caiQuantum2023,temmeError2017,endoPractical2018,dutkiewiczError2025}.
This class includes local Pauli channels with constant and qubit-independent error rate, e.g.~local depolarizing noise.
Substituting this relation into Theorem~\ref{thm:fnmpe-cost} yields the following corollary:
\begin{corollary} [Cost of NU-FMPE with Pauli noise]
\label{thm:fnmpe-cost-pauli}
    Let $U$ be a unitary with spectral gap $\Delta$ around a target state $\phi_0$ ($\forall_{j>0} |\phi_j - \phi_0| > \Delta$).
    Assume noisy oracle access to a controlled version of $U$, and an initial state $\ket{\psi}$ such that $|\braket{\phi_0}{\psi}|^2 \geq \eta$.
    Assume local Pauli noise, such that any circuit using $t$ calls to controlled-$U$ and one preparation of $\ket{\psi}$ has circuit fidelity $F$.
    Further assume an initial estimate $\phi_{guess}$ of $\phi_0$ such that $|\phi_0-\phi_{guess}|<\Delta/3$. 
    Then, the NU-FMPE (Def.~\ref{def:m-projection-phase-estimator-unbiased}) using $t$ calls to $U$ per circuit produces an estimate $\tilde\phi$ with RMS error $\epsilon$ using $M' = O((F^{-2}+\kappa F^{-4})\eta^{-1}t^{-2}\epsilon^{-2})$ samples and $T = O((F^{-2}+\kappa F^{-4})\epsilon^{-2}t^{-1}\eta^{-1})$ total calls to $U$, as long as $t= {\Omega}(\Delta^{-1}\log^{1/2}(\Delta\epsilon^{-1}\eta^{-1}))$ and $M'=\Omega(\eta^{-1}(F^2 + \kappa F^4))$ and $\kappa \in [0,1)$  is the filtering parameter introduced in Lemma~\ref{lem:generating-filtered-samples-pec}.
\end{corollary}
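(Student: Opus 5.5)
This follows from Theorem~\ref{thm:fnmpe-cost} once we bound the quasiprobability norm, so the plan is a direct substitution. We verify the hypothesis of Theorem~\ref{thm:fnmpe-cost}: for any circuit with $t$ calls to controlled-$U$ and one preparation of $\ket{\psi}$, there is a quasiprobability decomposition into implementable noisy circuits with $\|\beta\|_1\leq A$. Under local Pauli noise with uniform, locally invertible error rates, each noisy location is inverted by a local quasiprobability map. The product of the local overheads then gives the standard PEC bound $\|\beta\|_1\leq F^{-2}$ in terms of the circuit fidelity $F$ \cite{caiQuantum2023,temmeError2017,endoPractical2018,dutkiewiczError2025}. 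We therefore set $A=F^{-2}$.

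Concretely, for a single Pauli channel with fidelity $f_\ell$, the inverse channel has one-norm at most $(2-f_\ell)/f_\ell\leq f_\ell^{-2}$ for $f_\ell\in(0,1]$. Indeed, $(2-f)f\leq 1$ is equivalent to $(1-f)^2\geq 0$. Multiplying over locations and using $F=\prod_\ell f_\ell$ gives $\|\beta\|_1\leq\prod_\ell f_\ell^{-2}=F^{-2}$. This is the only step with content. For Pauli channels that are not depolarizing, the inverse one-norm must still be bounded by $f_\ell^{-2}$, and I would rely on the cited literature for that. Tensoring the local inverses gives a global decomposition whose distributions $p_a$ are outputs of implementable circuits, which is what Lemma~\ref{lem:generating-filtered-samples-pec} requires.

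Substituting $A=F^{-2}$ into Theorem~\ref{thm:fnmpe-cost} gives $A+\kappa A^2=F^{-2}+\kappa F^{-4}$. This yields the stated bounds on $M'$ and $T$, with the same condition on $t$ and with $\kappa$ as defined in Lemma~\ref{lem:generating-filtered-samples-pec}. The condition $M'=\Omega(\eta^{-1}(A+\kappa A^2))$ becomes a requirement in terms of $F^{-2}+\kappa F^{-4}$.

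Two points need care. The factor $\log(\Delta t)$ from Theorem~\ref{thm:fnmpe-cost} should be kept, or absorbed into the $\widetilde O$ of the informal statement. I also expect the corollary's condition, written $\Omega(\eta^{-1}(F^2+\kappa F^4))$, to be a typo for the inverse powers. Since the hypothesis $\|\beta\|_1\leq F^{-2}$ comes from the cited literature, this is a bookkeeping check rather than a real obstacle.
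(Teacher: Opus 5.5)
Your route is the same as the paper's. The paper simply quotes $\|\beta\|_1\le F^{-2}$ from the PEC literature and substitutes $A=F^{-2}$ into Theorem~\ref{thm:fnmpe-cost}. Your two bookkeeping remarks are also correct: the $\log(\Delta t)$ factor is dropped, and the condition should read $M'=\Omega(\eta^{-1}(F^{-2}+\kappa F^{-4}))$.

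\textbf{The flaw: the derivation of your one substantive step is wrong.} The value $(2-f)/f$ is the cost of the decomposition $\mathrm{id}=f^{-1}\mathcal{E}-(f^{-1}-1)\mathcal{E}_{\mathrm{err}}$, where $\mathcal{E}_{\mathrm{err}}$ is the channel conditioned on an error having occurred. That decomposition requires running $\mathcal{E}_{\mathrm{err}}$ noiselessly, i.e.\ heralded errors. It is the local analogue of the split $p=(1-F)p_1+F(a*f)$ used in the numerics, which the paper itself flags as ignoring the real PEC overhead.

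With unheralded noise, every implementable operation at a location has the form $\mathcal{C}\circ\mathcal{E}$ with $\mathcal{C}$ CPTP. You must therefore expand $\mathcal{E}^{-1}=\sum_a\beta_a\mathcal{C}_a$, and the triangle inequality forces $\|\beta\|_1\ge\|\mathcal{E}^{-1}\|_\diamond$. For $\mathcal{E}(\rho)=f\rho+\frac{1-f}{3}(X\rho X+Y\rho Y+Z\rho Z)$, this equals the one-norm of the unique Pauli expansion of the inverse:
$\frac{5-2f}{4f-1}=\frac{2-f}{f}+\frac{2(1-f)^2}{f(4f-1)}$.
So your claimed bound fails for every $f<1$. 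The target inequality still holds for depolarising noise, since $f^{-2}-\frac{5-2f}{4f-1}=\frac{(1-f)^2(2f-1)}{f^2(4f-1)}$, but only when $f\ge 1/2$.

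The per-location bound $\|\mathcal{E}_\ell^{-1}\|_1\le f_\ell^{-2}$ that you plan to import for other Pauli channels is simply false. For dephasing $\rho\mapsto f\rho+(1-f)Z\rho Z$, the inverse has one-norm $1/(2f-1)$, and $f^2-(2f-1)=(1-f)^2>0$.

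\textbf{What is true, and is all an $O(\cdot)$ statement needs.} Write $\mathcal{E}_\ell=f_\ell(\mathrm{id}+r_\ell\mathcal{N}_\ell)$ with $\mathcal{N}_\ell$ a Pauli channel and $r_\ell=(1-f_\ell)/f_\ell<1$. The Pauli-coefficient one-norm is submultiplicative under composition, so the Neumann series gives $\|\mathcal{E}_\ell^{-1}\|_1\le f_\ell^{-1}\sum_{k\ge0}r_\ell^k=1/(2f_\ell-1)$, which is tight for dephasing. With $\delta=\max_\ell(1-f_\ell)\le 1/4$ and $1-f_\ell\le\log(1/f_\ell)$,
\begin{equation*}
\|\beta\|_1\le\prod_\ell\frac{1}{2f_\ell-1}=F^{-2}\prod_\ell\Big(1+\frac{(1-f_\ell)^2}{2f_\ell-1}\Big)\le F^{-2}\exp\Big(2\delta\sum_\ell(1-f_\ell)\Big)\le F^{-2(1+\delta)}.
\end{equation*}
This is $O(F^{-2})$ whenever $\delta\log(1/F)=O(1)$, in particular in the $F^{2}\gtrsim 1/e$ regime the paper targets. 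Taking $A=O(F^{-2})$, your substitution into Theorem~\ref{thm:fnmpe-cost} then goes through unchanged. Note that the paper's own sentence ``$\|\beta\|_1\le F^{-2}$'' is only valid in this asymptotic sense as well.
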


The observed worst-case scaling of $F^{-4}$ is standard for PEC.
However, in practice we expect the error mitigation overhead of NU-FMPE to be significantly smaller.
If the noise is uniformly distributed, then the overhead is $\sim F^{-2}+\frac{|\mathcal{D}|}{2\pi} F^{-4}$.
Following Ref.~\cite{dutkiewiczError2025}, when estimating energies in an early-fault tolerant setting, we expect the freedom to choose the number $t$ of uses of $U$ in the QPE circuit such that $F^2\geq 1/e$, which is optimal whenever the fidelity decreases exponentially with $t$.
In this case, $F^{-2}\frac{|\mathcal{D}|}{2\pi} \leq \frac{e|\mathcal{D}|}{2\pi}$, so as long as our promise interval has width significantly smaller than $2\pi/e\sim 2$, the $\kappa F^{-4}$ term will be subdominant to the $F^{-2}$ contribution.
Therefore, the NU-FMPE remains nearly unbiased whilst scaling only inverse-quadratically in the fidelity in the practically relevant regime.

\section{Numerical demonstration}
\label{sec:numerics}

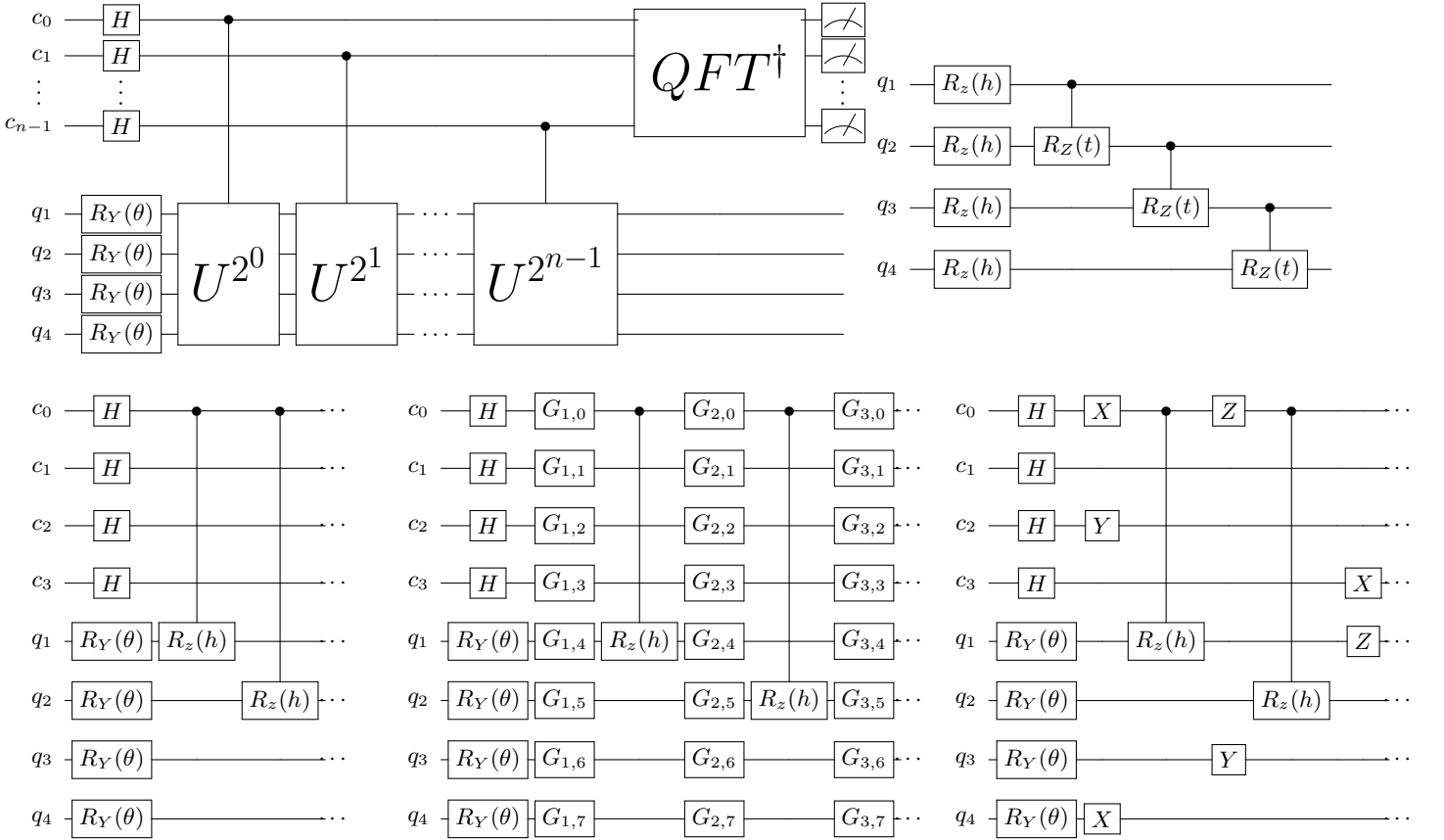
\begin{figure}
\centering

%==================================================
% ROW 1: NOISELESS CIRCUIT
%==================================================

\begin{minipage}{0.55\linewidth}
\centering
\begin{equation*}
    \Qcircuit @C=0.7em @R=.1em {
     %1st control bit
    \lstick{c_0} & \gate{H}&\ctrl{5}&\qw&\qw& \qw & \multigate{3}{\text{\huge $QFT^\dagger$}}& \meter\\ 
    %2nd control bit
    \lstick{c_1} &\gate{H}&\qw&\ctrl{4}& \qw &\qw& \ghost{\text{ \huge $QFT^\dagger$}}& \meter\\
    %vdots
    \lstick{\rotatebox{90}{$\cdots$}\ }&\push{\rotatebox{90}{$\cdots$}}&&&&&&\rotatebox{90}{$\cdots$}\\
    %last control bit
    \lstick{c_{n-1}} & \gate{H}&\qw &\qw&\qw&\ctrl{2}& \ghost{\text{ \huge $QFT^\dagger$}}& \meter\\ 
    %break line
    \push{\rule{0em}{2em}}&&&\\
    %system register
    \lstick{q_1}&\gate{R_Y(\theta)}& \multigate{3}{\text{\huge $U^{2^0}$}} & \multigate{3}{\text{\huge $U^{2^1}$}} & \push{\rule{0.15em}{0em}\dots\rule{0.15em}{0em}}\qw & \multigate{3}{\text{\huge $U^{2^{n-1}}$}} & \qw&\qw\\
    \lstick{q_2}&\gate{R_Y(\theta)}& \ghost{\text{\huge $U^{2^0}$}} & \ghost{\text{\huge $U^{2^1}$}} & \push{\rule{0.15em}{0em}\dots\rule{0.15em}{0em}}\qw & \ghost{\text{\huge $U^{2^{n-1}}$}} & \qw&\qw\\
    \lstick{q_3}&\gate{R_Y(\theta)}& \ghost{\text{\huge $U^{2^0}$}} & \ghost{\text{\huge $U^{2^1}$}} & \push{\rule{0.15em}{0em}\dots\rule{0.15em}{0em}}\qw & \ghost{\text{\huge $U^{2^{n-1}}$}} & \qw&\qw\\
    \lstick{q_4}&\gate{R_Y(\theta)}& \ghost{\text{\huge $U^{2^0}$}} & \ghost{\text{\huge $U^{2^1}$}} & \push{\rule{0.15em}{0em}\dots\rule{0.15em}{0em}}\qw & \ghost{\text{\huge $U^{2^{n-1}}$}} & \qw&\qw
    }
    \end{equation*}
\end{minipage}
\hfill
\begin{minipage}{0.4\linewidth}
\centering
\[
\Qcircuit @C=1.0em @R=1.0em {
\lstick{q_1} & \gate{R_z(h)} & \ctrl{1} & \qw & \qw & \qw \\
\lstick{q_2} & \gate{R_z(h)} & \gate{R_Z(t)} & \ctrl{1} & \qw & \qw\\
\lstick{q_3} & \gate{R_z(h)} & \qw & \gate{R_Z(t)} & \ctrl{1} & \qw\\
\lstick{q_4} & \gate{R_z(h)} & \qw & \qw  & \gate{R_Z(t)} & \qw
}
\]
\end{minipage}

\vspace{0.5em}

%==================================================
% BOTTOM ROW: NOISE SAMPLING
%==================================================

%-----------------------------
% (a) Noiseless layers
%-----------------------------
\begin{minipage}{0.2\linewidth}
\centering
\[
\Qcircuit @C=0.3em @R=0.9em {
\lstick{c_0} & \gate{H} & \ctrl{4} & \ctrl{5} & \push{\rule{0em}{1.5em}}\qw & \cdots \\
\lstick{c_1} & \gate{H} & \qw      & \qw      & \push{\rule{0em}{1.5em}}\qw & \cdots \\
\lstick{c_2} & \gate{H} & \qw      & \qw      & \push{\rule{0em}{1.5em}}\qw & \cdots \\
\lstick{c_3} & \gate{H} & \qw      & \qw      & \push{\rule{0em}{1.5em}}\qw & \cdots \\
\lstick{q_1} & \gate{R_Y(\theta)}      & \gate{R_z(h)} & \qw & \push{\rule{0em}{1.5em}}\qw & \cdots \\
\lstick{q_2} & \gate{R_Y(\theta)}      & \qw      & \gate{R_z(h)} & \push{\rule{0em}{1.5em}}\qw & \cdots \\
\lstick{q_3} & \gate{R_Y(\theta)} & \qw & \qw & \qw & \cdots \\
\lstick{q_4} & \gate{R_Y(\theta)} & \qw & \qw & \qw & \cdots
}
\]
\end{minipage}
\hfill
%-----------------------------
% (b) Abstract noisy circuit
%-----------------------------
\begin{minipage}{0.33\linewidth}
\centering
\[
\Qcircuit @C=0.3em @R=0.9em {
\lstick{c_0} & \gate{H} & \gate{G_{1,0}} & \ctrl{4} & \gate{G_{2,0}} & \ctrl{5} & \gate{G_{3,0}} & \qw &\cdots \\
\lstick{c_1} & \gate{H} & \gate{G_{1,1}} & \qw      & \gate{G_{2,1}} & \qw      & \gate{G_{3,1}} & \qw &\cdots \\
\lstick{c_2} & \gate{H} & \gate{G_{1,2}} & \qw      & \gate{G_{2,2}} & \qw      & \gate{G_{3,2}} & \qw &\cdots \\
\lstick{c_3} & \gate{H} & \gate{G_{1,3}} & \qw      & \gate{G_{2,3}} & \qw      & \gate{G_{3,3}} & \qw &\cdots \\
\lstick{q_1} & \gate{R_Y(\theta)}      & \gate{G_{1,4}} & \gate{R_z(h)} & \gate{G_{2,4}} & \qw & \gate{G_{3,4}} & \qw &\cdots \\
\lstick{q_2} & \gate{R_Y(\theta)} & \gate{G_{1,5}} & \qw      & \gate{G_{2,5}} & \gate{R_z(h)} & \gate{G_{3,5}} & \qw &\cdots \\
\lstick{q_3} & \gate{R_Y(\theta)} & \gate{G_{1,6}} & \qw      & \gate{G_{2,6}} & \qw & \gate{G_{3,6}} & \qw &\cdots \\
\lstick{q_4} & \gate{R_Y(\theta)} & \gate{G_{1,7}} & \qw      & \gate{G_{2,7}} & \qw & \gate{G_{3,7}} & \qw &\cdots
}
\]
\end{minipage}
\hfill
%-----------------------------
% (c) One sampled realization
%-----------------------------
\begin{minipage}{0.3\linewidth}
\centering
\[
\Qcircuit @C=0.33em @R=0.9em {
\lstick{c_0} & \gate{H} & \gate{X} & \ctrl{4} & \gate{Z} & \ctrl{5} & \qw & \qw & \push{\rule{0em}{1.5em}}\qw &\cdots \\
\lstick{c_1} & \gate{H} & \qw      & \qw      & \qw      & \qw      & \qw & \qw & \push{\rule{0em}{1.5em}}\qw &\cdots \\
\lstick{c_2} & \gate{H} & \gate{Y} & \qw      & \qw      & \qw      & \qw & \qw &\push{\rule{0em}{1.5em}}\qw &\cdots \\
\lstick{c_3} & \gate{H} & \qw      & \qw      & \qw      & \qw      & \qw & \gate{X} &\push{\rule{0em}{1.5em}}\qw & \cdots \\
\lstick{q_1} & \gate{R_Y(\theta)} & \qw      & \gate{R_z(h)} & \qw & \qw & \qw & \gate{Z} & \qw & \cdots \\
\lstick{q_2} & \gate{R_Y(\theta)} & \qw      & \qw      & \qw & \gate{R_z(h)} & \qw & \qw &\qw &\cdots \\
\lstick{q_3} & \gate{R_Y(\theta)} & \qw      & \qw      & \gate{Y} & \qw & \qw &\qw &\qw &\cdots \\
\lstick{q_4} & \gate{R_Y(\theta)} & \gate{X} & \qw      & \qw & \qw & \qw & \qw &\qw &\cdots
}
\]
\end{minipage}

\caption{
Circuits used in this work for example numerical QPE implementation.
\emph{Top row:} (Left) Noiseless circuit (Right) Decomposition of the system unitary
\(U = \prod_j e^{-i t Z_j Z_{j+1}}\prod_j e^{-i h Z_j}\).
\emph{Bottom row:}
(Left) First layers of the noiseless circuit for $n=4$ control qubits;
(Centre) insertion of stochastic gates \(G_{i,j}\) after each moment;
(Right) one explicit noisy circuit realization obtained by sampling Pauli errors.
}
\label{fig:ising_circuits}
\end{figure}

In this section, we implement and demonstrate the various estimators described throughout this text on a simplified problem for illustrative purposes.
This allows us to show the potential phase estimation practitioner how the above theory can be used in practice.
It furthermore allows us to test the effect of finite statistics on our estimators. The results derived above typically hold in the asymptotic limit in the number of samples, but the experimentalist is far more concerned over whether they need 5 or 5 million circuit samples to guarantee robustness.
And finally, the results allow us to check the critical constant factors in our phase estimation performance.

For simplicity, in this section we consider a 1D Ising model on 4 qubits:
\begin{equation}
    H = h \sum_{i=1}^4 Z_i + t \sum_{i=1}^3 Z_i Z_{i+1}
\end{equation}
with magnetic field $h = 0.27$, and coupling strength $t =-0.46$.
This is chosen to give a well-separated ground state, $\phi_1-\phi_0 = 1.46$, and we can choose a simple filtering interval $\mathcal{D}=[-\pi, -\pi/2]$ [Fig.~\ref{fig:ising_distribution}] that contains only $\phi_0$.
We obtain an initial state by applying $Y$ rotations with angle $\theta = 0.8\,\text{rad}$ to each qubit, yielding a ground state overlap $a_0 \approx 0.52$. 
In Fig.~\ref{fig:ising_distribution}, we plot the spectral function $a(x)$ [Eq.~\eqref{eq:spectral_distribution}], as well as its convolution with the Fejer kernel [Eq.~\eqref{eq:continuous-fejer-kernel}] from a textbook QPE implementation~\cite{nielsen2001quantum}.

We implement the full QPE circuit (see Fig.~\ref{fig:ising_circuits}, top row) using a quantum register of variable size ($n$ between 4 and 10 control qubits) using cirq.
For a simple noise model, we choose local depolarizing noise
\begin{equation}
    \rho \rightarrow (1-p_{\mathrm{err}})\rho + \tfrac{p_{\mathrm{err}}}{3}(X_j\rho X_j + Y_j\rho Y_j + Z_j\rho Z_j),
\end{equation}
applied on all qubits between each moment of gates (see Fig.~\ref{fig:ising_circuits}, bottom row).
We fix the error rate $p_{\mathrm{err}}$ such that the probability of no error is $(1-p_{\mathrm{err}})^{N_q\times N_{\mathrm{d}}}=\tfrac{1}{e}$, where $N_q$ is the total number of qubits (system plus control) and $N_{\mathrm{d}}$ is the circuit depth.
The resultant ochre probability distribution $p(x)$ in Fig.~\ref{fig:ising_distribution} can be split into a rescaled noiseless distribution and the distribution coming from circuits where at least one error has occurred [Fig.~\ref{fig:ising_distribution}, maroon]
\begin{equation}
\label{eq:numerics_pec_decomposition}
    p(x) = (1-F) p_1(x) + F(a*f)(x),
\end{equation}
which we will use as the quasiprobability distribution when testing the NU-FMPE\footnote{This decomposition ignores the overhead from constructing a real quasiprobability distribution, as in practice we do not have access to $p_1(x)$.
}.
We observe that the distribution $p_1(x)$ in Fig.~\ref{fig:ising_distribution} still retains a peak around $\phi_0$, implying that it contains significant phase estimation information as well.
This makes sense, as the effective volume of our QPE circuit is likely small, and PEC schemes that make use of this~\cite{aharonov2025reliable} can likely be incorporated into the NU-FMPE as well.
Pseudocode for the sampling routines used in the above is as follows:\\
\begin{minipage}[t]{0.49\textwidth}
    \begin{algorithm}[H]
    \label{alg:filtered_sampler}
    \SetAlgoInsideSkip{smallskip}
    
    \caption{Filtered sampler}
    
    \KwIn{%
    Filtering interval $\mathcal{D} \subset [-\pi, \pi]$,\newline
    Number of circuit shots $M' \in \NN$.
    }
    \KwOut{Samples $X$}
    
    \For{$j \gets 1, \dots, M'$}{
        Sample noisy circuit $C$ \\
        $P(x) \gets |\bra{x}C\ket{0}|^2$ \tcp{Simulate $C$} 
        Sample $x_j \sim P(x) $
        
        \If {$x_j \in \mathcal{D}$} {
            add $x_j$ to $X$
            }
    }
    \Return{X}
    \end{algorithm}
\end{minipage}
\hfill
\begin{minipage}[t]{0.49\textwidth}
    \begin{algorithm}[H]
    \caption{Filtered PEC sampler}
    \label{alg:filtered_PEC_sampler}
    \KwIn{%
    Filtering interval $\mathcal{D} \subset [-\pi, \pi]$, \newline
    Fidelity $F \in (0,1]$,\newline
    Number of circuit shots $M' \in \NN$.
    }
    \KwOut{Samples $X_0$, $X_1$}
    
    $\alpha \gets [F^{-1}, 1-F^{-1}]$ \;
    \For{$j \gets 1, \dots, M'$} {
        Sample $a \sim \frac{|\alpha_a|}{\lVert \alpha \rVert_1}$ 
        \\
        %$x \gets$ sample according to $p_a$ with $a_j$
        \If{$a = 0$} {
            %$x \gets$ sample from circuit with local depolarizing noise channel
            Sample noisy circuit $C$\\
            $P(x) \gets |\bra{x}C\ket{0}|^2$ \tcp{Simulate $C$} 
            Sample $x \sim P(x) $
        }
        \Else{
            %$a \gets$ sample from circuit with at least one noise event
            \Repeat{
                $C$ {\rm contains at least one error}
            }{
                Sample noisy circuit $C$
            }
            $P(x) \gets |\bra{x}C\ket{0}|^2$ \tcp{Simulate $C$} 
            Sample $x \sim P(x) $
        }
        \If {$x \in \mathcal{D}$} {
            add $x$ to $X_{a}$
            }
    }
    \Return{$X_0, X_1$}
    \end{algorithm}
\end{minipage}

\begin{figure}
    \centering
    \includegraphics[width=0.6\textwidth]{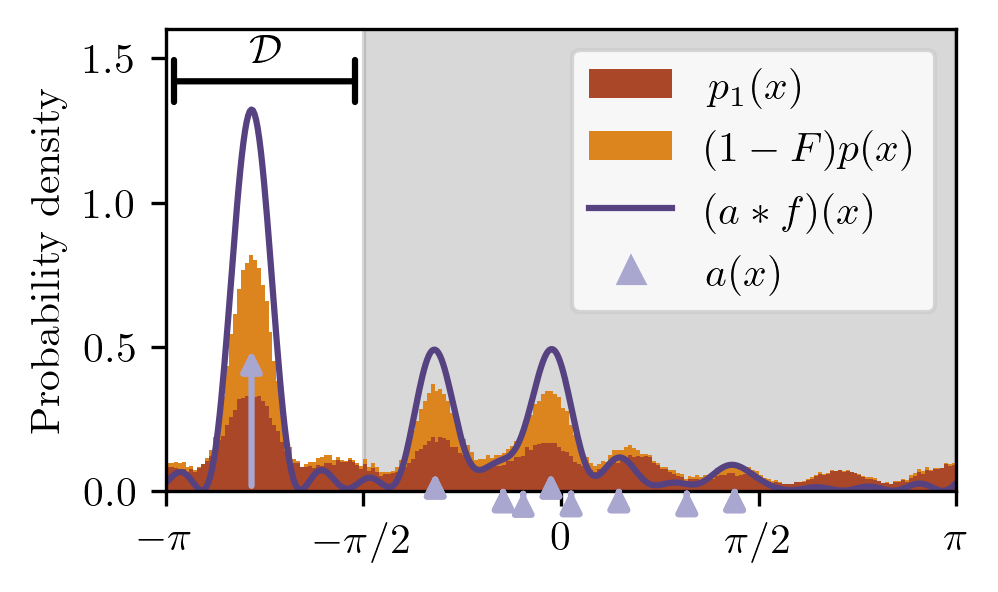}
    \caption{
        Spectral distribution and QPE results on a 4-qubit Ising model, with $n=4$ QPE control qubits.
        The arrows represent the eigenphases and corresponding amplitudes for the fast-forwarded unit-time evolution unitary of the considered model, i.e.~the spectral function $a(x)$.
        The blue line shows the smoothed spectral function $f*a(x)$, with the kernel function  corresponding to the continuous Fej\'er kernel expected from textbook QPE: $f(x) = f^{\text{(Fejer)}}K(x)$ from Eq.~\eqref{eq:continuous-fejer-kernel} with $K=2^{n}$.
        The ochre histogram shows the samples obtained from simulating $100k$ shots of textbook QPE with circuit local depolarizing noise, with error probabilities tuned such that the total circuit fidelity is $F=e^{-1}$ [i.e.~samples from the noisy QPE distribution $p(x)$, Eq.~\eqref{eq:numerics_pec_decomposition}]. 
        The maroon histogram represents the subset of samples coming from simulations where at least one noise event happened [i.e.~samples from the pure noise distribution, $p_1(x)$, Eq.~\eqref{eq:numerics_pec_decomposition}], which we use to simulate a simplified version of PEC/Noise Unbiasing.
        The un-shaded area represents the filtering interval $\mathcal{D}$.
    }
    \label{fig:ising_distribution}
\end{figure}

With our schemes for sampling from QPE probability distributions given, we now give pseudocode implementations for each of the estimators given in the text.
As we are using textbook quantum phase estimation, our model distribution $Q(x|\phi)$ for FMPE and NU-FMPE is the single phase distribution in Defs.~\ref{def:m-projection-phase-estimator-with-gdn},~\ref{def:m-projection-phase-estimator-unbiased} with the Fejer kernel (Eq.~\eqref{eq:continuous-fejer-kernel}).
\noindent
\begin{minipage}[t]{0.32\textwidth}
    \begin{algorithm}[H]
    \caption{Filtered PEC average}
    
    \KwIn{Samples $X_0$, $X_1$  generated by Alg.~\ref{alg:filtered_PEC_sampler}}
    \KwOut{Estimate $\tilde\phi$}

    $M_0 \gets |X_0|$ ; %number of samples with $a=0$
    $M_1 \gets |X_1|$ \;%number of samples with $a=1$

    \Return $\tilde\phi = \frac{\sum_{x \in X_0}x - \sum_{x \in X_1}x}{M_0 -M_1}$
    \end{algorithm}
\end{minipage}
\hfill
\begin{minipage}[t]{0.32\textwidth}
    \begin{algorithm}[H]    
    \caption{FMPE assuming GDN}
    
    \KwIn{%
    Samples $X$ \\ generated by Alg.~\ref{alg:filtered_sampler},\newline 
    promise interval $\mathcal{D}$,\newline
    number of control qubits $n \in \NN$,\newline
    fidelity $F \in (0,1]$,\newline
    overlap $a_0 \in (0,1]$
    }
    \KwOut{Estimate $\tilde\phi$}
    $M \gets |X|$ \;
    %$q(x|\phi) \gets Fa_0 f^{\text{(Fejer)}}_{2^n}(x-\phi) + (1-F)\frac{1}{2\pi}$
    $\begin{aligned}
    q(x|\phi) \gets {}&
    \textstyle Fa_0 f^{\text{(Fejer)}}_{2^n}(x-\phi)\\
    &+ (1-F)\textstyle\frac{1}{2\pi} ;
    \end{aligned}$\newline
    %\tcc{For $f^{\text{(Fejer)}}_{2^n}$ see Eq.\eqref{eq:continuous-fejer-kernel}}
    $Q(x|\phi) \gets \frac{q(x|\phi)}{\int_{\mathcal{D}}q(x|\phi) \dd x}$\;
    $\ell(\phi) \gets \frac{1}{M}\sum_{x\in X} \log Q(x|\phi)$\;
    \Return $\tilde\phi = \argmax \ell(\phi)$
    \end{algorithm}
\end{minipage}
\hfill
\begin{minipage}[t]{0.32\textwidth}
    \begin{algorithm}[H]
    
    \caption{NU-FMPE}
    
    \KwIn{Samples $X_0$, $X_1$  generated by Alg.~\ref{alg:filtered_PEC_sampler}, promise interval $\mathcal{D}$, number of control qubits $n \in \NN$, \newline regularization $c>0$
    }
    \KwOut{Estimate $\tilde\phi$}

    %$\ell(\phi) \gets \sum_{x\in X_0} \log Q(x|\phi) - \sum_{x\in X_1} \log Q(x|\phi)$
    $M \gets |X_0| + |X_1|$\;
    $q(x|\phi) \gets f^{\text{(Fejer)}}_{2^n}(x-\phi)$\;
    $Q_c(x|\phi) \gets \frac{q(x|\phi)}{\int_{\mathcal{D}}q(x|\phi) \dd x}+c$\;
    $\begin{aligned}
    \ell(\phi) \gets {}&
    \textstyle \frac{1}{M}\sum_{x\in X_0} \log Q_c(x|\phi) \\
    & - \tfrac{1}{M}\textstyle\sum_{x\in X_1} \log Q_c(x|\phi)\\
    &+c\int_{\mathcal{D}}\log Q_c(x|\phi) \dd x;
    \end{aligned}$\newline 
    \Return $\tilde\phi = \argmax \ell(\phi)$
    \end{algorithm}
\end{minipage}

\begin{figure}
    \centering
   \includegraphics[
   ]{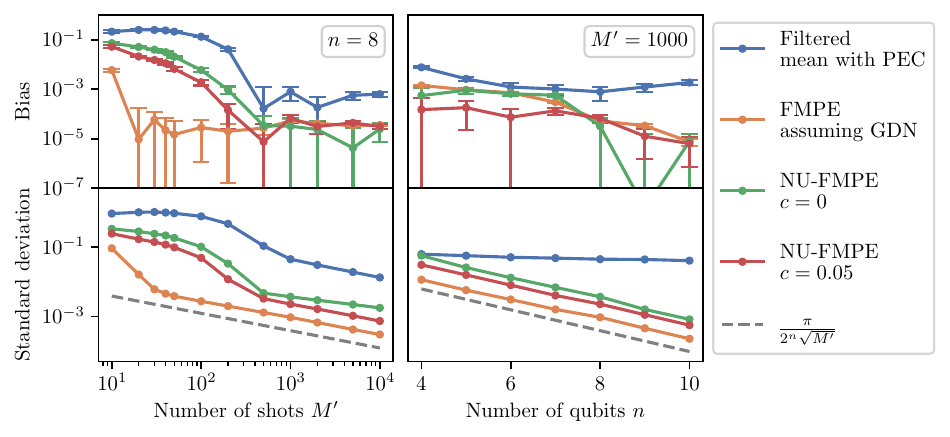}
    \caption{
    Performance of estimators considered in this work for 4-qubit Ising model [estimators distinguished in legend].
    The upper two plots show the estimator bias $|\tilde\phi - \phi_0|$, while the bottom two plots show the standard deviation.
    For the plots on the left, the number of control qubits $n$ is fixed to 8, and the total number of shots taken on a quantum computer (i.e. number of samples before filtering) $M'$ is varied.
    For the plots on the right, $M' = 1000$, and $n$ is varied.
    Error bars are obtained by bootstrapping \cite{efronBootstrap1979}.    
    }
    \label{fig:num_results}
\end{figure}

\topic{Fig 5 description}
In Fig.~\ref{fig:num_results}, we show the bias and the variance of all four estimators on our Ising model QPE simulation, for a range of control register sizes $n$ and numbers of circuit repetitions $M$ (shots).
(Note that we change the error rate with the number of qubits $n$ in order to maintain a constant circuit fidelity $F=\frac{1}{e}$.)
We observe that all estimators converge quickly in standard deviation to the standard quantum limit $M^{-1/2}$ with different constant factors, and the moment projection estimators converge as $2^{-n}$ whilst the mean estimator has a constant variance in $n$, as expected from the prior sections.
The moment projection estimator under a GDN assumption achieves this convergence to the standard quantum limit with fewer than $50$ circuit repetitions~\footnote{For the given starting state and circuit fidelity, filtering removes around $80\%$ of the data in these simulations, and the likelihood maximization is performed on only around $M=10$ data points at $M'=50$.}, making it an attractive choice for early fault-tolerant QPE implementations.
The bias of the mean estimator, the FMPE under a GDN assumption, and regularised NU-FMPE saturate as a function of $M$, which can be expected due to them not properly modeling experimental noise.
However the NU-FMPE without regularization has $0$ bias within the error bars of the simulation.
(In practice we expect a small residual bias due to the imperfect initial state, but this is likely too small to be observed.)
The bias trend with $n$ is less pronounced.
We see some evidence that that the FMPE and NU-FMPE have an exponentially decreasing bias while the mean estimator saturates.
However, more simulation at larger system sizes is required for a convincing demonstration of this result, and we present this solely as an example of the implementation of the estimators from this work.

\section{Conclusion and outlook}
\label{sec:conclusion}

In this work, we develop classical processing schemes for quantum phase estimation that both allow for imperfect state preparation and are robust to circuit-level noise.
Our approach assumes a phase interval containing only the target eigenphase, which for ground-state estimation can be constructed from a bound on the spectral gap $\Delta$.
We analyse the schemes using a Gaussian kernel in three settings: no noise, global depolarising noise, and arbitrary stochastic noise.
In all cases our estimators are asymptotically normal and their bias decreases exponentially with circuit depth.
This implies that the required minimal circuit depth $t$ to achieve a fixed bias grows logarithmically in target precision, the initial state overlap $a_0^{-1}$, and the circuit fidelity $F$, but $t = \widetilde{\Omega}(\Delta^{-1})$, consistent with known lower bounds.
In the noiseless case the estimator requires a total time $T=\Theta(\epsilon^{-2} t^{-1 }a_0^{-1})$, allowing for the Heisenberg scaling and achieving typical scaling with $a_0$.
For our global depolarising noise estimator, this worsens by a factor $\Theta(a_0^{-1}F^{-2})$, improving by a factor of $t^2$ over a naive mean estimator.
For arbitrary stochastic noise, the overhead is $\Theta((F^{-2}+\kappa F^{-4})\log(\Delta t))$, where $\kappa$ is the fraction of mitigated noise that falls outside the promise interval.
This interpolates between the $F^{-4}$ scaling of Probabilistic Error Cancellation, and a reduced $F^{-2}$ cost when one can postselect the noise away (but still pay the overhead for constructing a quasiprobability distribution).
Numerically we observe our noise-unbiased estimator requires around $500$ shots to overcome small-sample-statistics effects and converges to the standard quantum limit for the 4-qubit Ising model with local Pauli noise.
Surprisingly, this is outperformed by an estimator that (incorrectly) assumes global depolarising noise; this converges to the standard quantum limit with fewer than $<50$ shots and with lower variance than the noise-unbiased estimator.

Our new estimator solves a key roadblock to quantum phase estimation in early fault tolerance, where high-overlap state preparation is likely unfeasible and noise remains a concern.
However, it remains to test the performance of the moment projection estimator on realistic noise models and phase estimation problems, and to project performance at beyond-classical system sizes.
This is especially relevant given the observed noise-robustness, which suggests that the overhead for noise unbiasing may not be necessary across a range of target error rates.
One can further consider learning a few phases within the promise region $\mathcal{D}$ instead of just one, to allow for excited state energy estimation, or ground state estimation of gapless systems.
It may also be worth re-optimizing the choice of kernel function in the presence of noise (both within the QFT-QPE and QSP-QPE framework), as previous optimality results do not hold in the presence of noise.
And finally, it remains to compile the results of this work to make resource estimates for phase estimation of real problems of interest, in fault-tolerant architectures being pursued today.

\section*{Author contributions and acknowledgments}

All authors contributed to the conception of the ideas, development of the methods and to writing and reviewing the manuscript. 
A.D. performed the majority of the analytical derivations. 
A.D. and S.P. carried out the numerical calculations. 
Generative AI tools were used to assist with proofreading the text and checking the mathematical results.

A.D. acknowledges funding from an Ada Lovelace Fellowship from the Quantum Software Consortium.

\bibliography{bibliography.bib}

@article{rendon2023low,
  title={Low-depth Gaussian State Energy Estimation},
  author={Rendon, Gumaro and Johnson, Peter D},
  journal={arXiv preprint arXiv:2309.16790},
  year={2023}
}

@article{ding2023even,
  title={Even shorter quantum circuit for phase estimation on early fault-tolerant quantum computers with applications to ground-state energy estimation},
  author={Ding, Zhiyan and Lin, Lin},
  journal={PRX Quantum},
  volume={4},
  number={2},
  pages={020331},
  year={2023},
  publisher={APS}
}

@article{berry2017improved,
    title = {Improved Techniques for Preparing Eigenstates of Fermionic Hamiltonians},
    author = {Berry, Dominic W. and Kieferov\'{a}, M\'{a}ria and Scherer, Artur and Sanders, Yuval R. and Low, Guang Hao and Wiebe, Nathan and Gidney, Craig and Babbush, Ryan},
    journal = {npj Quant. Inf.},
    volume = {4},
    issue = {1},
    pages = {22},
    year = {2018},
    url = {https://arxiv.org/abs/1711.10460}
}

@article{ding2023simultaneous,
  title={Simultaneous estimation of multiple eigenvalues with short-depth quantum circuit on early fault-tolerant quantum computers},
  author={Ding, Zhiyan and Lin, Lin},
  journal={Quantum},
  volume={7},
  pages={1136},
  year={2023},
  publisher={Verein zur F{\"o}rderung des Open Access Publizierens in den Quantenwissenschaften}
}

@article{babbushEncoding2018,
  title = {Encoding {{Electronic Spectra}} in {{Quantum Circuits}} with {{Linear T Complexity}}},
  author = {Babbush, Ryan and Gidney, Craig and Berry, Dominic W. and Wiebe, Nathan and McClean, Jarrod and Paler, Alexandru and Fowler, Austin and Neven, Hartmut},
  year = {2018},
  month = oct,
  journal = {Physical Review X},
  volume = {8},
  number = {4},
  pages = {041015},
  publisher = {American Physical Society},
  doi = {10.1103/PhysRevX.8.041015},
  urldate = {2023-07-25}
}

@article{gorecki2020pi,
  title={$\pi$-corrected {{Heisenberg}} limit},
  author={G{\'o}recki, Wojciech and Demkowicz-Dobrza{\'n}ski, Rafa{\l} and Wiseman, Howard M and Berry, Dominic W},
  journal={Physical Review Letters},
  volume={124},
  number={3},
  pages={030501},
  year={2020},
  publisher={APS}
}

@article{berry2024analyzing,
  title={Analyzing prospects for quantum advantage in topological data analysis},
  author={Berry, Dominic W and Su, Yuan and Gyurik, Casper and King, Robbie and Basso, Joao and Barba, Alexander Del Toro and Rajput, Abhishek and Wiebe, Nathan and Dunjko, Vedran and Babbush, Ryan},
  journal={PRX Quantum},
  volume={5},
  number={1},
  pages={010319},
  year={2024},
  publisher={APS}
}

@article{luisOptimum1996,
  title = {Optimum Phase-Shift Estimation and the Quantum Description of the Phase Difference},
  author = {Luis, A. and Pe{\v r}ina, J.},
  year = {1996},
  month = nov,
  journal = {Physical Review A},
  volume = {54},
  number = {5},
  pages = {4564--4570},
  publisher = {American Physical Society},
  doi = {10.1103/PhysRevA.54.4564},
  urldate = {2023-07-26}
}

@misc{dutkiewiczError2025,
  title = {Error Mitigation and Circuit Division for Early Fault-Tolerant Quantum Phase Estimation},
  author = {Dutkiewicz, Alicja and Polla, Stefano and Scheurer, Maximilian and Gogolin, Christian and Huggins, William J. and O'Brien, Thomas E.},
  journal = {PRX Quantum},
  volume = {6},
  issue = {4},
  pages = {040318},
  numpages = {36},
  year = {2025},
  month = {Oct},
  publisher = {American Physical Society},
  doi = {10.1103/mlmy-yskj},
  url = {https://link.aps.org/doi/10.1103/mlmy-yskj}
}

@book{nielsen2001quantum,
  title = {Quantum Computation and Quantum Information},
  author = {Nielsen, Michael A and Chuang, Isaac L},
  year = {2001},
  edition = {2},
  volume = {54},
  pages = {60},
  publisher = {Cambridge University Press}
}

@article{rendon2024improved,
  title={Improved accuracy for Trotter simulations using Chebyshev interpolation},
  author={Rendon, Gumaro and Watkins, Jacob and Wiebe, Nathan},
  journal={Quantum},
  volume={8},
  pages={1266},
  year={2024},
  publisher={Verein zur F{\"o}rderung des Open Access Publizierens in den Quantenwissenschaften}
}

@article{dutkiewicz2022heisenberg,
    author = {Dutkiewicz, Alicja and Terhal, Barbara M and O'Brien, Thomas E},
    doi = {10.22331/q-2022-10-06-830},
    journal = {Quantum},
    pages = {830},
    publisher = {Verein zur F{\"o}rderung des Open Access Publizierens in den Quantenwissenschaften},
    title = {Heisenberg-limited quantum phase estimation of multiple eigenvalues with few control qubits},
    volume = {6},
    year = {2022}
}

@article{obrien2019quantum,
  title = {Quantum Phase Estimation of Multiple Eigenvalues for Small-Scale (Noisy) Experiments},
  author = {O'Brien, Thomas E. and Tarasinski, Brian and Terhal, Barbara M.},
  year = {2019},
  journal = {New Journal of Physics},
  volume = {21},
  pages = {023022}
}

@inproceedings{vanapeldoorn2023quantum,
  title={Quantum tomography using state-preparation unitaries},
  author={van Apeldoorn, Joran and Cornelissen, Arjan and Gily{\'e}n, Andr{\'a}s and Nannicini, Giacomo},
  booktitle={Proceedings of the 2023 annual ACM-SIAM symposium on discrete algorithms (SODA)},
  pages={1265--1318},
  year={2023},
  organization={SIAM}
}

@inproceedings{cornelissen2023sublinear,
  title={A sublinear-time quantum algorithm for approximating partition functions},
  author={Cornelissen, Arjan and Hamoudi, Yassine},
  booktitle={Proceedings of the 2023 annual ACM-Siam symposium on discrete algorithms (SODA)},
  pages={1245--1264},
  year={2023},
  organization={SIAM}
}

@article{ding2024quantum,
  title={Quantum Multiple Eigenvalue Gaussian filtered Search: an efficient and versatile quantum phase estimation method},
  author={Ding, Zhiyan and Li, Haoya and Lin, Lin and Ni, HongKang and Ying, Lexing and Zhang, Ruizhe},
  journal={Quantum},
  volume={8},
  pages={1487},
  year={2024},
  publisher={Verein zur F{\"o}rderung des Open Access Publizierens in den Quantenwissenschaften}
}

@article{geFaster2019,
  title = {Faster Ground State Preparation and High-Precision Ground Energy Estimation with Fewer Qubits},
  author = {Ge, Yimin and Tura, Jordi and Cirac, J. Ignacio},
  year = {2019},
  month = feb,
  journal = {Journal of Mathematical Physics},
  volume = {60},
  number = {2},
  pages = {022202},
  issn = {0022-2488},
  doi = {10.1063/1.5027484},
  urldate = {2025-02-08}
}

@article{Lin20Preparation,
  doi = {10.22331/q-2020-12-14-372},
  url = {10.22331/q-2020-12-14-372},
  title = {Near-optimal ground state preparation},
  author = {Lin, Lin and Tong, Yu},
  journal = {{Quantum}},
  issn = {2521-327X},
  publisher = {{Verein zur F{\"{o}}rderung des Open Access Publizierens in den Quantenwissenschaften}},
  volume = {4},
  pages = {372},
  month = dec,
  year = {2020}
}

@article{dongGround2022,
  title = {Ground State Preparation and Energy Estimation on Early Fault-Tolerant Quantum Computers via Quantum Eigenvalue Transformation of Unitary Matrices},
  author = {Dong, Yulong and Lin, Lin and Tong, Yu},
  year = {2022},
  month = oct,
  journal = {PRX Quantum},
  volume = {3},
  number = {4},
  eprint = {2204.05955},
  pages = {040305},
  issn = {2691-3399},
  doi = {10.1103/PRXQuantum.3.040305},
  urldate = {2024-05-22},
  archiveprefix = {arXiv}
}

@article{martyn2021grand,
  title={Grand unification of quantum algorithms},
  author={Martyn, John M and Rossi, Zane M and Tan, Andrew K and Chuang, Isaac L},
  journal={PRX quantum},
  volume={2},
  number={4},
  pages={040203},
  year={2021},
  publisher={APS}
}

@book{amariMethods2000,
  title = {Methods of {{Information Geometry}}},
  author = {Amari, Shun-ichi and Nagaoka, Hiroshi},
  year = {2000},
  publisher = {American Mathematical Soc.},
  googlebooks = {vc2FWSo7wLUC},
  isbn = {978-0-8218-4302-4},
  langid = {english}
}

@book{murphyMachine2012,
  title = {Machine Learning: A Probabilistic Perspective},
  shorttitle = {Machine Learning},
  author = {Murphy, Kevin P.},
  year = {2012},
  series = {Adaptive Computation and Machine Learning Series},
  publisher = {MIT Press},
  address = {Cambridge, MA},
  isbn = {978-0-262-01802-9},
  langid = {english},
  lccn = {Q325.5 .M87 2012}
}

@article{nielsenWHAT2018,
  title = {{{WHAT IS}}...an {{Information Projection}}?},
  author = {Nielsen, Frank},
  year = {2018},
  month = mar,
  journal = {Notices of the American Mathematical Society},
  volume = {65},
  number = {03},
  pages = {1},
  publisher = {American Mathematical Society (AMS)},
  issn = {0002-9920, 1088-9477},
  doi = {10.1090/noti1647},
  urldate = {2025-07-29},
  langid = {english}
}

@misc{tuananhleReverse2017,
  title = {Reverse vs {{Forward KL}}},
  author = {{Tuan Anh Le}},
  year = {2017},
  urldate = {2025-07-29},
  howpublished = {https://www.tuananhle.co.uk/notes/reverse-forward-kl.html}
}

@article{Harrow09Quantum,
    title = {Quantum algorithm for solving linear systems of equations},
    author = {Aram W. Harrow and Avinatan Hassidim and Seth Lloyd},
    journal = {Physical Review Letters},
    volume = {15},
    number = {103},
    pages = {150502},
    year = {2009},
    url = {https://arxiv.org/abs/0811.3171},
    doi = {10.1103/PhysRevLett.103.150502}
}

@article{Shor95Polynomial,
  title={Polynomial-time algorithms for prime factorization and discrete logarithms on a quantum computer},
  author={Shor, Peter W},
  journal={SIAM Review},
  volume={41},
  number={2},
  pages={303--332},
  year={1999},
  publisher={SIAM}
}

@article{acharyaQuantum2024,
  title = {Quantum Error Correction below the Surface Code Threshold},
  author = {Acharya, Rajeev and {Aghababaie-Beni}, Laleh and Aleiner, Igor and Andersen, Trond I. and Ansmann, Markus and Arute, Frank and Arya, Kunal and Asfaw, Abraham and Astrakhantsev, Nikita and Atalaya, Juan and Babbush, Ryan and Bacon, Dave and Ballard, Brian and Bardin, Joseph C. and Bausch, Johannes and Bengtsson, Andreas and Bilmes, Alexander and Blackwell, Sam and Boixo, Sergio and Bortoli, Gina and Bourassa, Alexandre and Bovaird, Jenna and Brill, Leon and Broughton, Michael and Browne, David A. and Buchea, Brett and Buckley, Bob B. and Buell, David A. and Burger, Tim and Burkett, Brian and Bushnell, Nicholas and Cabrera, Anthony and Campero, Juan and Chang, Hung-Shen and Chen, Yu and Chen, Zijun and Chiaro, Ben and Chik, Desmond and Chou, Charina and Claes, Jahan and Cleland, Agnetta Y. and Cogan, Josh and Collins, Roberto and Conner, Paul and Courtney, William and Crook, Alexander L. and Curtin, Ben and Das, Sayan and Davies, Alex and De Lorenzo, Laura and Debroy, Dripto M. and Demura, Sean and Devoret, Michel and Di Paolo, Agustin and Donohoe, Paul and Drozdov, Ilya and Dunsworth, Andrew and Earle, Clint and Edlich, Thomas and Eickbusch, Alec and Elbag, Aviv Moshe and Elzouka, Mahmoud and Erickson, Catherine and Faoro, Lara and Farhi, Edward and Ferreira, Vinicius S. and Burgos, Leslie Flores and Forati, Ebrahim and Fowler, Austin G. and Foxen, Brooks and Ganjam, Suhas and Garcia, Gonzalo and Gasca, Robert and Genois, {\'E}lie and Giang, William and Gidney, Craig and Gilboa, Dar and Gosula, Raja and Dau, Alejandro Grajales and Graumann, Dietrich and Greene, Alex and Gross, Jonathan A. and Habegger, Steve and Hall, John and Hamilton, Michael C. and Hansen, Monica and Harrigan, Matthew P. and Harrington, Sean D. and Heras, Francisco J. H. and Heslin, Stephen and Heu, Paula and Higgott, Oscar and Hill, Gordon and Hilton, Jeremy and Holland, George and Hong, Sabrina and Huang, Hsin-Yuan and Huff, Ashley and Huggins, William J. and Ioffe, Lev B. and Isakov, Sergei V. and Iveland, Justin and Jeffrey, Evan and Jiang, Zhang and Jones, Cody and Jordan, Stephen and Joshi, Chaitali and Juhas, Pavol and Kafri, Dvir and Kang, Hui and Karamlou, Amir H. and Kechedzhi, Kostyantyn and Kelly, Julian and Khaire, Trupti and Khattar, Tanuj and Khezri, Mostafa and Kim, Seon and Klimov, Paul V. and Klots, Andrey R. and Kobrin, Bryce and Kohli, Pushmeet and Korotkov, Alexander N. and Kostritsa, Fedor and Kothari, Robin and Kozlovskii, Borislav and Kreikebaum, John Mark and Kurilovich, Vladislav D. and Lacroix, Nathan and Landhuis, David and {Lange-Dei}, Tiano and Langley, Brandon W. and Laptev, Pavel and Lau, Kim-Ming and Guevel, Lo{\"i}ck Le and Ledford, Justin and Lee, Kenny and Lensky, Yuri D. and Leon, Shannon and Lester, Brian J. and Li, Wing Yan and Li, Yin and Lill, Alexander T. and Liu, Wayne and Livingston, William P. and Locharla, Aditya and Lucero, Erik and Lundahl, Daniel and Lunt, Aaron and Madhuk, Sid and Malone, Fionn D. and Maloney, Ashley and Mandr{\'a}, Salvatore and Martin, Leigh S. and Martin, Steven and Martin, Orion and Maxfield, Cameron and McClean, Jarrod R. and McEwen, Matt and Meeks, Seneca and Megrant, Anthony and Mi, Xiao and Miao, Kevin C. and Mieszala, Amanda and Molavi, Reza and Molina, Sebastian and Montazeri, Shirin and Morvan, Alexis and Movassagh, Ramis and Mruczkiewicz, Wojciech and Naaman, Ofer and Neeley, Matthew and Neill, Charles and Nersisyan, Ani and Neven, Hartmut and Newman, Michael and Ng, Jiun How and Nguyen, Anthony and Nguyen, Murray and Ni, Chia-Hung and O'Brien, Thomas E. and Oliver, William D. and Opremcak, Alex and Ottosson, Kristoffer and Petukhov, Andre and Pizzuto, Alex and Platt, John and Potter, Rebecca and Pritchard, Orion and Pryadko, Leonid P. and Quintana, Chris and Ramachandran, Ganesh and Reagor, Matthew J. and Rhodes, David M. and Roberts, Gabrielle and Rosenberg, Eliott and Rosenfeld, Emma and Roushan, Pedram and Rubin, Nicholas C. and Saei, Negar and Sank, Daniel and Sankaragomathi, Kannan and Satzinger, Kevin J. and Schurkus, Henry F. and Schuster, Christopher and Senior, Andrew W. and Shearn, Michael J. and Shorter, Aaron and Shutty, Noah and Shvarts, Vladimir and Singh, Shraddha and Sivak, Volodymyr and Skruzny, Jindra and Small, Spencer and Smelyanskiy, Vadim and Smith, W. Clarke and Somma, Rolando D. and Springer, Sofia and Sterling, George and Strain, Doug and Suchard, Jordan and Szasz, Aaron and Sztein, Alex and Thor, Douglas and Torres, Alfredo and Torunbalci, M. Mert and Vaishnav, Abeer and Vargas, Justin and Vdovichev, Sergey and Vidal, Guifre and Villalonga, Benjamin and Heidweiller, Catherine Vollgraff and Waltman, Steven and Wang, Shannon X. and Ware, Brayden and Weber, Kate and White, Theodore and Wong, Kristi and Woo, Bryan W. K. and Xing, Cheng and Yao, Z. Jamie and Yeh, Ping and Ying, Bicheng and Yoo, Juhwan and Yosri, Noureldin and Young, Grayson and Zalcman, Adam and Zhang, Yaxing and Zhu, Ningfeng and Zobrist, Nicholas},
  year = {2025},
  journal = {Nature},
  volume = {638},
  pages = {920–-926},
}

@article{acharyaSuppressing2023,
  title = {Suppressing Quantum Errors by Scaling a Surface Code Logical Qubit},
  author = {Acharya, Rajeev and Aleiner, Igor and Allen, Richard and Andersen, Trond I. and Ansmann, Markus and Arute, Frank and Arya, Kunal and Asfaw, Abraham and Atalaya, Juan and Babbush, Ryan and Bacon, Dave and Bardin, Joseph C. and Basso, Joao and Bengtsson, Andreas and Boixo, Sergio and Bortoli, Gina and Bourassa, Alexandre and Bovaird, Jenna and Brill, Leon and Broughton, Michael and Buckley, Bob B. and Buell, David A. and Burger, Tim and Burkett, Brian and Bushnell, Nicholas and Chen, Yu and Chen, Zijun and Chiaro, Ben and Cogan, Josh and Collins, Roberto and Conner, Paul and Courtney, William and Crook, Alexander L. and Curtin, Ben and Debroy, Dripto M. and Del Toro Barba, Alexander and Demura, Sean and Dunsworth, Andrew and Eppens, Daniel and Erickson, Catherine and Faoro, Lara and Farhi, Edward and Fatemi, Reza and Flores Burgos, Leslie and Forati, Ebrahim and Fowler, Austin G. and Foxen, Brooks and Giang, William and Gidney, Craig and Gilboa, Dar and Giustina, Marissa and Grajales Dau, Alejandro and Gross, Jonathan A. and Habegger, Steve and Hamilton, Michael C. and Harrigan, Matthew P. and Harrington, Sean D. and Higgott, Oscar and Hilton, Jeremy and Hoffmann, Markus and Hong, Sabrina and Huang, Trent and Huff, Ashley and Huggins, William J. and Ioffe, Lev B. and Isakov, Sergei V. and Iveland, Justin and Jeffrey, Evan and Jiang, Zhang and Jones, Cody and Juhas, Pavol and Kafri, Dvir and Kechedzhi, Kostyantyn and Kelly, Julian and Khattar, Tanuj and Khezri, Mostafa and Kieferov{\'a}, M{\'a}ria and Kim, Seon and Kitaev, Alexei and Klimov, Paul V. and Klots, Andrey R. and Korotkov, Alexander N. and Kostritsa, Fedor and Kreikebaum, John Mark and Landhuis, David and Laptev, Pavel and Lau, Kim-Ming and Laws, Lily and Lee, Joonho and Lee, Kenny and Lester, Brian J. and Lill, Alexander and Liu, Wayne and Locharla, Aditya and Lucero, Erik and Malone, Fionn D. and Marshall, Jeffrey and Martin, Orion and McClean, Jarrod R. and McCourt, Trevor and McEwen, Matt and Megrant, Anthony and Meurer Costa, Bernardo and Mi, Xiao and Miao, Kevin C. and Mohseni, Masoud and Montazeri, Shirin and Morvan, Alexis and Mount, Emily and Mruczkiewicz, Wojciech and Naaman, Ofer and Neeley, Matthew and Neill, Charles and Nersisyan, Ani and Neven, Hartmut and Newman, Michael and Ng, Jiun How and Nguyen, Anthony and Nguyen, Murray and Niu, Murphy Yuezhen and O'Brien, Thomas E. and Opremcak, Alex and Platt, John and Petukhov, Andre and Potter, Rebecca and Pryadko, Leonid P. and Quintana, Chris and Roushan, Pedram and Rubin, Nicholas C. and Saei, Negar and Sank, Daniel and Sankaragomathi, Kannan and Satzinger, Kevin J. and Schurkus, Henry F. and Schuster, Christopher and Shearn, Michael J. and Shorter, Aaron and Shvarts, Vladimir and Skruzny, Jindra and Smelyanskiy, Vadim and Smith, W. Clarke and Sterling, George and Strain, Doug and Szalay, Marco and Torres, Alfredo and Vidal, Guifre and Villalonga, Benjamin and Vollgraff Heidweiller, Catherine and White, Theodore and Xing, Cheng and Yao, Z. Jamie and Yeh, Ping and Yoo, Juhwan and Young, Grayson and Zalcman, Adam and Zhang, Yaxing and Zhu, Ningfeng and {Google Quantum AI}},
  year = {2023},
  month = feb,
  journal = {Nature},
  volume = {614},
  number = {7949},
  pages = {676--681},
  publisher = {Nature Publishing Group},
  issn = {1476-4687},
  doi = {10.1038/s41586-022-05434-1},
  urldate = {2024-06-28},
  copyright = {2023 The Author(s)},
  langid = {english}
}

@article{aspuru2005simulated,
  title={Simulated quantum computation of molecular energies},
  author={Aspuru-Guzik, Al{\'a}n and Dutoi, Anthony D and Love, Peter J and Head-Gordon, Martin},
  journal={Science},
  volume={309},
  number={5741},
  pages={1704--1707},
  year={2005},
  publisher={American Association for the Advancement of Science}
}

@article{bluvsteinLogical2024,
  title = {Logical Quantum Processor Based on Reconfigurable Atom Arrays},
  author = {Bluvstein, Dolev and Evered, Simon J. and Geim, Alexandra A. and Li, Sophie H. and Zhou, Hengyun and Manovitz, Tom and Ebadi, Sepehr and Cain, Madelyn and Kalinowski, Marcin and Hangleiter, Dominik and Bonilla Ataides, J. Pablo and Maskara, Nishad and Cong, Iris and Gao, Xun and Sales Rodriguez, Pedro and Karolyshyn, Thomas and Semeghini, Giulia and Gullans, Michael J. and Greiner, Markus and Vuleti{\'c}, Vladan and Lukin, Mikhail D.},
  year = {2024},
  month = feb,
  journal = {Nature},
  volume = {626},
  number = {7997},
  pages = {58--65},
  publisher = {Nature Publishing Group},
  issn = {1476-4687},
  doi = {10.1038/s41586-023-06927-3},
  urldate = {2024-06-28},
  copyright = {2023 The Author(s)},
  langid = {english}
}

@misc{clinton2024quantum,
  title={Quantum phase estimation without controlled unitaries},
  author={Clinton, Laura and Cubitt, Toby S and Garcia-Patron, Raul and Montanaro, Ashley and Stanisic, Stasja and Stroeks, Maarten},
  archiveprefix = {arXiv},
  eprint = {2410.21517},
  year={2024}
}

@article{goings2022reliably,
  title={Reliably assessing the electronic structure of cytochrome P450 on today’s classical computers and tomorrow’s quantum computers},
  author={Goings, Joshua J and White, Alec and Lee, Joonho and Tautermann, Christofer S and Degroote, Matthias and Gidney, Craig and Shiozaki, Toru and Babbush, Ryan and Rubin, Nicholas C},
  journal={Proceedings of the National Academy of Sciences},
  volume={119},
  number={38},
  pages={e2203533119},
  year={2022},
  publisher={National Academy of Sciences}
}

@article{leeEven2021,
  title = {Even More Efficient Quantum Computations of Chemistry through Tensor Hypercontraction},
  author = {Lee, Joonho and Berry, Dominic W and Gidney, Craig and Huggins, William J and Mcclean, Jarrod R and Wiebe, Nathan and Babbush, Ryan},
  year = {2021},
  journal = {Physical Review X},
  eprint = {2011.03494v3},
  urldate = {2022-04-25},
  archiveprefix = {arXiv}
}

@article{lin2022heisenberg,
    author = {Lin, Lin and Tong, Yu},
    doi = {10.1103/prxquantum.3.010318},
    journal = {PRX Quantum},
    number = {1},
    pages = {010318},
    publisher = {APS},
    title = {Heisenberg-limited ground-state energy estimation for early fault-tolerant quantum computers},
    volume = {3},
    year = {2022}
}

@misc{najafiOptimum2023,
  title = {Optimum Phase Estimation with Two Control Qubits},
  author = {Najafi, Peyman and Costa, Pedro C. S. and Berry, Dominic W.},
  year = {2023},
  month = mar,
  number = {arXiv:2303.12503},
  eprint = {2303.12503},
  publisher = {arXiv},
  doi = {10.48550/arXiv.2303.12503},
  urldate = {2024-10-02},
  archiveprefix = {arXiv}
}

@article{obrien2019calculating,
  title={Calculating energy derivatives for quantum chemistry on a quantum computer},
  author={O’Brien, Thomas E and Senjean, Bruno and Sagastizabal, Ramiro and Bonet-Monroig, Xavier and Dutkiewicz, Alicja and Buda, Francesco and DiCarlo, Leonardo and Visscher, Lucas},
  journal={npj Quantum Information},
  volume={5},
  number={1},
  pages={113},
  year={2019},
  publisher={Nature Publishing Group UK London}
}

@misc{paetznick2024demonstration,
  title={Demonstration of logical qubits and repeated error correction with better-than-physical error rates},
  author={Paetznick, A and da Silva, MP and Ryan-Anderson, C and Bello-Rivas, JM and Campora III, JP and Chernoguzov, A and Dreiling, JM and Foltz, C and Frachon, F and Gaebler, JP and others},
  archiveprefix = {arXiv},
  eprint = {2404.02280},
  number = {arXiv:2404.02280},
  year={2024}
}

@misc{patel2024optimal,
  title={Optimal coherent quantum phase estimation via tapering},
  author={Patel, Dhrumil and Tan, Shi Jie Samuel and Subasi, Yigit and Sornborger, Andrew T},
  archiveprefix = {arXiv},
  eprint = {2403.18927},
  year={2024}
}

@article{reiherElucidating2017,
  title = {Elucidating Reaction Mechanisms on Quantum Computers},
  author = {Reiher, Markus and Wiebe, Nathan and Svore, Krysta M. and Wecker, Dave and Troyer, Matthias},
  year = {2017},
  month = jul,
  journal = {Proceedings of the National Academy of Sciences},
  volume = {114},
  number = {29},
  pages = {7555--7560},
  publisher = {Proceedings of the National Academy of Sciences},
  doi = {10.1073/pnas.1619152114},
  urldate = {2024-06-30}
}

@article{rendon2022effects,
  title={Effects of cosine tapering window on quantum phase estimation},
  author={Rendon, Gumaro and Izubuchi, Taku and Kikuchi, Yuta},
  journal={Physical Review D},
  volume={106},
  number={3},
  pages={034503},
  year={2022},
  publisher={APS}
}

@article{russo2021evaluating,
    author = {Russo, Antonio E and Rudinger, Kenneth Michael and Morrison, Benjamin CA and Baczewski, Andrew David},
    doi = {10.2172/1855915},
    journal = {Physical Review Letters},
    number = {21},
    pages = {210501},
    publisher = {APS},
    title = {Evaluating energy differences on a quantum computer with robust phase estimation},
    volume = {126},
    year = {2021}
}

@article{wangQuantum2023,
  title = {Quantum Algorithm for Ground State Energy Estimation Using Circuit Depth with Exponentially Improved Dependence on Precision},
  author = {Wang, Guoming and Fran{\c c}a, Daniel Stilck and Zhang, Ruizhe and Zhu, Shuchen and Johnson, Peter D.},
  year = {2023},
  month = nov,
  journal = {Quantum},
  volume = {7},
  pages = {1167},
  publisher = {Verein zur F{\"o}rderung des Open Access Publizierens in den Quantenwissenschaften},
  doi = {10.22331/q-2023-11-06-1167},
  urldate = {2025-02-08},
  langid = {british}
}

@inproceedings{Gilyen19QSVT,
  title={Quantum singular value transformation and beyond: exponential improvements for quantum matrix arithmetics},
  author={Gily{\'e}n, Andr{\'a}s and Su, Yuan and Low, Guang Hao and Wiebe, Nathan},
  booktitle={Proceedings of the 51st annual ACM SIGACT symposium on theory of computing},
  pages={193--204},
  year={2019}
}

@article{kimmel2015robust,
    author = {Kimmel, Shelby and Low, Guang Hao and Yoder, Theodore J},
    doi = {10.1103/physreva.92.062315},
    journal = {Physical Review A},
    number = {6},
    pages = {062315},
    publisher = {APS},
    title = {Robust calibration of a universal single-qubit gate set via robust phase estimation},
    volume = {92},
    year = {2015}
}

@article{low2017optimal,
  title={Optimal Hamiltonian simulation by quantum signal processing},
  author={Low, Guang Hao and Chuang, Isaac L},
  journal={Physical review letters},
  volume={118},
  number={1},
  pages={010501},
  year={2017},
  publisher={APS}
}

@article{lowHamiltonian2019,
  title = {Hamiltonian Simulation by Qubitization},
  author = {Low, Guang Hao and Chuang, Isaac L.},
  year = {2019},
  month = jul,
  journal = {Quantum},
  volume = {3},
  pages = {163},
  publisher = {Verein zur Forderung des Open Access Publizierens in den Quantenwissenschaften},
  issn = {2521-327X},
  doi = {10.22331/q-2019-07-12-163}
}

@misc{wang2024faster,
  title={Faster ground state energy estimation on early fault-tolerant quantum computers via rejection sampling},
  author={Wang, Guoming and Fran{\c{c}}a, Daniel Stilck and Rendon, Gumaro and Johnson, Peter},
  year={2024},
archiveprefix = {arXiv},
  eprint = {2304.09827v1}
}

@misc{Wocjan06Several,
    title = {Several natural {B}{Q}{P}-Complete problems},
    author = {Pawel Wocjan and Shengyu Zhang},
    archiveprefix = {arXiv},
    eprint = {quant-ph/0606179},
    number = {arXiv:quant-ph/0606179},
    year = {2006},
    url = {https://arxiv.org/abs/quant-ph/0606179},
    doi = {10.48550/arXiv.quant-ph/0606179}
}

@article{aharonov2025reliable,
  title={Reliable high-accuracy error mitigation for utility-scale quantum circuits},
  author={Aharonov, Dorit and Alberton, Ori and Arad, Itai and Atia, Yosi and Bairey, Eyal and Dov, Matan Ben and Berkovitch, Asaf and Brakerski, Zvika and Cohen, Itsik and Fuchs, Eran and others},
  journal={arXiv preprint arXiv:2508.10997},
  year={2025}
}

@article{caiQuantum2023,
  title = {Quantum Error Mitigation},
  author = {Cai, Zhenyu and Babbush, Ryan and Benjamin, Simon C. and Endo, Suguru and Huggins, William J. and Li, Ying and McClean, Jarrod R. and O'Brien, Thomas E.},
  year = {2023},
  month = dec,
  journal = {Reviews of Modern Physics},
  volume = {95},
  number = {4},
  pages = {045005},
  publisher = {American Physical Society},
  doi = {10.1103/RevModPhys.95.045005},
  urldate = {2025-02-08}
}

@misc{wang2025efficient,
  title={Efficient ground-state-energy estimation and certification on early fault-tolerant quantum computers},
  author={Wang, Guoming and Fran{\c{c}}a, Daniel Stilck and Rendon, Gumaro and Johnson, Peter D},
  journal={Physical Review A},
  volume={111},
  number={1},
  pages={012426},
  year={2025},
  publisher={APS}
}

@article{endoPractical2018,
  title = {Practical {{Quantum Error Mitigation}} for {{Near-Future Applications}}},
  author = {Endo, Suguru and Benjamin, Simon C. and Li, Ying},
  year = {2018},
  month = jul,
  journal = {Physical Review X},
  volume = {8},
  number = {3},
  pages = {031027},
  publisher = {American Physical Society},
  doi = {10.1103/PhysRevX.8.031027},
  urldate = {2024-10-02}
}

@article{temmeError2017,
  title = {Error {{Mitigation}} for {{Short-Depth Quantum Circuits}}},
  author = {Temme, Kristan and Bravyi, Sergey and Gambetta, Jay M.},
  year = {2017},
  month = nov,
  journal = {Physical Review Letters},
  volume = {119},
  number = {18},
  pages = {180509},
  publisher = {American Physical Society},
  doi = {10.1103/PhysRevLett.119.180509},
  urldate = {2023-12-01}
}

@article{somma2019quantum,
    author = {Rolando D. Somma},
    doi = {10.1088/1367-2630/ab5c60},
    journal = {New Journal of Physics},
    pages = {123025},
    title = {Quantum eigenvalue estimation via time series analysis},
    url = {https://iopscience.iop.org/article/10.1088/1367-2630/ab5c60/pdf},
    volume = {21},
    year = {2019}
}

@article{efronBootstrap1979,
  title = {Bootstrap {{Methods}}: {{Another Look}} at the {{Jackknife}}},
  shorttitle = {Bootstrap {{Methods}}},
  author = {Efron, B.},
  year = 1979,
  month = jan,
  journal = {The Annals of Statistics},
  volume = {7},
  number = {1},
  pages = {1--26},
  publisher = {Institute of Mathematical Statistics},
  issn = {0090-5364, 2168-8966},
  doi = {10.1214/aos/1176344552},
  urldate = {2026-01-15}
}

@article{Berry07Efficient,
    title = {Efficient quantum algorithms for simulating sparse {H}amiltonians},
    author = {Dominic W. Berry and Graeme Ahokas and Richard Cleve and Barry C. Sanders},
    journal = {Communications in Mathemathical Physics},
    volume = {270},
    number = {359},
    year = {2007},
    url = {https://arxiv.org/abs/quant-ph/0508139},
    doi = {10.1007/s00220-006-0150-x}
}

@misc{akahoshiCompilation2024,
  title = {Compilation of {{Trotter-Based Time Evolution}} for {{Partially Fault-Tolerant Quantum Computing Architecture}}},
  author = {Akahoshi, Yutaro and Toshio, Riki and Fujisaki, Jun and Oshima, Hirotaka and Sato, Shintaro and Fujii, Keisuke},
  year = {2024},
  month = oct,
  number = {arXiv:2408.14929},
  eprint = {2408.14929},
  publisher = {arXiv},
  doi = {10.48550/arXiv.2408.14929},
  urldate = {2025-02-08},
  archiveprefix = {arXiv}
}

@article{akahoshiPartially2024,
  title = {Partially {{Fault-Tolerant Quantum Computing Architecture}} with {{Error-Corrected Clifford Gates}} and {{Space-Time Efficient Analog Rotations}}},
  author = {Akahoshi, Yutaro and Maruyama, Kazunori and Oshima, Hirotaka and Sato, Shintaro and Fujii, Keisuke},
  year = {2024},
  month = mar,
  journal = {PRX Quantum},
  volume = {5},
  number = {1},
  pages = {010337},
  publisher = {American Physical Society},
  doi = {10.1103/PRXQuantum.5.010337},
  urldate = {2025-02-08}
}

@article{belliardoAchieving2020,
  title = {Achieving {{Heisenberg}} Scaling with Maximally Entangled States: {{An}} Analytic Upper Bound for the Attainable Root-Mean-Square Error},
  shorttitle = {Achieving {{Heisenberg}} Scaling with Maximally Entangled States},
  author = {Belliardo, Federico and Giovannetti, Vittorio},
  year = {2020},
  month = oct,
  journal = {Physical Review A},
  volume = {102},
  number = {4},
  pages = {042613},
  publisher = {American Physical Society},
  doi = {10.1103/PhysRevA.102.042613},
  urldate = {2023-07-28}
}

@article{bultriniBattle2023,
  title = {The Battle of Clean and Dirty Qubits in the Era of Partial Error Correction},
  author = {Bultrini, Daniel and Wang, Samson and Czarnik, Piotr and Gordon, Max Hunter and Cerezo, M. and Coles, Patrick J. and Cincio, Lukasz},
  year = {2023},
  month = jul,
  journal = {Quantum},
  volume = {7},
  pages = {1060},
  publisher = {Verein zur F{\"o}rderung des Open Access Publizierens in den Quantenwissenschaften},
  doi = {10.22331/q-2023-07-13-1060},
  urldate = {2025-02-08},
  langid = {british}
}

@article{campbellEarly2022,
  title = {Early Fault-Tolerant Simulations of the {{Hubbard}} Model},
  author = {Campbell, Earl T.},
  year = {2022},
  month = jan,
  journal = {Quantum Science and Technology},
  volume = {7},
  number = {1},
  eprint = {2012.09238},
  pages = {015007},
  issn = {2058-9565},
  doi = {10.1088/2058-9565/ac3110},
  urldate = {2024-08-02},
  archiveprefix = {arXiv},
  langid = {english}
}

@misc{ding2023robust,
  title = {Robust Ground-State Energy Estimation under Depolarizing Noise},
  author = {Ding, Zhiyan and Dong, Yulong and Tong, Yu and Lin, Lin},
  year = {2023},
  number = {arXiv:2307.11257},
  eprint = {2307.11257},
  archiveprefix = {arXiv}
}

@misc{guNoiseresilient2022,
  title = {Noise-Resilient Phase Estimation with Randomized Compiling},
  author = {Gu, Yanwu and Ma, Yunheng and Forcellini, Nicolo and Liu, Dong E.},
  year = {2022},
  month = aug,
  number = {arXiv:2208.04100},
  eprint = {2208.04100},
  publisher = {arXiv},
  urldate = {2022-10-05},
  archiveprefix = {arXiv}
}

@article{higgins2009demonstrating,
    author = {Higgins, B. L. and Berry, D. W. and Bartlett, S. D. and Mitchell, M. W. and Wiseman, H. M. and Pryde, G. J.},
    doi = {10.1088/1367-2630/11/7/073023},
    journal = {New Journal of Physics},
    pages = {073023},
    title = {Demonstrating {{H}}eisenberg-limited unambiguous phase estimation without adaptive measurements},
    url = {https://iopscience.iop.org/article/10.1088/1367-2630/11/7/073023},
    volume = {11},
    year = {2009}
}

@misc{katabarwaEarly2023,
  title = {Early {{Fault-Tolerant Quantum Computing}}},
  author = {Katabarwa, Amara and Gratsea, Katerina and Caesura, Athena and Johnson, Peter D.},
  year = {2023},
  month = nov,
  number = {arXiv:2311.14814},
  eprint = {2311.14814},
  publisher = {arXiv},
  urldate = {2023-11-28},
  archiveprefix = {arXiv}
}

@article{kshirsagarProving2024,
  title = {On Proving the Robustness of Algorithms for Early Fault-Tolerant Quantum Computers},
  author = {Kshirsagar, Rutuja and Katabarwa, Amara and Johnson, Peter D.},
  year = {2024},
  month = nov,
  journal = {Quantum},
  volume = {8},
  pages = {1531},
  publisher = {Verein zur F{\"o}rderung des Open Access Publizierens in den Quantenwissenschaften},
  doi = {10.22331/q-2024-11-20-1531},
  urldate = {2025-02-08},
  langid = {british}
}

@article{liangModeling2024,
  title = {Modeling the Performance of Early Fault-Tolerant Quantum Algorithms},
  author = {Liang, Qiyao and Zhou, Yiqing and Dalal, Archismita and Johnson, Peter},
  year = {2024},
  month = may,
  journal = {Physical Review Research},
  volume = {6},
  number = {2},
  pages = {023118},
  publisher = {American Physical Society},
  doi = {10.1103/PhysRevResearch.6.023118},
  urldate = {2025-02-08}
}

@misc{nelsonAssessment2024,
  title = {An Assessment of Quantum Phase Estimation Protocols for Early Fault-Tolerant Quantum Computers},
  author = {Nelson, Jacob S. and Baczewski, Andrew D.},
  year = {2024},
  month = feb,
  number = {arXiv:2403.00077},
  eprint = {2403.00077},
  publisher = {arXiv},
  urldate = {2024-08-19},
  archiveprefix = {arXiv},
  langid = {english}
}

@misc{toshioPractical2024,
  title = {Practical Quantum Advantage on Partially Fault-Tolerant Quantum Computer},
  author = {Toshio, Riki and Akahoshi, Yutaro and Fujisaki, J and Oshima, H and Sato, S and Fujii, Keisuke},
  year = {2024},
  month = aug,
  eprint = {2408.14848}
}

@article{wanRandomized2022,
  title = {Randomized {{Quantum Algorithm}} for {{Statistical Phase Estimation}}},
  author = {Wan, Kianna and Berta, Mario and Campbell, Earl T.},
  year = {2022},
  month = jul,
  journal = {Physical Review Letters},
  volume = {129},
  number = {3},
  pages = {030503},
  issn = {0031-9007, 1079-7114},
  doi = {10.1103/PhysRevLett.129.030503},
  urldate = {2023-01-11},
  langid = {english}
}

@article{wangState2022,
  title = {State {{Preparation Boosters}} for {{Early Fault-Tolerant Quantum Computation}}},
  author = {Wang, Guoming and Sim, Sukin and Johnson, Peter D.},
  year = {2022},
  month = oct,
  journal = {Quantum},
  volume = {6},
  pages = {829},
  publisher = {Verein zur F{\"o}rderung des Open Access Publizierens in den Quantenwissenschaften},
  doi = {10.22331/q-2022-10-06-829},
  urldate = {2024-05-22},
  langid = {british}
}

@article{zhangComputing2022,
  title = {Computing {{Ground State Properties}} with {{Early Fault-Tolerant Quantum Computers}}},
  author = {Zhang, Ruizhe and Wang, Guoming and Johnson, Peter},
  year = {2022},
  month = jul,
  journal = {Quantum},
  volume = {6},
  pages = {761},
  publisher = {Verein zur F{\"o}rderung des Open Access Publizierens in den Quantenwissenschaften},
  doi = {10.22331/q-2022-07-11-761},
  urldate = {2024-05-22},
  langid = {british}
}

@article{cleve1998quantum,
  title = {Quantum Algorithms Revisited},
  author = {Cleve, Richard and Ekert, Artur and Macchiavello, Chiara and Mosca, Michele},
  year = {1998},
  journal = {Proceedings of the Royal Society of London. Series A: Mathematical, Physical and Engineering Sciences},
  volume = {454},
  number = {1969},
  pages = {339--354},
  publisher = {The Royal Society}
}

@article{cai2021practical,
  title={A practical framework for quantum error mitigation},
  author={Cai, Zhenyu},
  journal={arXiv preprint arXiv:2110.05389},
  year={2021}
}

\appendix

\section{Proofs of Lemmas~\ref{lem:m-projection-convergence},~\ref{lem:m-projection-expansion},~\ref{lem:nme-convergence}, and~\ref{lem:nme-expansion}}
\label{app:m-projection-proof}

% custom code to remove appendix subsections from TOC
\stoptocentries

\subsection{Proof of Lemmas~\ref{lem:m-projection-convergence} and \ref{lem:nme-convergence}}

In this section we prove Lemma~\ref{lem:nme-convergence}. Note that this also proves Lemma~\ref{lem:m-projection-convergence} by fixing $r = 1$, $\alpha = 1$, and $c=0$ (in which case $P_c(x)=R(x)=P(x), Q_c(x|\phi)=Q(x|\phi)$).

The estimate $\tilde\phi$ satisfies the stationary point condition
\begin{equation}
    \ell'(\tilde\phi|\{x_j,a_j\}) = 0,
\end{equation}
where $\tilde\phi$ and $\ell$ are defined in Def.~\ref{def:NME}. 
Expanding this equation around $\phi_*$ (as defined in Lemma~\ref{lem:nme-convergence}) gives
\begin{equation}
    \ell'(\tilde\phi|\{x_j,a_j\})
    =
    \ell'(\phi_*|\{x_j,a_j\})
    +
    (\tilde\phi-\phi_*)\ell''(\phi_*|\{x_j,a_j\})
    +
    O((\tilde\phi-\phi_*)^2),
\end{equation}
which implies
\begin{equation}
\label{eq:stationary-point}
    \tilde\phi-\phi_*
    =
    -\frac{\ell'(\phi_*|\{x_j,a_j\})}
    {\ell''(\phi_*|\{x_j,a_j\})}
    +
    O((\tilde\phi-\phi_*)^2).
\end{equation}
We now analyze the numerator and denominator of the expression above in the limit $M\to\infty$.
First, we consider the denominator. 
Using the definition of $\ell(\phi_*|\{x_j,a_j\})$,
\begin{equation}
\ell''(\phi_*|\{x_j,a_j\})
=
\frac{\|\alpha\|_1}{M}\sum_{j=1}^M
\mathrm{sgn}(\alpha_{a_j})
[\partial_\phi^2\log  Q_c(x_j|\phi)]_{\phi = \phi_*}
+c\int_{\mathcal{D}}\dd x\,[\partial_{\phi}^2\log Q_c(x|\phi)]_{\phi=\phi_*}.
\end{equation}
The pairs $(x_j,a_j)$ are independent samples generated according to the
procedure in Def.~\ref{def:NME}, so in the limit $M\to\infty$, by the law of large numbers the mean above converges to its expectation value
\begin{align}
\E[\ell''(\phi_*|\{x_j,a_j\})]&=\|\alpha\|_1\sum_{a=0}^{r-1}
\frac{|\alpha_a|}{\|\alpha\|_1}
\int_{\mathcal D}\dd x \,P_a(x)\mathrm{sgn}(\alpha_{a})
[\partial_\phi^2\log  Q_c(x|\phi)]_{\phi = \phi_*}
+c\int_{\mathcal{D}}\dd x\,[\partial_{\phi}^2\log Q_c(x|\phi)]_{\phi=\phi_*}
\\
&=
\int_{\mathcal D}\dd x \sum_{a=0}^{r-1} \alpha_a \,P_a(x)
[\partial_\phi^2\log  Q_c(x|\phi)]_{\phi = \phi_*}
+c\int_{\mathcal{D}}\dd x\,[\partial_{\phi}^2\log Q_c(x|\phi)]_{\phi=\phi_*}
\\
&=\int_{\mathcal D}\dd x \,P_c(x)
[\partial_\phi^2\log  Q_c(x|\phi)]_{\phi = \phi_*}\\
&\equiv \bar\ell''(\phi_*).
\end{align}
Here, $P_c(x)$ is the regularized distribution defined in Lemma~\ref{lem:m-projection-convergence},
and $\bar\ell(\phi):= \int_{\mathcal D}\dd x \,P_c(x)\log  Q_c(x|\phi)$ is the objective function in the limit $M\to\infty$.

Next consider the numerator
\begin{equation}\label{eq:nme_numerator}
\ell'(\phi_*|\{x_j,a_j\})
=
\frac{\|\alpha\|_1}{M}
\sum_{j=1}^M
\mathrm{sgn}(\alpha_{a_j})
[\partial_\phi\log  Q_c(x_j|\phi)]_{\phi = \phi_*} +c\int_{\mathcal{D}}\dd x\,[\partial_{\phi}\log Q_c(x|\phi)]_{\phi=\phi_*}.
\end{equation}
This is again a mean of independent random variables, and it converges to
\begin{align}
\E[ \ell'(\phi_*|\{x_j,a_j\})]&=\|\alpha\|_1\sum_{a=0}^{r-1}\frac{|\alpha_a|}{\|\alpha\|_1}
\int_{\mathcal D}\dd x \,\mathrm{sgn}(\alpha_a)P_a(x)
[\partial_\phi\log  Q_c(x|\phi)]_{\phi = \phi_*} +c\int_{\mathcal{D}}\dd x\,[\partial_{\phi}\log Q_c(x|\phi)]_{\phi=\phi_*}\label{eq:nme_a1_midpoint}\\
&=
\int_{\mathcal D}\dd x \,P_c(x)
[\partial_\phi\log  Q_c(x|\phi)]_{\phi = \phi_*}\\
&=\left[ \partial_\phi\int_{\mathcal D}\dd x \,P_c(x)
\log  Q_c(x|\phi) \right]_{\phi = \phi_*}\\
&=\bar\ell'(\phi_*).
\end{align}
By definition $\phi_*$ is the maximum of $\bar\ell(\phi)$, and satisfies the stationary point condition $\bar\ell'(\phi_*) = 0$. Therefore the expectation value of the numerator vanishes.
To calculate the variance, we note that the second term of Eq.~\eqref{eq:nme_numerator} is not a random variable, and so 
\begin{align}
    \mathrm{Var}[\ell'(\phi_*|\{x_j,a_j\})] &= \mathrm{Var}\Big[\frac{\|\alpha\|_1}{M}\sum_{j=1}^M\mathrm{sgn}(\alpha_{a_j})[\partial_{\phi}\log Q_c(x_j|\phi)]_{\phi=\phi_*}\Big]\\
    &=\frac{1}{M} \mathrm{Var}\left[\|\alpha\|_1\mathrm{sgn}(\alpha_{a})[\partial_{\phi}\log Q_c(x|\phi)]_{\phi=\phi_*} \right] \\
    &=\frac{1}{M}\left(\mathbb{E}\Big[\Big(\|\alpha\|_1\mathrm{sgn}(\alpha_{a})[\partial_{\phi}\log Q_c(x|\phi)]_{\phi=\phi_*}\Big)^2\Big] - \mathbb{E}\Big[\|\alpha\|_1\mathrm{sgn}(\alpha_{a})[\partial_{\phi}\log Q_c(x|\phi)]_{\phi=\phi_*}\Big]^2\right)\\
    &\leq\frac{1}{M}\mathbb{E}\Big[\Big(\|\alpha\|_1\mathrm{sgn}(\alpha_{a})[\partial_{\phi}\log Q_c(x|\phi)]_{\phi=\phi_*}\Big)^2\Big].
\end{align}
We can evaluate the this as
\begin{align}
\mathrm{Var}[\ell'(\phi_*|\{x_j,a_j\})] 
&\leq\frac{1}{M}\mathbb{E}\Big[\Big(\|\alpha\|_1\mathrm{sgn}(\alpha_{a})[\partial_{\phi}\log Q_c(x|\phi)]_{\phi=\phi_*}\Big)^2\Big]\\
&\leq\frac{\|\alpha\|^2_1}{M}\mathbb{E}\Big[[\partial_{\phi}\log Q_c(x|\phi)]_{\phi=\phi_*}^2\Big]\\
&= \frac{\|\alpha\|^2_1}{M}
\int_{\mathcal D}\dd x \, R(x)
\left[
\partial_\phi\log  Q_c(x|\phi)\right]_{\phi = \phi_*}^2
\end{align}
and so by the central limit theorem,
\begin{equation}
\sqrt{M}\,
\ell'(\phi_*|\{x_j,a_j\})
\end{equation}
converges in distribution to a Gaussian with mean $0$ and variance
\begin{equation}
\leq \|\alpha\|_1^2 \int_{\mathcal D}\dd x \,R(x)
[\partial_\phi\log  Q_c(x|\phi)]_{\phi = \phi_*}^2.
\end{equation}

Now we combine these results with Eq.~\eqref{eq:stationary-point}.
Since the maximizer of $\int_{\mathcal D}\dd x \,P_c(x)
\log  Q_c(x|\phi)$ is unique and lies in the interior of $D_\phi$, the estimator converges to $\tilde\phi \to \phi_*$ as $M\to\infty$ and the Taylor expansion becomes valid.
Then, applying Slutsky's theorem, $\sqrt{M}(\tilde\phi-\phi_*)$ converges in distribution to a Gaussian with mean $0$ and variance
\begin{equation}
\epsilon^2
\leq
\|\alpha\|_1^2\frac{
\int_{\mathcal D}\dd x \, R(x)
[\partial_\phi\log  Q_c(x|\phi)]_{\phi =\phi_*}^2
}{
\left(
\int_{\mathcal D}\dd x \,P_c(x)
[-\partial_\phi^2\log  Q_c(x|\phi)]_{\phi = \phi_*}
\right)^2
}.
\end{equation}
\qed

\subsection{Proof of Lemma~\ref{lem:m-projection-expansion}}

We first analyze the bias in the moment projection estimator (Def.~\ref{def:m-projection-estimator}).

\noindent By definition $\phi_*$ maximizes  $\bar\ell(\phi)=\int_{\mathcal D}\dd x \,P(x)\log Q(x|\phi)$, and therefore satisfies the stationary point condition 
\begin{equation}\label{eq:stationary-point-expansion}
    0=\bar\ell'(\phi_*)=  \int_{\mathcal{D}} \dd x (Q(x|\phi_0)+h(x))\left[\partial_\phi\log Q(x|\phi) \right]_{\phi = \phi_*}
\end{equation}
When $h(x)=0$,  $\phi_* = \argmin_\phi D_{KL}(Q(x|\phi_0)||Q(x|\phi)) = \phi_0$.
Therefore for small $\lVert h \rVert$ we can expand the above equation around $\phi_0$:
\begin{equation}
    0 =  \int_{\mathcal{D}} \dd x (Q(x|\phi_0)+h(x)) \big([\partial_\phi \log Q(x|\phi)]_{\phi = \phi_0} + (\phi_*-\phi_0) [\partial^2_\phi \log Q(x|\phi)]_{\phi = \phi_0}\big) + O((\phi_*-\phi_0)^2)
\end{equation}
Rearranging this formula obtains
\begin{equation}\label{eq:thm12_midpoint_eq}
    (\phi_*-\phi_0)  =-\frac{\int_{\mathcal{D}}\dd  x \,(Q(x|\phi_0)+h(x))[\partial_\phi \log Q(x|\phi)]_{\phi = \phi_0}}{\int_{\mathcal{D}}(Q(x|\phi_0)+h(x))[\partial^2_\phi \log Q(x|\phi)]_{\phi = \phi_0}} + O((\phi_*-\phi_0)^2).
\end{equation}
Then, since $Q(x|\phi)$ is normalised, we have that
\begin{equation}
\label{eq:derivative-normalisation}
\int_{\mathcal{D}}\dd  x\, Q(x|\phi_0)[\partial_\phi \log Q(x|\phi)]_{\phi = \phi_0} = \int_{\mathcal{D}}\dd  x \, [\partial_\phi Q(x|\phi)]_{\phi = \phi_0}(x) =0,
\end{equation}
so the numerator is
\begin{equation}
\int_{\mathcal D}\dd x \,\,h(x)
[\partial_\phi\log Q(x|\phi)]_{\phi=\phi_0} \leq \|h\|_1
\max_x
[\partial_\phi\log Q(x|\phi)]_{\phi=\phi_0}.
\end{equation}
We expand the denominator of Eq.~\eqref{eq:thm12_midpoint_eq} to lowest order in $\|h\|_1=\int_{\mathcal{D}}|h(x)|\dd x \,$ as
\begin{align}
\label{eq:fisher-info-equivalent-formula}
\int \dd x \,\,Q(x|\phi_0)[-\partial_\phi^2\log Q(x|\phi)]_{\phi = \phi_0}
&=
\int \dd x \,\,Q(x|\phi_0)\left(\left(\frac{[\partial_\phi Q(x|\phi)]_{\phi = \phi_0}}{Q_{\phi_0(x)}}\right)^2-\frac{[\partial_\phi^2Q(x|\phi)]_{\phi = \phi_0}}{Q_{\phi_0(x)}}\right)
\\
&=
\int \dd x \,\,Q(x|\phi_0)[\partial_\phi\log Q(x|\phi)]_{\phi = \phi_0}^2 -\partial_\phi^2\left[ \int \dd x \,\,Q(x|\phi_0)\right]_{\phi = \phi_0}
\\
&=\mathcal{I}_0.
\end{align}
Here, the second term of the second-to-last-line evaluates to zero again by the normalisation of $Q(x|\phi)$ (Eq.~\eqref{eq:derivative-normalisation}).
We can simplify
\begin{equation}\label{eq:mproj-bias-repeat}
    |\phi_*-\phi_0| \leq \|h\|_1\frac{\max_x [\partial_\phi\log Q(x|\phi)]_{\phi=\phi_0}}{\mathcal{I}_0} +O(\lVert h\rVert^2),
\end{equation}
where $\mathcal{I}_0$ is the Fisher information of $Q$ with respect to $\phi$, which is exactly Eq.~\eqref{eq:mproj-bias}

We now bound the variance $\epsilon^2$,
\begin{equation}
\epsilon^2
=
\frac{
\int \dd x \,P(x)
(\partial_\phi\log Q(x|\phi_*))^2
}{
\left(
\int \dd x \,P(x)[-\partial_\phi^2\log Q(x|\phi_*)]
\right)^2
}.
\end{equation}
Since $\phi_*=\phi_0+O(\|h\|_1)$ we may replace $\phi_*$ with $\phi_0$
up to $O(\|h\|_1)$ corrections.
\begin{equation}
\epsilon^2
=
\frac{
\int \dd x \,P(x)
(\partial_\phi\log Q(x|\phi_0))^2
}{
\left(
\int \dd x \,P(x)[-\partial_\phi^2\log Q(x|\phi_0)]
\right)^2
} + O(\|h\|_1).
\end{equation}
As in Eq.~\eqref{eq:fisher-info-equivalent-formula}, the leading order of the denominator in $\|h\|_1$ is $\mathcal{I}_0^2$.
We finally recover Eq.~\eqref{eq:mproj-var} by writing $P(x) = Q(x|\phi_0) + O(\|h\|_1)$ and collecting terms.\qed

\subsection{Proof of Lemma~\ref{lem:nme-expansion}}
Our proof for Lemma~\ref{lem:nme-expansion} follows similarly to the proof of Lemma~\ref{lem:m-projection-expansion}; the calculation for the bias follows Eq.~\eqref{eq:stationary-point}-\eqref{eq:mproj-bias-repeat} under the substitutions $Q(x|\phi)\rightarrow Q_c(x|\phi)$, $P(x)\rightarrow P(x)+c$.
However, our calculation for the variance does not simplify in the same way, as we integrate the numerator over the  marginal distribution $R(x)$, which we expect to be significantly different to $P_c(x)$.
Instead, because $R(x)$ is a valid, normalized probability distribution, we can bound the numerator as
\begin{equation}
\int \dd x \,R(x)
[\partial_\phi\log  Q_c(x|\phi)]_{\phi = \phi_0}]^2
\le
\max_x
[\partial_\phi\log  Q_c(x|\phi)]^2_{\phi=\phi_0},
\end{equation}
and identify within the denominator
\begin{equation}
    \int_{\mathcal{D}}\dd  x \, P_c(x)[-\partial^2_{\phi}\log Q_c(x|\phi)]_{\phi=\phi_*}=\mathcal{I}_c+O(\|h\|_1),
\end{equation}
from which the result follows.~\qed

\starttocentries

\section{Proofs of Lemmas~\ref{lem:gaussian-no-noise} and~\ref{lem:gaussian-gdn}}
\label{app:proof-of-gaussian-cors}

\stoptocentries

In this appendix we calculate the application of the result of Lemma~\ref{lem:m-projection-expansion} to the case of a Gaussian kernel, both without noise and with global depolarizing noise.
First, we note that for a Gaussian kernel function, the regularity assumptions in Lemma~\ref{lem:m-projection-expansion} are satisfied in both cases, so we can use the bounds in this lemma. 
Our target in both cases is to bound the bias $b=|\phi_*-\phi_0|$ [Eq.~\eqref{eq:mproj-bias}] and the variance $\lim_{M\rightarrow\infty} M\mathrm{Var}[\tilde{\phi}]$ [Eq.~\eqref{eq:mproj-var}].
We achieve this bound piecewise.
First, in step~\ref{app:h_bound} we bound the $1$-norm of the the model deviation $h$ from above by using the assumption on the distance of spurious phases from the filtering interval [Eq.~\eqref{eq:phi1_buffer-fixed}].
Then, in step~\ref{app:normalization_bound} we bound the normalization constant that arises from cutting the Gaussian model distribution to a finite filtering interval, and its derivative with respect to $\phi_0$, using the existence of gaps between the edges of the filtering interval and the target phase [Eq.~\eqref{eq:phi0_buffer-fixed}].
Next, in  we derive an upper bound on the score of the model distribution (step~\ref{app:score_bound}) and a lower bound on the Fisher information (steps~\ref{app:fi_bound}, \ref{App:f_bound}).
This is finally summarized in step~\ref{app:final_bounds}, where we derive the explicit forms of the bounds appearing in Lemma~\ref{lem:gaussian-no-noise} and Lemma~\ref{lem:gaussian-gdn}.

Before we begin, let us define some notation to be used in the rest of the section.
The model distribution used to define the moment projection estimator is
\begin{align}
    Q(x|\phi) &= \frac{q(x|\phi)}{\int_{\mathcal{D}}\dd x \, q(x|\phi)}\\
    \label{eq:q_phi_gaussian_with_gdn}
    q(x|\phi) &= Fa_0g_\sigma(x-\phi) + (1-F)\frac{\mathcal{M}_\sigma}{2\pi},\\
    %u(x) &= F\sum_{j\neq0} a_j g_\sigma(x-\phi_0),\\
    g_\sigma(x) &= \frac{1}{\sqrt{2\pi}\sigma}e^{-\frac{x^2}{2\sigma^2}}. \\
    \mathcal{M}_\sigma &= \int_{-\pi}^{\pi}g_\sigma(x)\dd x = \erf\left(\frac{\pi}{\sqrt2}\sigma^{-1}\right).
\end{align}
Note that the case of no noise is included by taking $F = 1$.
The model $q(x|\phi)$ is defined so that $q(x|\phi) / \mathcal{M}_\sigma$ matches $p(x)$ in the case of $a_0 = 1$, and $\mathcal{N}/\mathcal{M}_\sigma$ is equal to $P_A$ in Eq.~\eqref{eq:p_accept}.

For brevity of notation, let $\mathcal{N}$ be the normalization of $Q(x|\phi_0)$
\begin{equation}
    \mathcal{N} = \int_\mathcal{D} q(x|\phi_0)\dd x ,
\end{equation}
and let $(.)'$ denote derivative with respect to $\phi$,
\begin{align}
    q'_{\phi_0}(x) &=  \left[ \partial_\phi q(x|\phi) \right]_{\phi = \phi_0},\\
    \mathcal{N}' &= \left[ \partial_\phi \int_\mathcal{D} q(x|\phi)\dd x \right]_{\phi = \phi_0} = \int_{\mathcal{D}}q'_{\phi_0}(x) \dd x.\\
\end{align}

\subsection{Step 1: Bound on $\lVert h \rVert$}
\label{app:h_bound}

In order to bound the $1$-norm of the model deviation, $h(x)=P(x)-Q(x|\phi_0)$, we calculate a bound on a related distribution, $\lVert u \rVert$, defined by
\begin{equation}
\label{eq:u_distibution}
        u(x) = F\sum_{j\neq0} a_j g_\sigma(x-\phi_j)\\ 
\end{equation}
so the true probability distribution is
\begin{equation}
    P(x) = \frac{q(x|\phi_0)+u(x)}{\mathcal{N}+\lVert u \rVert}.
\end{equation}
Using the triangle inequality (noting that $u(x),q(x|\phi_0)>0$), we can bound $\lVert h \rVert $ as
\begin{align}
    \lVert h \rVert &= \int_{\mathcal{D}}\dd  x \ |P(x) - Q(x|\phi_0)|\\
    &= \int_{\mathcal{D}}\dd  x \ \left|\frac{q(x|\phi_0)+u(x)}{\mathcal{N}+\lVert u \rVert} -\frac{q(x|\phi_0)}{\mathcal{N}}\right|\\
    &= \frac{\int_{\mathcal{D}}\dd  x \ |\mathcal{N}u(x)-\lVert u \rVert q(x|\phi_0)|}{\mathcal{N}(\mathcal{N}+\lVert u \rVert)}\\
    &\leq \frac{\int_{\mathcal{D}}\dd  x \ (\mathcal{N}u(x)+\lVert u \rVert q(x|\phi_0))}{\mathcal{N}(\mathcal{N}+\lVert u \rVert)}\\
    &= \frac{2\mathcal{N}\lVert u \rVert }{\mathcal{N}(\mathcal{N}+\lVert u \rVert)}\\
    &\leq 2 \frac{\lVert u \rVert}{\mathcal{N}}.
    \label{eq:h-norm-u-norm-bound}
\end{align}

Now, we use the assumption on the distance of the spurious phases $\phi_{j\neq0}$ from the filtering region,
\begin{equation}
    \min_{j\neq0}\min_{x\in\mathcal{D}}|x-\phi_j|\geq d
\end{equation}
to bound
\begin{equation}
\label{eq:u-norm}
        \lVert u \rVert = F\sum_{j\neq0} a_j G(\phi_j). 
\end{equation}
We can bound
\begin{align}
    G_\sigma(\phi_j) &= \int_{\mathcal{D}}g_\sigma(x-\phi) \dd x \leq\int_d^{\infty}g_{\sigma}(y)dy\leq\frac{\sigma^2}{d}g_{\sigma}(d),
\end{align}
where the last inequality is obtained from standard bounds on Mill's ratio for a normal distribution.
Then, substituting this into Eq.~\eqref{eq:u-norm} yields
\begin{align}
    \lVert u \rVert 
    &\leq F(\sum_{j\neq0} a_j) \frac{\sigma^2}{d} g_\sigma(d)\\
    &= F(1-a_0)\frac{\sigma^2}{d} g_\sigma(d).
\end{align}
Combining the above with Eq.~\eqref{eq:h-norm-u-norm-bound} yields
\begin{equation}
\label{eq:h-norm-bound}
    \lVert h \rVert \leq 2\mathcal{N}^{-1}F(1-a_0)\frac{\sigma^2}{d} g_\sigma(d).
\end{equation}

\subsection{Step 2: Bounds on $\mathcal{N}$, $\mathcal{N}'$}
\label{app:normalization_bound}

In this step we bound the norm $\mathcal{N}$ and its derivative $\mathcal{N}'$ using the assumption that $\phi_0$ is well inside the filtering region $\mathcal{D} = [\phi_{guess}-|\mathcal{D}|/2, \phi_{guess}+|\mathcal{D}|/2]$:
\begin{equation}
\label{eq:phi0-phiguess-distance}
    \left|\phi_{guess}-\phi_0 \right| \leq \frac{|\mathcal{D}|}{3}.
\end{equation}

Using the definition of the Gaussian model $q(x|\phi)$ in Eq.~\eqref{eq:q_phi_gaussian_with_gdn}, we can evaluate
\begin{align}
    \mathcal{N} &= Fa_0G_\sigma(\phi) + (1-F)\frac{\mathcal{M}_\sigma |\mathcal{D}|}{2\pi},\\
    %&\geq Fa_0 G_\sigma(\phi)\\
    \mathcal{N}' &= Fa_0 (g_\sigma(\phi_{guess}-\phi_0 - |\mathcal{D}|/2)-g_\sigma(\phi_{guess}-\phi_0 +|\mathcal{D}|/2)).\\
\end{align}
The normalization $\mathcal{N}$ is minimal when $|\phi_{guess} - \phi_0|$ is maximal, for $\phi_0 = \phi_{guess} + |\mathcal{D}|/3$, and
\begin{align}
    \mathcal{N}&\geq Fa_0G_\sigma(\phi)\\
    &\geq Fa_0G_\sigma(\phi_{guess}+|\mathcal{D}|/3)\\
    &=Fa_0 \frac{\erf(\frac{5}{6\sqrt{2}}\frac{|\mathcal{D}|}{\sigma})+\erf(\frac{1}{6\sqrt{2}}\frac{|\mathcal{D}|}{\sigma})}{2}.
\label{eq:N-lower-bound}
\end{align}
Conversely, the derivative of the normalization reaches a maximum when $|\phi_{guess} - \phi_0|$ is maximal, for $\phi_0 = \phi_{guess} + |\mathcal{D}|/3$ (to convince oneself one can check that $\mathcal{N}''<0$ for $\phi_0\in\mathcal{D}$).
Thus, we can bound
\begin{align}
    |\mathcal{N}'|&=Fa_0 G'(\phi_0)\\
    &\leq Fa_0 (g_\sigma\left(\frac{|\mathcal{D}|}{6}\right)-g_\sigma\left(\frac{5|\mathcal{D}|}{6}\right))\\
    &\leq Fa_0 g_\sigma\left(\frac{|\mathcal{D}|}{6}\right)\\
    &= \frac{Fa_0}{\sigma} g_1\left(\frac{|\mathcal{D}|}{6\sigma}\right).
    \label{eq:N'-upper-bound}
\end{align}

We can also bound the norm from above as
\begin{equation}
\label{eq:N-upper-bound}
    \mathcal{N} \leq Fa_0 + (1-F)\frac{1}{2\pi}
\end{equation}
since $G_\sigma(\phi)$ and $\mathcal{M}_\sigma$ are both bounded by 1 for all values of $\phi$ and $\sigma$.

\subsection{Step 3: Bound on $\max_x \left[ \partial_\phi \log Q(x|\phi)\right]_{\phi = \phi_0}$ }
\label{app:score_bound}

In this step we bound $\left[ \partial_\phi \log Q(x|\phi)\right]_{\phi = \phi_0}$ that appears in the numerator in the first order of the bias bound in Eq.~\eqref{eq:mproj-bias}.

First, using notation introduced at the start of the section, we have
\begin{equation}
    \left[ \partial_\phi \log Q(x|\phi)\right]_{\phi = \phi_0} = \frac{q'_{\phi_0}(x)}{q(x|\phi_0)}-\frac{\mathcal{N}'}{\mathcal{N}}.
\end{equation}
Hence we can bound the numerator in Eq.~\eqref{eq:mproj-bias} as
\begin{equation}
\label{eq:logderivative_bound_with_normalisation}
    \max_x |\partial_\phi \log Q(x|\phi)| \leq  \max_x \left|\frac{q'_{\phi_0}(x)}{q(x|\phi_0)}\right| +  \frac{|\mathcal{N}'|}{\mathcal{N}}.
\end{equation}
We can bound the second term in the above expression using Eq.~\eqref{eq:N'-upper-bound} and Eq.~\eqref{eq:N-lower-bound} in Step \ref{app:normalization_bound} as
\begin{equation}
    \frac{|\mathcal{N}'|}{\mathcal{N}} \leq \frac{1}{\sigma} \frac{g_1\left(\frac{|\mathcal{D}|}{6\sigma}\right)}{\frac{1}{2}\left(\erf(\frac{5}{6\sqrt{2}}\frac{|\mathcal{D}|}{\sigma})+\erf(\frac{1}{6\sqrt{2}}\frac{|\mathcal{D}|}{\sigma})\right)}.
\end{equation}
As this is monotonically decreasing, for $\sigma \leq |\mathcal{D}|/6$ we can bound this as
\begin{equation}
    \frac{|\mathcal{N}'|}{\mathcal{N}} \leq \frac{|\mathcal{D}|}{\sigma^2} \frac{1}{6}\frac{g_1\left(1\right)}{\frac{1}{2}\left(\erf(\frac{5}{\sqrt{2}})+\erf(\frac{1}{\sqrt{2}})\right)} \lesssim 0.05\frac{|\mathcal{D}|}{\sigma^2}.
\end{equation}

To bound the first term,
inserting the definition of the Gaussian $g_{\sigma}(\phi)$ into $q$ yields
\begin{equation}
    \left|\frac{q'_{\phi_0}(x)}{q(x|\phi_0)}\right| = \frac{|x-\phi_0|}{\sigma^2}
    \frac{g_\sigma(x-\phi_0)}{{g_\sigma(x-\phi_0)+\frac{\mathcal{M}_\sigma}{2\pi}\frac{1-F}{Fa_0}}} \leq \frac{|x-\phi_0|}{\sigma^2}.
\end{equation}
Using assumption \eqref{eq:phi0-phiguess-distance}, we can bound this as
\begin{align}
    \max_x \left|\frac{q'_{\phi_0}(x)}{q(x|\phi_0)}\right| \leq\frac{\max_x|x-\phi_0|}{\sigma^2} \leq \frac{5|\mathcal{D}|}{6\sigma^2},
\end{align}
leading to a final bound
\begin{equation}
\label{eq:dlogQ_bound}
    \left|\max_x \left[ \partial_\phi \log Q(x|\phi)\right]_{\phi = \phi_0}\right| \leq |\mathcal{D}| \sigma^{-2}.
\end{equation}

\subsection{Step 4: $\Omega(\sigma^{-2})$ bound on the Fisher Information $\mathcal{I}_0$}
\label{app:fi_bound}

Next, we can calculate the Fisher information of the filtered distribution, $\mathcal{I}_0$ (see Eq.~\eqref{eq:fisher-info}):
\begin{align}
    \mathcal{I}_0 &= \int_{\mathcal{D}}\dd  x \,\ Q(x|\phi_0) \left[\partial_\phi \log Q(x|\phi)\right]^2_{\phi = \phi_0}\\
    &= \int_{\mathcal{D}}\dd  x\ \frac{q(x|\phi_0)}{\mathcal{N}} \left(\frac{q'_{\phi_0}(x)}{q(x|\phi_0)}-\frac{\mathcal{N}'}{\mathcal{N}}\right)^2\\
    &= \int_{\mathcal{D}}\dd  x\left(\frac{q'_{\phi_0}(x)^2}{\mathcal{N}q(x|\phi_0)}
    -2\frac{\mathcal{N}'}{\mathcal{N}^2}q'_{\phi_0}(x)+\frac{\mathcal{N}'^2 }{\mathcal{N}^3}q(x|\phi_0)\right)\\
    &= \frac{1}{\mathcal{N}}\left(\int_{\mathcal{D}}\dd  x\frac{q'_{\phi_0}(x)^2}{q(x|\phi_0)}
    -2\frac{\mathcal{N}'^2}{\mathcal{N}}+\frac{\mathcal{N}'^2 }{\mathcal{N}}\right)\\
    &= \frac{1}{\mathcal{N}}\left(\int_{\mathcal{D}}\dd  x\frac{q'_{\phi_0}(x)^2}{q(x|\phi_0)}
    -\frac{\mathcal{N}'^2 }{\mathcal{N}}\right).
\end{align}

For the Gaussian model, we can bound the boundary term using Eq.~\eqref{eq:N'-upper-bound} and Eq.~\eqref{eq:N-lower-bound} in Step \ref{app:normalization_bound} as
\begin{equation}
    \frac{\mathcal{N}'^2 }{\mathcal{N}} \leq \frac{F a_0}{\sigma^2} \frac{g_1^2\left(\frac{|\mathcal{D}|}{6\sigma}\right)}{\frac{1}{2}\left(\erf(\frac{5}{6\sqrt{2}}\frac{|\mathcal{D}|}{\sigma})+\erf(\frac{1}{6\sqrt{2}}\frac{|\mathcal{D}|}{\sigma})\right)}
\end{equation}
and rewrite the integral in the first term as
\begin{align}
    \label{eq:fi_integral}
    \int_{\mathcal{D}}\dd  x\frac{q'_{\phi_0}(x)^2}{q(x|\phi_0)}
    &= \frac{F^2a_0^2}{\sigma^4}\int_{\phi_{guess}- |\mathcal{D}|/2}^{\phi_{guess}+ |\mathcal{D}|/2} \frac{g_\sigma^2(x-\phi_0)(x-\phi_0)^2 \dd x}{Fa_0 g_\sigma(x-\phi_0) + (1-F)\frac{\mathcal{M}_\sigma}{2\pi}}\\
    &= \frac{Fa_0}{\sigma^4}\int_{\phi_{guess}- |\mathcal{D}|/2}^{\phi_{guess}+ |\mathcal{D}|/2} \frac{g_\sigma(x-\phi_0)(x-\phi_0)^2 \dd x}{ 1 + \frac{1-F}{Fa_0}\frac{\mathcal{M}_\sigma}{2\pi}\frac{1}{g_\sigma(x-\phi_0)}}\\
    &= \frac{Fa_0}{\sigma^2}\int_{\frac{\phi_{guess}-\phi_0 - |\mathcal{D}|/2}{\sigma}}^{\frac{\phi_{guess}-\phi_0 + |\mathcal{D}|/2}{\sigma}} \frac{g_1(\xi)\xi^2 \dd\xi}{1 + \sigma \frac{\mathcal{M}_\sigma}{2\pi}\frac{1-F}{Fa_0}\frac{1}{g_1(\xi)}},
\end{align}

To bound $\mathcal{I}_0$, notice that assumption in Eq.~\eqref{eq:phi0-phiguess-distance} implies that either $[\phi_0 -|\mathcal{D}|/6, \phi_0 + |\mathcal{D}|/2] \subset \mathcal{D}$ or $[\phi_0 -|\mathcal{D}|/2, \phi_0 + |\mathcal{D}|/6] \subset \mathcal{D}$.
Since the integrand in Eq.~\eqref{eq:fi_integral} is always positive, we can bound the integral by an integral over a subset of $\mathcal{D}$,
\begin{equation} \label{eq:fi-subset-integral-bound}
    \int_{\mathcal{D}}\dd  x\frac{q'_{\phi_0}(x)^2}{q(x|\phi_0)} \geq \frac{Fa_0}{\sigma^2}\int_{-\frac{|\mathcal{D}|}{6\sigma}}^{\frac{|\mathcal{D}|}{2\sigma}} \frac{g_1(\xi)\xi^2 d\xi}{1 + \sigma \frac{\mathcal{M}_\sigma}{2\pi}\frac{1-F}{Fa_0}\frac{1}{g_1(\xi)}}.
\end{equation}

This yields a lower bound on the Fisher information:
\begin{align}
\label{eq:FI-lower-bound}
    \mathcal{N}\mathcal{I}_0 &\geq
    \frac{F a_0}{\sigma^2}\left[\int_{-\frac{|\mathcal{D}|}{6\sigma}}^{\frac{|\mathcal{D}|}{2\sigma}} \frac{g_1(\xi)\xi^2 d\xi}{1 + \sigma \frac{\mathcal{M}_\sigma}{2\pi}\frac{1-F}{Fa_0}\frac{1}{g_1(\xi)}}-\frac{g_1^2\left(\frac{|\mathcal{D}|}{6\sigma}\right)}{\frac{1}{2}\left(\erf(\frac{5}{6\sqrt{2}}\frac{|\mathcal{D}|}{\sigma})+\erf(\frac{1}{6\sqrt{2}}\frac{|\mathcal{D}|}{\sigma})\right)}\right].
\end{align}

\subsection{Step 5: Upper bound on $(\mathcal{I}_0)^{-1}$}
\label{App:f_bound}

In order to use the lower bound in Eq.~\eqref{eq:FI-lower-bound} to upper bound $(\mathcal{I}_0)^{-1}$, we need to make sure that the RHS is positive.

For $F=1$, the RHS is positive for $\sigma < |D|/2$, and the term inside the brackets is monotonically increasing to $\int_{-\infty}^{\infty}g_1(\xi)\xi^2 d\xi = 1$. For any $\sigma < |D|/6$ it is $\gtrsim 0.5$, so we get
\begin{equation}
\label{eq:inverse-fi-bound-no-noise}
    \frac{1}{\mathcal{N}\mathcal{I}_0} \leq 2\frac{\sigma^2}{a_0} \quad \text{for } \sigma \leq |\mathcal{D}|/3, F=1.
\end{equation}

For $F<1$: assume $\sigma < |D|/6$, then we can again bound the integral by the integral of the subset $[-1, 1] \subset [-\frac{|\mathcal{D}|}{6\sigma},\frac{|\mathcal{D}|}{2\sigma}]$:
\begin{equation}
    \int_{-\frac{|\mathcal{D}|}{6\sigma}}^{\frac{|\mathcal{D}|}{2\sigma}} \frac{g_1(\xi)\xi^2 d\xi}{1 + \sigma \frac{\mathcal{M}_\sigma}{2\pi}\frac{1-F}{Fa_0}\frac{1}{g_1(\xi)}} 
    \geq \int_{-1}^{1} \frac{g_1(\xi)\xi^2 d\xi}{1 + \sigma \frac{\mathcal{M}_\sigma}{2\pi}\frac{1-F}{Fa_0}\frac{1}{g_1(\xi)}}.
\end{equation}
Within the subset, $|\xi| < 1$ and so $g_1(\xi) \geq g_1(1) = (2\pi e)^{-1/2}$, so we can further bound
\begin{align}
    \int_{-1}^{1} \frac{g_1(\xi)\xi^2 d\xi}{1 + \sigma \frac{\mathcal{M}_\sigma}{2\pi}\frac{1-F}{Fa_0}\frac{1}{g_1(\xi)}}
    &\geq \int_{-1}^{1} \frac{g_1(\xi)\xi^2 d\xi}{1 + \sigma \frac{\mathcal{M}_\sigma}{2\pi}\frac{1-F}{Fa_0}\frac{1}{g_1(1)}}\\
    &=  \frac{\int_{-1}^{1}g_1(\xi)\xi^2 d\xi}{1 + \sigma\frac{\mathcal{M}_\sigma}{2\pi}\frac{1-F}{Fa_0}\frac{1}{g_1(1)}}
\end{align}
We can also bound $\sigma \mathcal{M}_\sigma$ in the denominator as $\leq |\mathcal{D}|/6$, and finally we get
\begin{equation}
    \int_{\mathcal{D}}\dd  x\frac{q'_{\phi_0}(x)^2}{q(x|\phi_0)} \geq \frac{Fa_0}{\sigma^2}\frac{\int_{-1}^{1}g_1(\xi)\xi^2 d\xi}{1 + \frac{|\mathcal{D}|}{2\pi}\frac{1-F}{Fa_0}\frac{1}{6g_1(1)}}.
\end{equation}
Inserting $\sigma < \mathcal{D}/6$ into the denominator of the boundary term yields
\begin{equation}
    \mathcal{N}\mathcal{I}_0 \geq \frac{Fa_0}{\sigma^2}\left(\frac{\int_{-1}^{1}g_1(\xi)\xi^2 d\xi}{1 + \frac{|\mathcal{D}|}{2\pi}\frac{1-F}{Fa_0}\frac{1}{6g_1(1)}} -\frac{g_1^2\left(\frac{|\mathcal{D}|}{6\sigma}\right)}{\frac{1}{2}\left(\erf(\frac{5}{\sqrt{2}})+\erf(\frac{1}{\sqrt{2}})\right)}
    \right).
\end{equation}
To ensure the RHS of the inequality above is positive, we take $\sigma$ such that the negative term is at most half of the positive term, 
\begin{equation}
    \sigma \leq \frac{|\mathcal{D}|}{6} \log^{-1/2}\left(\frac{1 + \frac{|\mathcal{D}|}{2\pi}\frac{1-F}{Fa_0}\frac{1}{6g_1(1)}}{{\frac{\pi}{2}\left(\erf(\frac{5}{\sqrt{2}})+\erf(\frac{1}{\sqrt{2}})\right)}\int_{-1}^{1}g_1(\xi)\xi^2 d\xi} \right).
\end{equation}
We get the final bound
\begin{equation}
\label{eq:inverse-fi-bound-with-noise}
\frac{1}{\mathcal{N}\mathcal{I}_0} \leq C_1 \frac{\sigma^2}{Fa_0}\left(1 + C_2\frac{|\mathcal{D}|}{2\pi}\frac{1-F}{Fa_0}\right)
\quad \text{for } \sigma \leq \frac{|\mathcal{D}|}{6} \frac{1}{\sqrt{C_3+\log\left(1 + C_2\frac{|\mathcal{D}|}{2\pi}\frac{1-F}{Fa_0}\right)}}
\end{equation}
with constants
\begin{align}
    C_1 &= 10.1 \gtrsim \frac{1}{\frac{1}{2}\int_{-1}^{1}g_1(\xi)\xi^2 d\xi},\\
    C_2 &= 0.7 \gtrsim \frac{1}{6g_1(1)},\\
    C_3 &= 1.2 \gtrsim -\log\left({{\frac{\pi}{2}\left(\erf(\frac{5}{\sqrt{2}})+\erf(\frac{1}{\sqrt{2}})\right)}\int_{-1}^{1}g_1(\xi)\xi^2 d\xi} \right).
\end{align}

\subsection{Step 6: Final bound on the bias and variance}
\label{app:final_bounds}

Now we are ready to combine all the steps above to get the bound on the bias.
The first order in Eq.~\eqref{eq:mproj-bias} can first be bound using the bound on $\lVert h \rVert$ in Eq.~\eqref{eq:h-norm-bound} from step~\ref{app:h_bound} and the bound on $\left| \max_x [\partial_\phi \log Q(x|\phi)]_{\phi = \phi_0}\right|$ in Eq.~\eqref{eq:dlogQ_bound} in step~\ref{app:score_bound}
    \begin{align}
        \lVert h \rVert \frac{\left| \max_x [\partial_\phi \log Q(x|\phi)]_{\phi = \phi_0}\right|}{\mathcal{I}_0} &\leq 2F(1-a_0) \frac{|\mathcal{D}|}{d}\sigma^2  g_\sigma(d) \frac{1}{\mathcal{N}\mathcal{I}_0\sigma^2}.
    \end{align}
Then we can bound $(\mathcal{N}\mathcal{I}_0)^{-1}$ using Eq.~\eqref{eq:inverse-fi-bound-no-noise} and Eq.~\eqref{eq:inverse-fi-bound-with-noise}
leading to the bounds
\begin{align}
    \lVert h \rVert \frac{\left| \max_x [\partial_\phi \log Q(x|\phi)]_{\phi = \phi_0}\right|}{\mathcal{I}_0} &\leq \frac{1-a_0}{a_0} \frac{|\mathcal{D}|}{d}\sigma^2g_\sigma(d) \times 4 &\quad F=1, \\
    \lVert h \rVert \frac{\left| \max_x [\partial_\phi \log Q(x|\phi)]_{\phi = \phi_0}\right|}{\mathcal{I}_0} &\leq \frac{1-a_0}{a_0} \frac{|\mathcal{D}|}{d}\sigma^2g_\sigma(d) \times 2C_1\left(1 + C_2\frac{|\mathcal{D}|}{2\pi}\frac{1-F}{Fa_0}\right) &\quad F<1. 
\end{align}

To get the variance bound in Lemmas~\ref{lem:gaussian-no-noise} and \ref{lem:gaussian-gdn}, we combine Eq.~\eqref{eq:inverse-fi-bound-no-noise} and Eq.~\eqref{eq:inverse-fi-bound-with-noise} with Eq.~\eqref{eq:N-upper-bound} to get
\begin{align}
    \frac{1}{\mathcal{I}_0} &\leq 2\sigma^2 &\quad F=1,\\
    \frac{1}{\mathcal{I}_0} &\leq C_1 {\sigma^2}{}\left(1 + C_2\frac{|\mathcal{D}|}{2\pi}\frac{1-F}{Fa_0}\right)\left(1 + \frac{1-F}{Fa_0}\frac{1}{2\pi}\right)&\quad F<1.
\end{align}
To bound the number of shots needed $M' = M/P_A$ in Theorem~\ref{thm:fmpe-cost-gdn}, it is useful to also have
\begin{equation}
\label{eq:fisher-info-before-filtering}
    \frac{1}{P_A\mathcal{I}_0} =\frac{\mathcal{M}_\sigma}{\mathcal{N}\mathcal{I}_0}\leq \frac{1}{\mathcal{N}\mathcal{I}_0} \leq C_1 \frac{\sigma^2}{Fa_0}\left(1 + C_2\frac{|\mathcal{D}|}{2\pi}\frac{1-F}{Fa_0}\right),
\end{equation}
as the variance is
\begin{equation}
\label{eq:var-shots}
    \Var[\tilde\phi] \approx \frac{1}{M\mathcal{I}_0} = \frac{1}{M'P_A\mathcal{I}_0}.
\end{equation}

\section{Proof of Lemma~\ref{lem:gaussian-unbiased}}

First, note that since we assume $\sum_{a=0}^{r-1} \alpha_a P_a(x) = \sum_j a_j g_\sigma(x-\phi_j)$, the deviation $h(x)$ in Eq.~\eqref{eq:h-nme} is the same as in Lemma~\ref{lem:gaussian-no-noise}, and we can use the bound derived in step~\ref{app:h_bound} in App.~\ref{app:proof-of-gaussian-cors},
\begin{equation}
    \|h\| \leq 2G_\sigma(\phi_0)^{-1}\frac{(1-a_0)}{a_0}\sigma^2 d^{-1} g_\sigma(d).
\end{equation}.

Secondly, we need to bound $\mathcal{I}_c$ in Eq.~\eqref{eq:fisher-info-nme} from below. 
For our regularised model $Q_c(x|\phi) = \frac{g_\sigma(x-\phi)}{G_\sigma(\phi)} +c$ we have
\begin{align}
    \mathcal{I}_c &= \frac{1}{G_\sigma(\phi_0)} \int_{\mathcal{D}}\frac{\left([\partial_\phi g_\sigma(x-\phi)]_{\phi = \phi_0} - g_\sigma(x-\phi_0)\frac{G'_\sigma(\phi_0)}{G_\sigma(\phi_0)}\right)^2 \dd x}{g_\sigma(x-\phi_0) + c G_\sigma(\phi_0)}\\
    &= \frac{1}{G_\sigma(\phi_0)} \int_{\mathcal{D}}\frac{ g^2_\sigma(x-\phi_0)\left(\frac{x-\phi_0}{\sigma^2} -\frac{G'_\sigma(\phi_0)}{G_\sigma(\phi_0)}\right)^2 \dd x}{g_\sigma(x-\phi_0) + c G_\sigma(\phi_0)},
\end{align}
which we can bound by neglecting a positive term and using $G_\sigma(\phi_0)\leq 1$ as
\begin{align}
    \mathcal{I}_c &\geq \frac{1}{G_\sigma(\phi_0)} \int_{\mathcal{D}}\frac{ g^2_\sigma(x-\phi_0)\left(\frac{x-\phi_0}{\sigma^2} -\frac{G'_\sigma(\phi_0)}{G_\sigma(\phi_0)}\right)^2 \dd x}{g_\sigma(x-\phi_0) + c}\\
    &\geq \frac{1}{G_\sigma(\phi_0)}\frac{1}{\sigma^2} \left(
    \frac{1}{\sigma^2}\int_{\mathcal{D}}\frac{g^2_\sigma(x-\phi_0)(x-\phi_0)^2 \dd x}{g_\sigma(x-\phi_0) + c}
    - 
    2\left|\frac{G'_\sigma(\phi_0)}{G_\sigma(\phi_0)} \int_{\mathcal{D}}\frac{g^2_\sigma(x-\phi_0)(x-\phi_0) \dd x}{g_\sigma(x-\phi_0) + c}\right|
    \right).
    \label{eq:regularised_fi_step1}
\end{align}
We can bound the integral in the negative boundary term using $c\geq 0$ as
\begin{align}
     \left|\int_{\mathcal{D}}\frac{g^2_\sigma(x-\phi_0)(x-\phi_0) \dd x}{g_\sigma(x-\phi_0) + c} \right|&\leq \left|\int_{\mathcal{D}}\frac{g^2_\sigma(x-\phi_0)(x-\phi_0) \dd x}{g_\sigma(x-\phi_0)} \right|\\
     &= \left|\int_{\mathcal{D}}g_\sigma(x-\phi_0)(x-\phi_0) \dd x \right|\\
     &=\sigma^2 |G'_\sigma(\phi_0)|.
\end{align}
Then using the same arguments as in step~\ref{app:normalization_bound} in App \ref{app:proof-of-gaussian-cors} [Eq.~\eqref{eq:N'-upper-bound}], we can further bound the negative term as
\begin{equation}
    2\sigma^2\left|\frac{G'_\sigma(\phi_0)^2}{G_\sigma(\phi_0)}\right| \leq 2g^2_1\left(\frac{|\mathcal{D}|}{6\sigma}\right).
\end{equation}
The dominant first term in Eq.~\eqref{eq:regularised_fi_step1} has the same form as in the unregularised case [Eq.~\eqref{eq:fi_integral}],
\begin{align}
     \frac{1}{\sigma^2}\int_{\mathcal{D}}\frac{g^2_\sigma(x-\phi_0)(x-\phi_0)^2 \dd x}{g_\sigma(x-\phi_0) + c} &= \int_{\frac{\phi_{guess}-\phi_0 - |\mathcal{D}|/2}{\sigma}}^{\frac{\phi_{guess}-\phi_0 + |\mathcal{D}|/2}{\sigma}} \frac{g_1(\xi)\xi^2 d\xi}{1 + c\sigma\frac{1}{g_1(\xi)}}\\
     &\geq \int_{-\frac{|\mathcal{D}|}{6\sigma}}^{\frac{|\mathcal{D}|}{2\sigma}} \frac{g_1(\xi)\xi^2 d\xi}{1 + c\sigma\frac{1}{g_1(\xi)}}.
\end{align}
Using the same arguments as in step~\ref{App:f_bound} in App.~\ref{app:proof-of-gaussian-cors}, we can bound it for $\sigma \leq |\mathcal{D}|/6$ as
\begin{align}
     \int_{-\frac{|\mathcal{D}|}{6\sigma}}^{\frac{|\mathcal{D}|}{2\sigma}} \frac{g_1(\xi)\xi^2 d\xi}{1 + c\sigma\frac{1}{g_1(\xi)}} \geq \frac{\int_{-1}^{1} g_1(\xi)\xi^2 d\xi}{1 + c \frac{|\mathcal{D}|}{6}\frac{1}{g_1(1)}}.
\end{align}
Combining the above steps, we get
\begin{equation}
    \mathcal{I}_c \geq \frac{1}{G_\sigma(\phi_0)}\frac{1}{\sigma^2} \left(\frac{\int_{-1}^{1} g_1(\xi)\xi^2 d\xi}{1 + c \frac{|\mathcal{D}|}{6}\frac{1}{g_1(1)}} - 2g^2_1\left(\frac{|\mathcal{D}|}{6\sigma}\right)\right).
\end{equation}
Again we take $\sigma$ small enough that the negative term is at most half of the positive term,
\begin{equation}
    \sigma \leq \frac{|\mathcal{D}|}{6}\log^{-1/2}\left(\frac{1 + c \frac{|\mathcal{D}|}{6}\frac{1}{g_1(1)}}{4\pi\int_{-1}^{1} g_1(\xi)\xi^2 d\xi}\right)
\end{equation}
to get a bound
\begin{equation}
    \frac{1}{\mathcal{I}_c} \leq {G_\sigma(\phi_0)}{\sigma^2}\times C_1 ({1 + C_2 c{|\mathcal{D}|}})
\end{equation}
where the constants $C_1$, $C_2$ are the same as in step~\ref{App:f_bound} in App.~\ref{app:proof-of-gaussian-cors}.

Thirdly, we need to upper bound $\max_{x\in\mathcal{D}} \left[ \partial_\phi \log Q_c(x|\phi)\right]_{\phi = \phi_0}$.
Importantily, as it appears in the variance in Lemma~\ref{lem:nme-expansion} we need a stronger bound than in step~\ref{app:score_bound} in App.~\ref{app:proof-of-gaussian-cors} to get an $O(\sigma^2)$ bound on the variance, which we get by using regularisation.
We have
\begin{align}
    \max_{x\in\mathcal{D}} \left| \partial_\phi \log Q_c(x|\phi)\right|_{\phi = \phi_0} &=\max_{x\in\mathcal{D}} \left| \frac{g_\sigma(x-\phi_0)\frac{x-\phi_0}{\sigma^2} -g_\sigma(x-\phi)\frac{G'_\sigma(\phi_0)}{G_\sigma(\phi_0)}}{g_\sigma(x-\phi_0)+c G_\sigma(\phi_0)}\right|\\
    &\leq \max_{x\in\mathcal{D}} \frac{g_\sigma(x-\phi_0)\frac{|x-\phi_0|}{\sigma^2}}{g_\sigma(x-\phi_0)+c G_\sigma(\phi_0)}
    +
    \frac{|G'_\sigma(\phi_0)|}{G_\sigma(\phi_0)}\max_{x\in\mathcal{D}} \frac{g_\sigma(x-\phi)}{g_\sigma(x-\phi_0)+c G_\sigma(\phi_0)}.
\end{align}
The second term is negligible:
\begin{equation}
    \frac{|G'_\sigma(\phi_0)|}{G_\sigma(\phi_0)}\max_{x\in\mathcal{D}} \frac{g_\sigma(x-\phi)}{g_\sigma(x-\phi_0)+c G_\sigma(\phi_0)} \leq \frac{|G'_\sigma(\phi_0)|}{G_\sigma(\phi_0)} \leq \frac{\frac{1}{\sigma}g_1\left(\frac{|\mathcal{D}|}{6\sigma}\right)}{\frac{1}{2}\left(\erf(\frac{5}{6\sqrt{2}}\frac{|\mathcal{D}|}{\sigma})+\erf(\frac{1}{6\sqrt{2}}\frac{|\mathcal{D}|}{\sigma})\right)}.
\end{equation}
We write the first term as
\begin{align}
    \frac{1}{\sigma}\max_{x\in\mathcal{D}} \frac{g_\sigma(x-\phi_0)\frac{|x-\phi_0|}{\sigma}}{g_\sigma(x-\phi_0)+c G_\sigma(\phi_0)} &\leq
    % \frac{1}{\sigma}\max_{x\in\RR }\frac{g_\sigma(x-\phi_0)\frac{|x-\phi_0|}{\sigma}}{g_\sigma(x-\phi_0)+c G_\sigma(\phi_0)}\\
    % &= \frac{1}{\sigma}\max_{\xi\geq0 }\frac{g_\sigma(\xi \sigma)\xi}{g_\sigma(\xi\sigma)+c G_\sigma(\phi_0)}\\
    % &=
    % \frac{1}{\sigma}\max_{\xi\geq0 }\frac{\xi}{1+c G_\sigma(\phi_0)\sqrt{2\pi}\sigma e^{\xi^2/2}}\\
    % &=
    \frac{1}{\sigma}\max_{\xi\geq0 }f(\xi)\\
    f(\xi)&:=\frac{\xi}{1+A e^{\xi^2/2}}\\
    A&:= 2c\sigma \lesssim c\sigma \sqrt{2\pi} \frac{\erf(\frac{5}{\sqrt{2}})+\erf(\frac{1}{\sqrt{2}})}{2}\leq c\sigma \sqrt{2\pi} G_\sigma(\phi_0).
\end{align}
Let $\xi_* = \sqrt{2\log(1/A)}$ so that $Ae^{\xi_*^2/2}=1$.
For $\xi \leq \xi_*$, we have $f(\xi) \leq \xi \leq \xi_*$.
For $\xi > \xi_*$, we use $f(\xi) \leq A^{-1}\xi e^{-\xi^2/2}$.
For $\sigma$ small enough that $A \leq e^{-1/2}$ ($\sigma \leq 0.3c^{-1} \lesssim (2e^{1/2}c)^{-1}$), $\xi_* \geq 1$, and $A^{-1}\xi e^{-\xi^2/2} \leq A^{-1}\xi_* e^{-\xi_*^2/2} = \xi_*$.
Therefore $f(\xi) \leq \xi^*$, and we can bound the score as
\begin{align}
    \max_{x\in\mathcal{D}} \left| \partial_\phi \log Q_c(x|\phi)\right|_{\phi = \phi_0} &\leq \frac{1}{\sigma}\left(\sqrt{-2\log(2c\sigma)}+\frac{g_1\left(\frac{|\mathcal{D}|}{6\sigma}\right)}{\frac{1}{2}\left(\erf(\frac{5}{6\sqrt{2}}\frac{|\mathcal{D}|}{\sigma})+\erf(\frac{1}{6\sqrt{2}}\frac{|\mathcal{D}|}{\sigma})\right)}\right).
\end{align}
We take $\sigma \leq |\mathcal{D}|/6$ and $\sigma c \leq 0.3$, small enough that this is bounded by
\begin{align}
    \max_{x\in\mathcal{D}} \left| \partial_\phi \log Q_c(x|\phi)\right|_{\phi = \phi_0} &\leq \frac{1}{\sigma}\sqrt{-\log(2c\sigma)}.
\end{align}

Finally, combining the steps above we get
\begin{equation}
    \frac{\max_{x\in\mathcal{D}} \left| \partial_\phi \log Q_c(x|\phi)\right|_{\phi = \phi_0}}{\mathcal{I}_c} \leq \sigma G_\sigma(\phi_0) \times C_1 ({1 + C_2 c{|\mathcal{D}|}})\sqrt{\log(\frac{1}{2c\sigma})}G_\sigma(\phi_0)
\end{equation}
leading to
\begin{align}
    b &\leq2\frac{(1-a_0)}{a_0}\sigma^3 d^{-1} g_\sigma(d)\times C_1(1 + C_2 |\mathcal{D}| c)\sqrt{\log(\frac{1}{2c\sigma})} + O(\sigma^4 g_\sigma(d)^2),\\
    \epsilon^2 &\leq \|\alpha\|_1^2 \sigma^2 \times C_1^2(1 + C_2 |\mathcal{D}| c)^2\log(\frac{1}{2c\sigma})G^2_\sigma(\phi_0) + O(\sigma^2 g_\sigma(d)).
\end{align}
Finally, to simplify the bias bound we use the fact that $x\sqrt{\log(x^{-1})} \leq (2e)^{-1/2}$ and therefore 
\begin{equation}
    2C_1\sigma\sqrt{\log(\frac{1}{2c\sigma})}\leq 2C_1\frac{1}{2c}\frac{1}{\sqrt{2e}} \lesssim \frac{4.4}{c}.
\end{equation}

\end{document}